\theoremstyle{plain}
\newtheorem{theorem}{Theorem}[section]
\newtheorem{lemma}[theorem]{Lemma}
\theoremstyle{remark}
\newtheorem{remark}[theorem]{Remark}
\numberwithin{equation}{section}
\newcommand{\du}{\mathrm{d}}
\newcommand{\iu}{\mathrm{i}}
\DeclareMathOperator{\const}{const}
\DeclareMathOperator{\im}{Im}
\DeclareMathOperator{\ind}{ind}
\DeclareMathOperator{\Res}{Res}
\title[Schr\"{o}dinger operators with distributional potentials]{Schr\"{o}dinger operators with distributional potentials and boundary conditions dependent on the eigenvalue parameter}
\author{Namig J. Guliyev}
\address{Institute of Mathematics and Mechanics, Azerbaijan National Academy of Sciences, 9 B.~Vahabzadeh str., AZ1141, Baku, Azerbaijan.}
\email{njguliyev@gmail.com}
\subjclass[2010]{34A25, 34A55, 34B07, 34B24, 34C10, 34L20, 34L40, 37K35, 47A75, 47E05}
\keywords{one-dimensional Schr\"{o}dinger equation, distributional potential, Sturm--Liouville operator, singular potential, boundary conditions dependent on the eigenvalue parameter, asymptotics, oscillation, inverse problems, Darboux transformation}
\begin{document}
\maketitle
\begin{abstract}
We study various direct and inverse spectral problems for the one-dimensional Schr\"{o}dinger equation with distributional potential and boundary conditions containing the eigenvalue parameter.
\end{abstract}

\tableofcontents

\section{Introduction} \label{sec:introduction}

At the end of the last millennium, Savchuk and Shkalikov~\cite{SS99} initiated the study of boundary value problems associated with differential equations of the form
\begin{equation} \label{eq:SL}
  - \left( y^{[1]}_s \right)'(x) - s(x) y^{[1]}_s(x) - s^2(x) y(x) = \lambda y(x)
\end{equation}
where $s \in \mathscr{L}_2(0, \pi)$ and $y^{[1]}_s(x) := y'(x) - s(x) y(x)$ denotes the \emph{quasi-derivative} of $y$ with respect to $s$ (the subscript is usually omitted from the notation, but we keep it because in this paper we will consider several potentials simultaneously). This equation formally corresponds to the one-dimensional Schr\"{o}dinger equation with the distributional potential $s' \in \mathscr{W}_2^{-1}(0, \pi)$. Such potentials, especially those describing the so-called point interactions, play an important role in quantum mechanics, solid state physics, atomic and nuclear physics, and electromagnetism \cite{AGHH05}, \cite{AK00}, \cite{KM13}. Direct and inverse spectral problems for boundary value problems generated by the equation~(\ref{eq:SL}) were studied by Savchuk and Shkalikov \cite{S01}, \cite{SS03}, \cite{SS05} and Hryniv and Mykytyuk \cite{HM03}, \cite{HM04a}. More general second-order differential expressions were later considered in \cite{EGNT13a}, \cite{EGNT13b}, \cite{GM10}, \cite{M14}.

In this paper we study various direct and inverse spectral problems for boundary value problems generated by the equation~(\ref{eq:SL}) and the boundary conditions
\begin{equation} \label{eq:boundary}
  \frac{y^{[1]}_s(0)}{y(0)} = -f(\lambda), \qquad \frac{y^{[1]}_s(\pi)}{y(\pi)} = F(\lambda),
\end{equation}
where
\begin{equation} \label{eq:f_F}
  f(\lambda) = h_0 \lambda + h + \sum_{k=1}^d \frac{\delta_k}{h_k - \lambda}, \qquad F(\lambda) = H_0 \lambda + H + \sum_{k=1}^D \frac{\Delta_k}{H_k - \lambda}
\end{equation}
are rational Herglotz--Nevanlinna functions with real coefficients, i.e., $h_0, H_0 \ge 0$, $h, H \in \mathbb{R}$, $\delta_k, \Delta_k > 0$, $h_1 < \ldots < h_d$, $H_1 < \ldots < H_D$. We also include the case when the first (respectively, the second) boundary condition is Dirichlet by writing $f = \infty$ (respectively, $F = \infty$). Similar problems for summable potentials were considered in the author's recent papers \cite{G17}, \cite{G18}, which we closely follow here. However, it is worth emphasizing that, of the results proved in Sections~\ref{sec:direct} and \ref{sec:inverse}, only the oscillation theorem (see Subsection~\ref{ss:oscillation} below) genuinely generalizes the corresponding result of \cite{G17}. The remaining results from \cite{G17}, \cite{G18} depend crucially on the second-order terms (i.e., the terms of order $1/n$) in the asymptotic formulas for the square roots of eigenvalues, and thus do not follow from the results of the current paper.

Eigenvalue problems with boundary conditions dependent on the eigenvalue parameter arise naturally in a variety of physical problems, including heat conduction, diffusion, vibration and electric circuit problems (see \cite{F77}, \cite{F80} and the references therein). Direct and inverse spectral problems of this kind have been studied by many authors (see, e.g., \cite{AOK09}, \cite{BBW02a}, \cite{BBW02b}, \cite{FY10}, \cite{G05}, \cite{IN17}, \cite{K04}, \cite{MC14} for a small selection). In particular, we mention~\cite{ABHM07} in which a spectral problem describing oscillating systems consisting of a continuous part coupled with a discrete part with a finite number of degrees of freedom is studied, and it is shown that this problem is equivalent to a boundary value problem generated by the equation~(\ref{eq:SL}) together with a constant boundary condition at one endpoint and a boundary condition of the form~(\ref{eq:boundary}), (\ref{eq:f_F}) at the other endpoint.

Unlike the case of summable potentials, a little extra care is needed when dealing with inverse problems for distributional potentials. It is easy to see that by adding a constant to $s$ and $f$ and by subtracting the same constant from $F$, we obtain two problems of the form (\ref{eq:SL})-(\ref{eq:boundary}) with the same eigenvalues and eigenfunctions. Therefore some restriction on the coefficients $s$, $f$ and $F$ is necessary. One possible way to tackle this problem is, for example, in the case of constant boundary conditions, to assume that one of (non-Dirichlet) boundary conditions is Neumann, as is done in \cite{HM03}. However for the purposes of this paper, it is more convenient to impose a restriction on the coefficient $s$, by assuming $\int_0^{\pi} s(x) \,\du x = 0$.

The paper is organized as follows. In Section~\ref{sec:preliminaries} we introduce the necessary notation and prove some preliminary lemmas. Section~\ref{sec:transformations} is devoted to transformations between rational Herglotz--Nevanlinna functions and between boundary value problems with distributional potentials having such functions in their boundary conditions. In Subsection~\ref{ss:nevanlinna} we define a transformation between rational Herglotz--Nevanlinna functions and study its properties. In the subsequent three subsections we define direct and inverse transformations between boundary value problems of the form~(\ref{eq:SL})-(\ref{eq:boundary}), study properties of the spectral data under these transformations, and show that these two transformations are, in a sense, inverses of each other. Sections~\ref{sec:direct} and \ref{sec:inverse} are devoted to the solution of various direct and inverse spectral problems. In Subsection~\ref{ss:asymptotics} we obtain asymptotic formulas for the eigenvalues and the norming constants (see Subsection~\ref{ss:hilbert} for the definition) of the problem~(\ref{eq:SL})-(\ref{eq:boundary}). In Subsection~\ref{ss:oscillation} we extend the Sturm oscillation theorem to boundary conditions of the form~(\ref{eq:boundary}). In Subsection~\ref{ss:two_problems} we study further properties of the eigenvalues of a pair of boundary value problems with a common boundary condition and use them in Subsection~\ref{ss:bytwospectra} to solve the two-spectra inverse problem. In Subsection~\ref{ss:byspectraldata} we provide necessary and sufficient conditions for two sequences of real numbers to be the eigenvalues and the norming constants of a problem of the form~(\ref{eq:SL})-(\ref{eq:boundary}). The final Subsection~\ref{ss:byonespectrum} is devoted to inverse problems by one spectrum; namely, we consider symmetric boundary value problems and the Hochstadt--Lieberman theorem for boundary value problems of the form~(\ref{eq:SL})-(\ref{eq:boundary}).

\section{Preliminaries} \label{sec:preliminaries}

\subsection{Notation} \label{ss:notation}

We start by recalling some notation introduced in \cite{G17}. To each function $f$ of the form~(\ref{eq:f_F}) we assign two polynomials $f_\uparrow$ and $f_\downarrow$ by writing this function as
$$
  f(\lambda) = \frac{f_\uparrow(\lambda)}{f_\downarrow(\lambda)},
$$
where
$$
  f_\downarrow(\lambda) := h'_0 \prod_{k=1}^d (h_k - \lambda), \qquad h'_0 := \begin{cases} 1 / h_0, & h_0 > 0, \\ 1, & h_0 = 0. \end{cases}
$$
We define the \emph{index} of $f$ as
$$
  \ind f := \deg f_\uparrow + \deg f_\downarrow.
$$
If $f = \infty$ then we just set
$$
  f_\uparrow(\lambda) := -1, \qquad f_\downarrow(\lambda) := 0, \qquad \ind f := -1.
$$
It is straightforward to check that each nonconstant function $f$ of the form~(\ref{eq:f_F}) is strictly increasing on any interval not containing any of its poles, and $f(\lambda) \to \pm\infty$ (respectively, $f(\lambda) \to h$) as $\lambda \to \pm\infty$ if its index is odd (respectively, even). We denote the smallest pole of $f$ (if it has any) by
$$
  \mathring{\boldsymbol{\uppi}}(f) := \begin{cases} h_1, & \ind f \ge 2, \\ +\infty, & \ind f \le 1, \end{cases}
$$
and the total number of poles of this function not exceeding $\lambda$ by
$$
  \boldsymbol{\Pi}_f(\lambda) := \sum_{\substack{1 \le k \le d \\ h_k \le \lambda}} 1.
$$

For every nonnegative integer $n$ we denote by $\mathscr{R}_n$ the set of rational functions of the form (\ref{eq:f_F}) with $\ind f = n$; we also introduce $\mathscr{R}_{-1} := \{ \infty \}$, which corresponds to the Dirichlet boundary condition. Then $\mathscr{R}_0$ consists of all constant functions, $\mathscr{R}_1$ consists of all increasing affine functions and so on. We also denote
$$
  \mathscr{R} := \bigcup_{n=-1}^\infty \mathscr{R}_n.
$$

We denote by $\mathscr{AC}[0, \pi]$ the set of absolutely continuous functions on $[0, \pi]$. We also denote
\begin{equation*}
  \mathring{\mathscr{L}}_2(0, \pi) := \left\{ g \in \mathscr{L}_2(0, \pi) \biggm| \int_0^{\pi} g(x) \,\du x = 0 \right\}.
\end{equation*}
The notation $x_n = y_n + \ell_2(1)$ means that $\sum_{n = 0}^\infty \left| x_n - y_n \right|^2 < \infty$. Finally, we denote by $\mathscr{P}(s, f, F)$ the boundary value problem (\ref{eq:SL})-(\ref{eq:boundary}), and by $\mathring{\boldsymbol{\uplambda}}(s, f, F)$ the smallest eigenvalue of this problem.

\subsection{Characteristic function} \label{ss:characteristic_function}

Let $\varphi(x, \lambda)$ and $\psi(x, \lambda)$ be the solutions of (\ref{eq:SL}) satisfying the initial conditions
\begin{equation} \label{eq:phi_psi}
  \varphi(0, \lambda) = f_\downarrow(\lambda), \quad \varphi^{[1]}_s(0, \lambda) = -f_\uparrow(\lambda), \quad \psi(\pi, \lambda) = F_\downarrow(\lambda), \quad \psi^{[1]}_s(\pi, \lambda) = F_\uparrow(\lambda),
\end{equation}
and $C(x, \lambda)$ and $S(x, \lambda)$ be the solutions of the same equation satisfying the initial conditions
\begin{equation*}
  C(0, \lambda) = S^{[1]}_s(0, \lambda) = 1, \qquad S(0, \lambda) = C^{[1]}_s(0, \lambda) = 0.
\end{equation*}
Standard arguments show that the eigenvalues of the boundary value problem (\ref{eq:SL})-(\ref{eq:boundary}), which coincide with the zeros of the \emph{characteristic function}
$$\chi(\lambda) := F_\uparrow(\lambda) \varphi(\pi, \lambda) - F_\downarrow(\lambda) \varphi^{[1]}_s(\pi, \lambda) = f_\downarrow(\lambda) \psi^{[1]}_s(0, \lambda) + f_\uparrow(\lambda) \psi(0, \lambda),$$
are real and simple, and for each eigenvalue $\lambda_n$ there exists a unique number $\beta_n \ne 0$ such that
\begin{equation} \label{eq:beta}
  \psi(x, \lambda_n) = \beta_n \varphi(x, \lambda_n).
\end{equation}

Writing $\varphi(x, \lambda)$ as
\begin{equation*}
  \varphi(x, \lambda) = f_\downarrow(\lambda) C(x, \lambda) - f_\uparrow(\lambda) S(x, \lambda),
\end{equation*}
and using the estimates (see, e.g., \cite[Lemma 2.5]{SS03})
\begin{align*}
  C(\pi, \lambda) &= \cos \sqrt{\lambda} \pi + o\left( e^{|\im \sqrt{\lambda}\pi|} \right), & S(\pi, \lambda) &= \frac{\sin \sqrt{\lambda} \pi}{\sqrt{\lambda}} + o\left( \frac{e^{|\im \sqrt{\lambda}\pi|}}{\sqrt{\lambda}} \right), \\
  C^{[1]}_s(\pi, \lambda) &= - \sqrt{\lambda} \sin \sqrt{\lambda} \pi + o\left( \sqrt{\lambda} e^{|\im \sqrt{\lambda}\pi|} \right), & S^{[1]}_s(\pi, \lambda) &= \cos \sqrt{\lambda} \pi + o\left( e^{|\im \sqrt{\lambda}\pi|} \right),
\end{align*}
we calculate
\begin{align}
  \varphi(\pi, \lambda) &= \left( \sqrt{\lambda} \right)^{\ind f} \left( \cos \left( \sqrt{\lambda} + \frac{\ind f}{2} \right) \pi + o\left( e^{|\im \sqrt{\lambda}\pi|} \right) \right), \label{eq:phi_pi} \\
  \varphi^{[1]}_s(\pi, \lambda) &= -\left( \sqrt{\lambda} \right)^{\ind f + 1} \left( \sin \left( \sqrt{\lambda} + \frac{\ind f}{2} \right) \pi + o\left( e^{|\im \sqrt{\lambda}\pi|} \right) \right). \label{eq:phi_1s_pi}
\end{align}
Thus
\begin{equation} \label{eq:chi}
  \chi(\lambda) = \left( \sqrt{\lambda} \right)^{\ind f + \ind F + 1} \left( \sin \left( \sqrt{\lambda} + \frac{\ind f + \ind F}{2} \right) \pi + o\left( e^{|\im \sqrt{\lambda}\pi|} \right) \right).
\end{equation}
Using this estimate, from Hadamard's theorem we obtain (see \cite[Lemma A.1]{G18} for details)
\begin{equation} \label{eq:infinite_product}
  \chi(\lambda) = -\prod_{n < L} (\lambda_n - \lambda) \prod_{n = L} \pi (\lambda_n - \lambda) \prod_{n > L} \frac{\lambda_n - \lambda}{(n - L)^2}
\end{equation}
with
\begin{equation*}
  L := \frac{\ind f + \ind F}{2}.
\end{equation*}

\subsection{Hilbert space formulation and spectral data} \label{ss:hilbert}

In this subsection we will introduce a Hilbert space and construct a self-adjoint operator in it in such a way that the boundary value problem (\ref{eq:SL})-(\ref{eq:boundary}) will be equivalent to the eigenvalue problem for this operator. This construction is related to the theory of self-adjoint exit space extensions of symmetric operators~\cite{G19}. The exact form of the space and the operator will depend on the indices of the functions $f$ and $F$. We will give the details only for odd $\ind f$ and $\ind F$ (i.e., $h_0$, $H_0 > 0$), and then discuss the changes needed in the other cases. When $h_0 > 0$ and $H_0 > 0$ we consider the Hilbert space $\mathcal{H} = \mathscr{L}_2(0,\pi) \oplus \mathbb{C}^{d+D+2}$ with inner product given by

$$\langle Y, Z \rangle := \int_0^{\pi} y(x) \overline{z(x)} \,\du x + \sum_{k=1}^{d} \frac{y_k \overline{z_k}}{\delta_k} + \frac{y_{d+1} \overline{z_{d+1}}}{h_0} + \sum_{k=1}^{D} \frac{\eta_k \overline{\zeta_k}}{\Delta_k} + \frac{\eta_{D+1} \overline{\zeta_{D+1}}}{H_0}$$
for
$$Y = \begin{pmatrix} y(x) \\ y_1 \\ \vdots \\ y_{d+1} \\ \eta_1 \\ \vdots \\ \eta_{D+1} \end{pmatrix}, \qquad Z = \begin{pmatrix} z(x) \\ z_1 \\ \vdots \\ z_{d+1} \\ \zeta_1 \\ \vdots \\ \zeta_{D+1} \end{pmatrix} \in \mathcal{H}.$$
In this space we define the operator
$$A(Y) := \begin{pmatrix} - \left( y^{[1]}_s \right)'(x) - s(x) y^{[1]}_s(x) - s^2(x) y(x) \\ \delta_1 y(0) + h_1 y_1 \\ \vdots \\ \delta_{d} y(0) + h_{d} y_{d} \\ y^{[1]}_s(0) + h y(0) - \sum_{k=1}^{d} y_k \\ H_1 \eta_1 - \Delta_1 y(\pi) \\ \vdots \\ H_{D} \eta_{D} - \Delta_{D} y(\pi) \\ y^{[1]}_s(\pi) - H y(\pi) - \sum_{k=1}^{D} \eta_k \end{pmatrix}$$
with
\begin{multline*}
  D(A) := \left\{ Y \in \mathcal{H} \biggm| y, y^{[1]}_s \in \mathscr{AC}[0,\pi],\ - \left( y^{[1]}_s \right)' - s y^{[1]}_s - s^2 y \in \mathscr{L}_2(0,\pi), \right. \\
  \left. \vphantom{\left( y^{[1]}_s \right)'} y_{d+1} = -h_0 y(0),\ \eta_{D+1} = H_0 y(\pi) \right\}.
\end{multline*}

The necessary modifications for the other cases are as follows. We set $\mathcal{H} = \mathscr{L}_2(0,\pi) \oplus \mathbb{C}^{d+D+1}$ in the case when only one of these numbers equals zero, and $\mathcal{H} = \mathscr{L}_2(0,\pi) \oplus \mathbb{C}^{d+D}$ otherwise. If $h_0 = 0$ (respectively, $H_0 = 0$) we omit the $(d+2)$-th components (respectively, the last components) in the above paragraph, and replace the condition $y_{d+1} = -h_0 y(0)$ (respectively, $\eta_{D+1} = H_0 y(\pi)$) by the condition $y^{[1]}_s(0) + h y(0) - \sum_{k=1}^{d} y_k = 0$ (respectively, $y^{[1]}_s(\pi) - H y(\pi) - \sum_{k=1}^{D} \eta_k = 0$) in the definition of the domain of $A$. If $\ind f \le 0$ (respectively, $\ind F \le 0$), i.e., the first (respectively, the second) boundary condition is independent of the eigenvalue parameter, then there are no $y_k$ (respectively, $\eta_k$) components at all, and the condition $y^{[1]}_s(0) = -h y(0)$ or $y(0) = 0$ (respectively, the condition $y^{[1]}_s(\pi) = H y(\pi)$ or $y(\pi) = 0$) is added in the definition of the domain of $A$.

As in the case of summable potentials, one can prove that the operator $A$ thus defined is self-adjoint, its spectrum is discrete and coincides with the set of eigenvalues of the boundary value problem (\ref{eq:SL})-(\ref{eq:boundary}), and its eigenvectors
$$\Phi_n := \begin{pmatrix} \varphi(x, \lambda_n) \\ \frac{\delta_1}{\lambda_n - h_1} \varphi(0, \lambda_n) \\ \vdots \\ \frac{\delta_{d}}{\lambda_n - h_{d}} \varphi(0, \lambda_n) \\ -h_0 \varphi(0, \lambda_n) \\ \frac{\Delta_1}{H_1 - \lambda_n} \varphi(\pi, \lambda_n) \\ \vdots \\ \frac{\Delta_{D}}{H_{D} - \lambda_n} \varphi(\pi, \lambda_n) \\ H_0 \varphi(\pi, \lambda_n) \end{pmatrix}$$
are orthogonal (see, e.g., \cite{BBW02b}, \cite{F77}). Here, since $\lambda_n = h_m$ if and only if $\varphi(0, \lambda_n) = f_\downarrow(\lambda_n) = 0$, the corresponding component of this vector is well-defined in this case too:
\begin{equation*}
  \frac{\delta_m}{\lambda_n - h_m} \varphi(0, \lambda_n) = -h'_0 \delta_m \prod_{\substack{1 \le k \le d \\ k \ne m}} (h_k - \lambda)
\end{equation*}
(and similarly for $H_m$).

We define the \emph{norming constants} as
\begin{equation*}
\begin{split}
  \gamma_n := \| \Phi_n \|^2 = \int_0^{\pi} \varphi^2(x, \lambda_n) \,\du x &+ f'(\lambda_n) \varphi^2(0, \lambda_n) + F'(\lambda_n) \varphi^2(\pi, \lambda_n) \\
  = \int_0^{\pi} \varphi^2(x, \lambda_n) \,\du x &+ f'_\uparrow(\lambda_n) f_\downarrow(\lambda_n) - f_\uparrow(\lambda_n) f'_\downarrow(\lambda_n) \\
  &+ \frac{1}{\beta_n^2} \left( F'_\uparrow(\lambda_n) F_\downarrow(\lambda_n) - F_\uparrow(\lambda_n) F'_\downarrow(\lambda_n) \right).
\end{split}
\end{equation*}
The numbers $\{ \lambda_n, \gamma_n \}_{n \ge 0}$ are called the \emph{spectral data} of the problem $\mathscr{P}(s, f, F)$. We denote by $\mathring{\boldsymbol{\upgamma}}(s, f, F)$ the first norming constant of the problem $\mathscr{P}(s, f, F)$ (i.e., the norming constant corresponding to the smallest eigenvalue $\mathring{\boldsymbol{\uplambda}}(s, f, F)$ of this problem). As in the regular case, we have the identity (\cite[Lemma~2.1]{G17})
\begin{equation} \label{eq:chi_beta_gamma}
  \chi'(\lambda_n) = \beta_n \gamma_n.
\end{equation}

\subsection{Smallest eigenvalues and nonexistence of zeros} \label{ss:nozeros}

Define a partial order on the set $\mathscr{R}$ as follows: $f \preccurlyeq g$ if and only if either $f = \infty$, or $f$ and $g$ are two functions satisfying $f(\lambda) \le g(\lambda)$ for all $\lambda < \min \{ \mathring{\boldsymbol{\uppi}}(f), \mathring{\boldsymbol{\uppi}}(g) \}$.

\begin{lemma} \label{lem:lambda0}
If $f \preccurlyeq \widetilde{f}$ and $F \preccurlyeq \widetilde{F}$ then $\mathring{\boldsymbol{\uplambda}}(s, f, F) \ge \mathring{\boldsymbol{\uplambda}}(s, \widetilde{f}, \widetilde{F})$. Moreover, for $f$, $F$, $\widetilde{f}$, $\widetilde{F} \in \mathscr{R}_{-1} \cup \mathscr{R}_0$ equality is possible only if $f = \widetilde{f}$ and $F = \widetilde{F}$.
\end{lemma}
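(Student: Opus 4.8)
\emph{Overall strategy.} The plan is to characterise $\mathring{\boldsymbol{\uplambda}}(s,f,F)$ variationally through the self-adjoint operator $A$ of Subsection~\ref{ss:hilbert}, and to exploit the fact that \emph{below the first pole} the boundary data enter only through the two numbers $f(\mu)$, $F(\mu)$. A routine integration by parts in the first component (using the inner product of $\mathcal{H}$) shows that, for $Y$ in the form domain of $A$ — that is, $y\in\mathscr{W}_2^1(0,\pi)$ together with vector components $y_k$, $\eta_k$ subject to the domain constraints — the quadratic form $\langle AY,Y\rangle$ equals the energy $\int_0^\pi\bigl(|y'|^2-2\operatorname{Re}(s\bar y y')\bigr)\,\du x$ minus $h|y(0)|^2+H|y(\pi)|^2$, plus the coupling sums $\sum_{k=1}^d\bigl(\tfrac{h_k}{\delta_k}|y_k|^2+2\operatorname{Re}(y(0)\overline{y_k})\bigr)$ at $0$ and $\sum_{k=1}^D\bigl(\tfrac{H_k}{\Delta_k}|\eta_k|^2-2\operatorname{Re}(y(\pi)\overline{\eta_k})\bigr)$ at $\pi$ (with the evident changes when $\ind f\le 1$ or $f=\infty$, and likewise at $\pi$). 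For real $\mu<\min\{\mathring{\boldsymbol{\uppi}}(f),\mathring{\boldsymbol{\uppi}}(F)\}$ each quadratic $\tfrac{h_k-\mu}{\delta_k}|y_k|^2+2\operatorname{Re}(y(0)\overline{y_k})$ is positive definite in $y_k$ with minimum $-\tfrac{\delta_k}{h_k-\mu}|y(0)|^2$, and similarly for $\eta_k$; summing these minima (and the $h_0$, $H_0$ contributions) gives
\[
  \inf_{\{y_k\},\{\eta_k\}}\bigl(\langle AY,Y\rangle-\mu\langle Y,Y\rangle\bigr) = \mathcal{Q}^{f,F}_\mu[y] := \int_0^\pi\bigl(|y'|^2-2\operatorname{Re}(s\bar y y')-\mu|y|^2\bigr)\,\du x - f(\mu)\,|y(0)|^2 - F(\mu)\,|y(\pi)|^2 ,
\]
the infimum being over admissible components with prescribed function part $y$ (and attained). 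Since $A$ is self-adjoint and bounded below with bottom $\mathring{\boldsymbol{\uplambda}}(s,f,F)$, and $\mathring{\boldsymbol{\uplambda}}(s,f,F)\ge\mu$ is equivalent to $\langle AY,Y\rangle\ge\mu\langle Y,Y\rangle$ for all $Y$, we obtain, for $\mu<\min\{\mathring{\boldsymbol{\uppi}}(f),\mathring{\boldsymbol{\uppi}}(F)\}$,
\begin{equation*}
  \mu\le\mathring{\boldsymbol{\uplambda}}(s,f,F)\quad\Longleftrightarrow\quad\mathcal{Q}^{f,F}_\mu[y]\ge 0\ \text{ for all }\ y\in\mathscr{W}_2^1(0,\pi). \tag{$\star$}
\end{equation*}

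\emph{Two observations and the inequality.} First, $\mathring{\boldsymbol{\uplambda}}(s,f,F)<\mathring{\boldsymbol{\uppi}}(f)$ and $\mathring{\boldsymbol{\uplambda}}(s,f,F)<\mathring{\boldsymbol{\uppi}}(F)$: if $\min\{\mathring{\boldsymbol{\uppi}}(f),\mathring{\boldsymbol{\uppi}}(F)\}=+\infty$ there is nothing to prove; otherwise let this minimum be, say, the pole $h_1$ of $f$, and test ($\star$) with the constant $y\equiv 1$ to get $\mathcal{Q}^{f,F}_\mu[1]=-\mu\pi-f(\mu)-F(\mu)\to-\infty$ as $\mu\to h_1^-$ (because $f(\mu)\to+\infty$ while $F(\mu)$ stays bounded below near $h_1$), whence $\mathring{\boldsymbol{\uplambda}}(s,f,F)<\mu<h_1$ for $\mu$ close enough to $h_1$. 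Second, $f\preccurlyeq\widetilde{f}$ implies $\mathring{\boldsymbol{\uppi}}(\widetilde{f})\le\mathring{\boldsymbol{\uppi}}(f)$ — otherwise one would have $\widetilde{f}(\lambda)\ge f(\lambda)\to+\infty$ as $\lambda\to\mathring{\boldsymbol{\uppi}}(f)^-$ while $\widetilde{f}$ is continuous there — and consequently $f(\mu)\le\widetilde{f}(\mu)$ for all $\mu<\mathring{\boldsymbol{\uppi}}(\widetilde{f})$. Granting these, set $\mu:=\mathring{\boldsymbol{\uplambda}}(s,\widetilde{f},\widetilde{F})$. By the two observations, $\mu<\mathring{\boldsymbol{\uppi}}(\widetilde{f})\le\mathring{\boldsymbol{\uppi}}(f)$ and $\mu<\mathring{\boldsymbol{\uppi}}(\widetilde{F})\le\mathring{\boldsymbol{\uppi}}(F)$, so ($\star$) is available at $\mu$ for both problems; its $(\widetilde{f},\widetilde{F})$ instance gives $\mathcal{Q}^{\widetilde{f},\widetilde{F}}_\mu\ge 0$ on $\mathscr{W}_2^1(0,\pi)$, and since $f(\mu)\le\widetilde{f}(\mu)$ and $F(\mu)\le\widetilde{F}(\mu)$,
\[
  \mathcal{Q}^{f,F}_\mu[y] = \mathcal{Q}^{\widetilde{f},\widetilde{F}}_\mu[y] + \bigl(\widetilde{f}(\mu)-f(\mu)\bigr)|y(0)|^2 + \bigl(\widetilde{F}(\mu)-F(\mu)\bigr)|y(\pi)|^2 \ge 0 ;
\]
the $(f,F)$ instance of ($\star$) then yields $\mu\le\mathring{\boldsymbol{\uplambda}}(s,f,F)$, i.e. $\mathring{\boldsymbol{\uplambda}}(s,f,F)\ge\mathring{\boldsymbol{\uplambda}}(s,\widetilde{f},\widetilde{F})$.

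\emph{The equality case.} Suppose $f,F,\widetilde{f},\widetilde{F}\in\mathscr{R}_{-1}\cup\mathscr{R}_0$ (so every $\mathring{\boldsymbol{\uppi}}$ equals $+\infty$) and $\mathring{\boldsymbol{\uplambda}}(s,f,F)=\mathring{\boldsymbol{\uplambda}}(s,\widetilde{f},\widetilde{F})=:\lambda_0$, and let $\varphi_0:=\varphi(\cdot,\lambda_0)$ be the function component of the eigenvector of $\mathscr{P}(s,f,F)$ for $\lambda_0$. Since $A-\lambda_0\ge 0$ while $\varphi_0$ is the function part of a minimiser, the reduction identity above at $\mu=\lambda_0$ forces $\mathcal{Q}^{f,F}_{\lambda_0}[\varphi_0]=0$; moreover $\mathcal{Q}^{f,F}_{\lambda_0}\ge 0$ and $\mathcal{Q}^{\widetilde{f},\widetilde{F}}_{\lambda_0}\ge 0$ by ($\star$). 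As $f\le\widetilde{f}$ and $F\le\widetilde{F}$ (reading $f=\infty$ as the form on $\{y(0)=0\}$, in which case $\varphi_0(0)=0$) we get $0\le\mathcal{Q}^{\widetilde{f},\widetilde{F}}_{\lambda_0}[\varphi_0]\le\mathcal{Q}^{f,F}_{\lambda_0}[\varphi_0]=0$, so $\varphi_0$ minimises the nonnegative form $\mathcal{Q}^{\widetilde{f},\widetilde{F}}_{\lambda_0}$ and hence the first variation of $\mathcal{Q}^{\widetilde{f},\widetilde{F}}_{\lambda_0}$ at $\varphi_0$ vanishes. Testing this against $v\in C_c^\infty(0,\pi)$ shows that $\varphi_0$ solves \eqref{eq:SL} with $\lambda=\lambda_0$, and testing against general $v\in\mathscr{W}_2^1(0,\pi)$ shows that $\varphi_0$ also satisfies the boundary conditions of $\mathscr{P}(s,\widetilde{f},\widetilde{F})$. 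But $\varphi_0$ already satisfies those of $\mathscr{P}(s,f,F)$ — and at each endpoint $\varphi_0$ is nonzero when the corresponding function is a constant, and vanishes there (with nonzero quasi-derivative) when it is $\infty$ — so comparing the two boundary conditions at each endpoint forces $f=\widetilde{f}$ and $F=\widetilde{F}$.

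\emph{Main obstacle.} The step that will need the most care is the first one: identifying the form domain of $A$ correctly for the distributional coefficient $s$, so that the integration-by-parts identity is legitimate for every admissible $y\in\mathscr{W}_2^1(0,\pi)$ (not merely for solutions of \eqref{eq:SL}), and carrying the degenerate index cases ($h_0=0$, $H_0=0$, $\ind f\le 0$, $\ind F\le 0$, $f=\infty$, $F=\infty$) uniformly through the argument. Once that is set up, the proof rests only on the observation that below the first pole the dependence on the boundary functions collapses to $f(\mu)$ and $F(\mu)$, together with standard Rayleigh-quotient reasoning.
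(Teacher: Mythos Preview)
Your argument is essentially correct and follows a genuinely different route from the paper. The paper argues by a shooting/Weyl-function method: it fixes $f$ and shows that the map $\lambda\mapsto\varphi^{[1]}_s(\pi,\lambda)/\varphi(\pi,\lambda)$ is strictly decreasing from $+\infty$ to $-\infty$ on $(-\infty,\nu_0)$ (with $\nu_0$ the first zero of $\varphi(\pi,\cdot)$), so that the first crossing with $F$ occurs no earlier than the first crossing with $\widetilde{F}\ge F$; the $f$-step is symmetric. This is short, stays entirely within ODE reasoning, and the strict monotonicity immediately gives the equality statement for constant boundary data. Your approach instead reduces the Rayleigh quotient of $A-\mu$ by minimising out the discrete components below the first pole, obtaining the one-component form $\mathcal{Q}^{f,F}_\mu$ and the equivalence~($\star$); the comparison of smallest eigenvalues then becomes a pointwise comparison of $f(\mu),F(\mu)$ with $\widetilde f(\mu),\widetilde F(\mu)$. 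This is more functional-analytic, handles the two endpoints simultaneously rather than in two separate steps, and yields Remark~\ref{rem:pi_f} as a direct by-product of~($\star$); on the other hand it requires you to pin down the form domain of $A$ for distributional $s$ and to treat the Dirichlet cases separately (your test function $y\equiv 1$ is inadmissible when $F=\infty$, but a linear function vanishing at the appropriate endpoint fixes this). Both methods are natural; the paper's has the advantage of feeding directly into the subsequent Lemma~\ref{lem:no_zero}, while yours makes the variational mechanism behind the monotonicity transparent.
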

\begin{proof}
We will only prove $\mathring{\boldsymbol{\uplambda}}(s, f, F) \ge \mathring{\boldsymbol{\uplambda}}(s, f, \widetilde{F})$; the proof of $\mathring{\boldsymbol{\uplambda}}(s, f, \widetilde{F}) \ge \mathring{\boldsymbol{\uplambda}}(s, \widetilde{f}, \widetilde{F})$ is similar. Let $\nu_0$ be the smallest zero of $\varphi(\pi, \lambda)$. Dividing both sides of the identity
\begin{equation*}
\begin{split}
  \varphi(\pi, \lambda) \varphi^{[1]}_s(\pi, \mu) &- \varphi^{[1]}_s(\pi, \lambda) \varphi(\pi, \mu) \\
  &= f_\uparrow(\lambda) f_\downarrow(\mu) - f_\downarrow(\lambda) f_\uparrow(\mu) + (\lambda - \mu) \int_0^\pi \varphi(t, \lambda) \varphi(t, \mu) \,\du t
\end{split}
\end{equation*}
by $\mu - \lambda$ and taking the limit as $\mu \to \lambda$ we obtain
$$\frac{\du}{\du \lambda} \left( \frac{\varphi^{[1]}_s(\pi, \lambda)}{\varphi(\pi, \lambda)} \right) = - \frac{1}{\varphi^2(\pi, \lambda)} \left( f_\downarrow^2(\lambda) \frac{\du f(\lambda)}{\du \lambda} + \int_0^\pi \varphi^2(t, \lambda) \,\du t \right) < 0$$
for $\lambda \in (-\infty, \nu_0)$. The asymptotics (\ref{eq:phi_pi}) and (\ref{eq:phi_1s_pi}) and the definition of $\nu_0$ imply
$$\lim_{\lambda \to -\infty} \frac{\varphi^{[1]}_s(\pi, \lambda)}{\varphi(\pi, \lambda)} = +\infty, \qquad \lim_{\lambda \to \nu_0-0} \frac{\varphi^{[1]}_s(\pi, \lambda)}{\varphi(\pi, \lambda)} = -\infty.$$
Thus $\varphi^{[1]}_s(\pi, \lambda)/\varphi(\pi, \lambda)$ is strictly monotone decreasing from $+\infty$ to $-\infty$ as $\lambda$ increases from $-\infty$ to $\nu_0$, and the claim of the lemma follows from the fact that $\mathring{\boldsymbol{\uplambda}}(s, f, F)$ and $\mathring{\boldsymbol{\uplambda}}(s, f, \widetilde{F})$ are the smallest values of $\lambda$ for which $\varphi^{[1]}_s(\pi, \lambda)/\varphi(\pi, \lambda) = F(\lambda)$ and $\varphi^{[1]}_s(\pi, \lambda)/\varphi(\pi, \lambda) = \widetilde{F}(\lambda)$ respectively.
\end{proof}

\begin{remark} \label{rem:pi_f}
The above proof also shows that $\mathring{\boldsymbol{\uplambda}}(s, f, F) < \min \{ \mathring{\boldsymbol{\uppi}}(f), \mathring{\boldsymbol{\uppi}}(F) \}$.
\end{remark}

\begin{lemma} \label{lem:no_zero}
If $\lambda \le \mathring{\boldsymbol{\uplambda}}(s, f, \infty)$ (respectively, $\lambda \le \mathring{\boldsymbol{\uplambda}}(s, \infty, F)$) then the function $\varphi(x, \lambda)$ (respectively, $\psi(x, \lambda)$) has no zeros in $(0,\pi)$.
\end{lemma}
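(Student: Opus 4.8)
Put $\nu_0 := \mathring{\boldsymbol{\uplambda}}(s, f, \infty)$. Since $F_\uparrow \equiv -1$ and $F_\downarrow \equiv 0$ when $F = \infty$, the characteristic function of $\mathscr{P}(s, f, \infty)$ equals $-\varphi(\pi, \lambda)$; as the spectrum of $\mathscr{P}(s, f, \infty)$ is discrete and bounded below (Subsection~\ref{ss:hilbert} together with~(\ref{eq:chi})), $\nu_0$ is simply the smallest zero of $\lambda \mapsto \varphi(\pi, \lambda)$. I would argue by contradiction, assuming $\varphi(x_1, \lambda_0) = 0$ for some $x_1 \in (0, \pi)$ and $\lambda_0 \le \nu_0$, and comparing $\mathscr{P}(s, f, \infty)$ with the analogous boundary value problem on the shorter interval $(0, x_1)$.

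All constructions of Subsection~\ref{ss:hilbert} carry over verbatim with $\pi$ replaced by $x_1$, yielding a self-adjoint, bounded-below operator $A_1$ whose eigenvalues are exactly the zeros of $\varphi(x_1, \cdot)$; write $\mu_1$ for its smallest eigenvalue. Because the pair $(f_\downarrow(\lambda_0), -f_\uparrow(\lambda_0))$ is not $(0,0)$ (the polynomials $f_\uparrow$ and $f_\downarrow$ have no common zero), $\varphi(\cdot, \lambda_0) \not\equiv 0$; forming out of its restriction to $[0, x_1]$ the vector obtained exactly as $\Phi_n$ is obtained from $\varphi(\cdot, \lambda_n)$ in Subsection~\ref{ss:hilbert}, one checks, by the same computation as there, that this is a nonzero eigenvector of $A_1$ with eigenvalue $\lambda_0$. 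Hence $\lambda_0 \ge \mu_1$.

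The crux is then to show $\mu_1 > \nu_0$, for then $\lambda_0 \ge \mu_1 > \nu_0 \ge \lambda_0$ is the required contradiction and $\varphi(\cdot, \lambda_0)$ has no zero in $(0, \pi)$. I would derive this from the variational characterization of the bottom of the spectrum. Extending the $\mathscr{L}_2$-component of any element of the form domain of $A_1$ by zero on $(x_1, \pi)$, and leaving the finite-dimensional components unchanged, embeds the form domain of $A_1$ into the form domain of the operator $A$ of $\mathscr{P}(s, f, \infty)$, preserving both the quadratic form and the inner product, onto the set of those $Y$ in the form domain of $A$ whose $\mathscr{L}_2$-component vanishes on $(x_1, \pi)$; here one uses that (as is standard for Schr\"{o}dinger operators with $\mathscr{W}_2^{-1}$ potential) the form domain requires only $H^1$-regularity of $y$, not absolute continuity of $y^{[1]}_s$, so the jump that $y^{[1]}_s$ acquires at $x_1$ under this extension is harmless even though the extended vector no longer belongs to $D(A)$. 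The min--max principle then gives $\mu_1 \ge \nu_0$, with equality only if some ground state of $A$ lies in that image; but a ground state of $A$ has $\mathscr{L}_2$-component a nonzero multiple of $\varphi(\cdot, \nu_0)$, and this cannot vanish on $(x_1, \pi)$, since otherwise $\varphi(\pi, \nu_0) = \varphi^{[1]}_s(\pi, \nu_0) = 0$, whence $\varphi(\cdot, \nu_0) \equiv 0$ by uniqueness for~(\ref{eq:SL}) --- a contradiction. Thus $\mu_1 > \nu_0$. Finally, the assertion about $\psi$ and $\mathring{\boldsymbol{\uplambda}}(s, \infty, F)$ follows by applying the case just proved after the change of variable $x \mapsto \pi - x$ (under which $s(x) \mapsto -s(\pi - x)$), which turns $\mathscr{P}(s, \infty, F)$ into a problem of the same form with $F$ and $\psi$ playing the roles of $f$ and $\varphi$.

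\textbf{Main obstacle.} The delicate point is the strict inequality $\mu_1 > \nu_0$, and in particular the need to pass from the operator $A$ to its quadratic form: the extended vector, having a jump in $y^{[1]}_s$ at $x_1$, is not in $D(A)$, so one must know precisely that the form domain imposes no continuity on the quasi-derivative, and the strictness then rests on unique continuation for~(\ref{eq:SL}). A more elementary alternative avoiding operator theory is a Sturm-type oscillation argument: the zeros of $\varphi(\cdot, \lambda)$ in $(0, \pi]$ are simple (since $\varphi^{[1]}_s \ne 0$ at every zero of $\varphi$) and depend continuously on $\lambda$, and since $\varphi(0, \lambda) = f_\downarrow(\lambda) \ne 0$ for $\lambda < \mathring{\boldsymbol{\uppi}}(f)$ (Remark~\ref{rem:pi_f}), no zero can enter the interval through the left endpoint; hence the first zero to appear in $(0, \pi]$ appears at $x = \pi$, namely at $\lambda = \nu_0$, which is exactly the claim. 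The price there is making the continuous dependence of the zeros on $\lambda$ rigorous in the distributional setting, which is cleanest via a modified Pr\"{u}fer transformation adapted to the quasi-derivative.
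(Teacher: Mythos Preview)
Your argument is correct, but it follows a route quite different from the paper's. The paper proceeds in two elementary steps: first it identifies $\varphi(\cdot,\nu_0)$ with a constant multiple of the solution $S_\pi(\cdot,\nu_0)$ satisfying $S_\pi(\pi,\nu_0)=0$, $(S_\pi)^{[1]}_s(\pi,\nu_0)=1$, and invokes Lemma~\ref{lem:lambda0} together with the known oscillation theorem for the Dirichlet problem to conclude that $\varphi(\cdot,\nu_0)$ has no zeros in $(0,\pi)$; then, for $\lambda<\nu_0$, it supposes a smallest zero $x_0\in(0,\pi)$ of $\varphi(\cdot,\lambda)$ and derives a contradiction directly from the Lagrange identity
\[
\varphi(x_0,\nu_0)\,\varphi^{[1]}_s(x_0,\lambda)-\varphi^{[1]}_s(x_0,\nu_0)\,\varphi(x_0,\lambda)
= f_\downarrow(\lambda)f_\downarrow(\nu_0)\bigl(f(\nu_0)-f(\lambda)\bigr)+(\nu_0-\lambda)\int_0^{x_0}\varphi(t,\nu_0)\varphi(t,\lambda)\,\du t,
\]
using positivity of both summands on the right and negativity of the left-hand side. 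Your variational/domain-monotonicity argument is more conceptual and handles the cases $\lambda=\nu_0$ and $\lambda<\nu_0$ uniformly, and it does not rely on Lemma~\ref{lem:lambda0} or any prior oscillation result; the price is exactly the point you flag, namely the need to identify the quadratic form and its domain for the operator of Subsection~\ref{ss:hilbert} in the distributional setting, which the paper never develops. The paper's approach trades that for a short computation with the Lagrange identity and a reduction to the already-known Dirichlet case. Your Pr\"{u}fer alternative is closer in spirit to the references the paper cites (\cite{SB09}, \cite{HH14}) than to the paper's own argument.
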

\begin{proof}
Let $\nu_0$ be defined as in the proof of the preceding lemma and denote by $S_{\pi}(x, \lambda)$ the solution of (\ref{eq:SL}) satisfying the initial conditions $S_{\pi}(\pi, \lambda) = 0$ and $\left( S_{\pi} \right)^{[1]}_s(\pi, \lambda) = 1$. Since $\varphi(x, \nu_0)$ and $S_{\pi}(x, \nu_0)$ are both eigenfunctions of the problem $\mathscr{P}(s, f, \infty)$, they coincide up to a constant factor. The solution $S_{\pi}(x, \lambda)$ has no zeros in $(0,\pi)$ for values of $\lambda$ not greater than the smallest eigenvalue $\mathring{\boldsymbol{\uplambda}}(s, \infty, \infty)$ of the Dirichlet problem for (\ref{eq:SL}). By Lemma~\ref{lem:lambda0}, $\nu_0 \le \mathring{\boldsymbol{\uplambda}}(s, \infty, \infty)$. Thus $S_{\pi}(x, \nu_0)$ and hence $\varphi(x, \nu_0)$ has no zeros in $(0,\pi)$.

Now suppose to the contrary that $\varphi(x, \lambda)$ has zeros in $(0,\pi)$ for some $\lambda \le \nu_0$, and let $x_0$ be its smallest positive zero. Remark~\ref{rem:pi_f} shows that $\varphi(0, \lambda) = f_\downarrow(\lambda) > 0$ and $\varphi(0, \nu_0) = f_\downarrow(\nu_0) > 0$. Thus $\varphi(x, \lambda) > 0$ and $\varphi(x, \nu_0) > 0$ for $x \in (0, x_0)$. Then $\varphi^{[1]}_s(x_0, \lambda) < 0$ (see, e.g., \cite[Lemma 2]{SB09}), and hence
\begin{equation*}
\begin{split}
  0 &> \varphi(x_0, \nu_0) \varphi^{[1]}_s(x_0, \lambda) - \varphi^{[1]}_s(x_0, \nu_0) \varphi(x_0, \lambda) \\
  &= f_\downarrow(\lambda) f_\downarrow(\nu_0) \left( f(\nu_0) - f(\lambda) \right) + (\nu_0 - \lambda) \int_0^{x_0} \varphi(t, \nu_0) \varphi(t, \lambda) \,\du t > 0.
\end{split}
\end{equation*}
This contradiction proves the lemma for $\varphi$. The proof for $\psi$ is similar.
\end{proof}

\subsection{A characterization of $f_\downarrow$ in terms of spectral data} \label{ss:f_downarrow}

The aim of this subsection is to prove an auxiliary lemma that will be needed in Subsection~\ref{ss:bytwospectra}. This lemma characterizes the polynomial $f_\downarrow$ (up to a multiplicative constant) among all nonzero polynomials.

\begin{lemma} \label{lem:zero_sum}
If $\ind f \ge 2$ (i.e., if $f$ has at least one pole) then $p(\lambda) = f_\downarrow(\lambda)$ is the only nonzero polynomial, up to a multiplicative constant, that satisfies the identities
\begin{equation*}
  \sum_{n=0}^\infty \frac{\lambda_n^k p(\lambda_n)}{\gamma_n} = 0, \qquad k = 0, \ldots, d - 1.
\end{equation*}
\end{lemma}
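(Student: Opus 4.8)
The plan is to prove two things: that $f_\downarrow$ does satisfy the listed identities (\emph{existence}), and that every nonzero polynomial satisfying them is a scalar multiple of it (\emph{uniqueness}). I expect existence to carry the bulk of the argument; uniqueness will then follow quickly from the positivity of the norming constants together with a one‑step degree reduction.

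For existence I would argue by residues. Using (\ref{eq:beta}) one has $\psi(0,\lambda_n)=\beta_n\varphi(0,\lambda_n)=\beta_n f_\downarrow(\lambda_n)$, so with (\ref{eq:chi_beta_gamma}) and the simplicity of the zeros of $\chi$,
$$\frac{\lambda_n^k f_\downarrow(\lambda_n)}{\gamma_n}=\frac{\lambda_n^k \psi(0,\lambda_n)}{\chi'(\lambda_n)}=\Res_{\lambda=\lambda_n}\frac{\lambda^k \psi(0,\lambda)}{\chi(\lambda)} .$$
Thus $\sum_n \lambda_n^k f_\downarrow(\lambda_n)/\gamma_n$ is the full sum of residues of $\lambda^k\psi(0,\lambda)/\chi(\lambda)$ (this series converges absolutely, since its terms are $O(\lambda_n^{k+d-\ind f})=O(\lambda_n^{-1})$), and I would show it vanishes by integrating this meromorphic function over circles $\Gamma_N=\{|\lambda|=R_N\}$ with $R_N\to\infty$ chosen, exactly as in the derivation of (\ref{eq:infinite_product}), so that $|\chi(\lambda)|\ge c|\lambda|^{(\ind f+\ind F+1)/2}e^{|\im\sqrt{\lambda}\pi|}$ on $\Gamma_N$. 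Since $\psi(0,\cdot)$ satisfies the analogue of (\ref{eq:phi_pi}) with $\ind F$ in place of $\ind f$, in particular $|\psi(0,\lambda)|\le C|\lambda|^{\ind F/2}e^{|\im\sqrt{\lambda}\pi|}$, on $\Gamma_N$ we get $|\lambda^k\psi(0,\lambda)/\chi(\lambda)|\le C'|\lambda|^{k-(\ind f+1)/2}$, whence the integral over $\Gamma_N$ is $O\bigl(R_N^{\,1+k-(\ind f+1)/2}\bigr)$; this tends to zero precisely when $k<(\ind f-1)/2$, and as $\ind f\ge 2d$ whenever $f$ has $d$ poles, this covers every $k\le d-1$. (Alternatively, at least when $h_0>0$, existence follows from Parseval's identity for the complete orthogonal system $\{\Phi_n\}$ of Subsection~\ref{ss:hilbert}: the basis vector $E_m$ in the coordinate of weight $1/\delta_m$ satisfies $\langle E_m,\Phi_n\rangle=f_\downarrow(\lambda_n)/(\lambda_n-h_m)$ and is orthogonal to the basis vector in the coordinate of weight $1/h_0$, whose pairing with $\Phi_n$ equals $-f_\downarrow(\lambda_n)$; since the polynomials $f_\downarrow(\lambda)/(\lambda-h_m)=-h'_0\prod_{j\ne m}(h_j-\lambda)$, $m=1,\dots,d$, are linearly independent and hence span all polynomials of degree at most $d-1$, Parseval yields $\sum_n f_\downarrow(\lambda_n)q(\lambda_n)/\gamma_n=0$ for every such $q$.)

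For uniqueness, note that the identities say precisely that $p$ is orthogonal, relative to the positive discrete measure $\sum_n\gamma_n^{-1}\delta_{\lambda_n}$, to every polynomial of degree at most $d-1$; I would restrict attention to $\deg p\le d$, the case relevant below. Write $p=cf_\downarrow+r$ with $\deg r\le d-1$. Since both $f_\downarrow$ (by the existence half) and $p$ satisfy the identities, so does $r$; pairing $r$ with itself gives $\sum_n\gamma_n^{-1}r(\lambda_n)^2=0$, and because each $\gamma_n=\|\Phi_n\|^2>0$ this forces $r(\lambda_n)=0$ for all $n$, hence $r\equiv 0$ since there are infinitely many eigenvalues. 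Therefore $p=cf_\downarrow$, with $c\ne 0$ as $p\ne 0$.

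The main obstacle is the existence half. Uniqueness is short once existence is in place, but showing that $f_\downarrow$ itself satisfies the identities genuinely uses either the sharp growth of $\chi$ and $\psi(0,\cdot)$ (to make the contour integrals vanish) or the Hilbert‑space model of the problem, and some bookkeeping is needed to cover the boundary conditions with $\ind f$ even alongside those with $\ind f$ odd.
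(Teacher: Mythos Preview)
Your proposal is correct and follows essentially the same route as the paper: the existence half is identical (residues of $\lambda^k\psi(0,\lambda)/\chi(\lambda)$ over expanding circles, using the growth estimates for $\psi(0,\cdot)$ and $\chi$), and your uniqueness argument---write $p=cf_\downarrow+r$ with $\deg r\le d-1$ and conclude $\sum_n r(\lambda_n)^2/\gamma_n=0$---is a direct repackaging of the paper's observation that the Hankel matrix $(s_{i+j})$ of moments $s_k=\sum_n\lambda_n^k/\gamma_n$ is positive definite. Both proofs tacitly restrict to $\deg p\le d$ (the paper takes $p$ monic of degree $d$), which is all that is needed for the application in Subsection~\ref{ss:bytwospectra}.
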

\begin{proof}
From~(\ref{eq:phi_psi}) and~(\ref{eq:beta}) we have
\begin{equation*}
  f_\downarrow(\lambda_n) = \varphi(0, \lambda_n) = \frac{\psi(0, \lambda_n)}{\beta_n}.
\end{equation*}
Together with~(\ref{eq:chi_beta_gamma}) this implies (for sufficiently large $N$)
\begin{equation*}
  \sum_{n=0}^N \frac{\lambda_n^k f_\downarrow(\lambda_n)}{\gamma_n} = \sum_{n=0}^N \Res_{\lambda = \lambda_n} \frac{\lambda^k \psi(0, \lambda)}{\chi(\lambda)} = \frac{1}{2 \pi \iu} \int_{C_N} \frac{\lambda^k \psi(0, \lambda)}{\chi(\lambda)} \,\du \lambda,
\end{equation*}
where $C_N$ denotes the circle of radius
\begin{equation*}
  \left( N - \frac{\ind f + \ind F - 1}{2} \right)^2
\end{equation*}
centered at the origin. Arguing as in Subsection~\ref{ss:characteristic_function} one obtains
\begin{equation*}
  \psi(0, \lambda) = O \left( \left| \sqrt{\lambda} \right|^{\ind F} e^{|\im \sqrt{\lambda}\pi|} \right).
\end{equation*}
On the other hand, from~(\ref{eq:chi}) we get
\begin{equation*}
  \frac{1}{\chi(\lambda)} = O \left( \left| \sqrt{\lambda} \right|^{-(\ind f + \ind F + 1)} e^{-|\im \sqrt{\lambda}\pi|} \right), \qquad \lambda \in \bigcup_N C_N,
\end{equation*}
and thus
\begin{equation*}
  \frac{\lambda^k \psi(0, \lambda)}{\chi(\lambda)} = O \left( \frac{1}{N^{\ind f - 2 k + 1}} \right), \qquad \lambda \in \bigcup_N C_N
\end{equation*}
with $\ind f - 2 k + 1 \ge 3$. Hence
\begin{equation*}
  \lim_{N \to \infty} \int_{C_N} \frac{\lambda^k \psi(0, \lambda)}{\chi(\lambda)} \,\du \lambda = 0,
\end{equation*}
and thus $f_\downarrow(\lambda)$ does indeed satisfy the identities in the statement of the lemma.

To prove the uniqueness (up to a multiplicative constant) part let
\begin{equation*}
  p(\lambda) = \lambda^d + p_{d-1} \lambda^{d-1} + \ldots + p_1 \lambda + p_0
\end{equation*}
be a monic polynomial satisfying the identities in the statement of the lemma. It is easy to see from the asymptotics of the eigenvalues and the norming constants (see Theorem~\ref{thm:asymptotics}, the proof of which does not use the present lemma) that for each $k = 0$, $\ldots$, $d - 1$ the series
\begin{equation*}
  s_k := \sum_{n=0}^\infty \frac{\lambda_n^k}{\gamma_n}
\end{equation*}
converges absolutely. The identities in the statement of the lemma imply the following identities between the numbers $p_i$ and $s_j$:
\begin{equation} \label{eq:p_s}
  \sum_{i=0}^{d-1} p_i s_{i+k} = -s_{d+k}, \qquad k = 0, 1, \ldots, d - 1.
\end{equation}
We consider them as a system of linear equations (with respect to the numbers $p_i$), the matrix of which is the following Hankel matrix:
\begin{equation*}
  \begin{pmatrix}
    s_0     & s_1    & \ldots & s_{d-1} \\
    s_1     & s_2    & \ldots & s_d \\
    \vdots  & \vdots & \ddots & \vdots \\
    s_{d-1} & s_d    & \ldots & s_{2d-2}
  \end{pmatrix}.
\end{equation*}
The quadratic form corresponding to this matrix is positive definite:
\begin{equation*}
  \sum_{i,j=0}^{d-1} s_{i+j} \xi_i \xi_j = \sum_{i,j=0}^{d-1} \sum_{n=0}^\infty \frac{\lambda_n^{i+j} \xi_i \xi_j}{\gamma_n} = \sum_{n=0}^\infty \sum_{i,j=0}^{d-1} \frac{\lambda_n^{i+j} \xi_i \xi_j}{\gamma_n} = \sum_{n=0}^\infty \frac{1}{\gamma_n} \left( \sum_{i=0}^{d-1} \lambda_n^i \xi_i \right)^2 \ge 0
\end{equation*}
with equality if and only if $\sum_{i=0}^{d-1} \lambda_n^i \xi_i = 0$ for all $n$, i.e. $\xi_0 = \ldots = \xi_{d-1} = 0$. Thus the determinant of the above matrix is strictly positive and hence the system~(\ref{eq:p_s}) has a unique solution.
\end{proof}

\section{Transformations} \label{sec:transformations}

In this section, we introduce Darboux-type transformations between problems of the form~(\ref{eq:SL})-(\ref{eq:boundary}). We will apply these transformations in Sections~\ref{sec:direct} and \ref{sec:inverse} to the solution of several direct and inverse spectral problems for~(\ref{eq:SL})-(\ref{eq:boundary}).

\subsection{Transformation of Herglotz--Nevanlinna functions} \label{ss:nevanlinna}

First we start with transformations between rational Herglotz--Nevanlinna functions. These transformations allow one to shift the index of such a function by one in either direction. Note that in the case of distributional potentials we need slightly more general transformations than those defined in~\cite{G17}.

We denote
$$
  \mathcal{S} := \left\{ (\mu, \tau, \rho, f) \in \mathbb{R} \times \mathbb{R} \times \mathbb{R} \times \mathscr{R} \mid \mu < \mathring{\boldsymbol{\uppi}}(f),\ \tau \ge f(\mu) \text{ if } \ind f \ge 0 \right\},
$$
and define the transformation
$$
  \boldsymbol{\Theta} \colon \mathcal{S} \to \mathscr{R},\ (\mu, \tau, \rho, f) \mapsto \widehat{f}
$$
by
$$
  \widehat{f}(\lambda) := \frac{\mu - \lambda}{f(\lambda) - \tau} + \rho.
$$
In the particular case when $f(\lambda) \equiv \tau$ (respectively, $f = \infty$) this is understood as $\widehat{f} := \infty$ (respectively, $\widehat{f}(\lambda) := \rho$). One sees immediately from this definition that
\begin{equation} \label{eq:ThetaTheta}
  \boldsymbol{\Theta}(\mu, \rho, \tau, \boldsymbol{\Theta}(\mu, \tau, \rho, f)) = f
\end{equation}
and (for $f(\lambda) \not\equiv \tau$)
\begin{equation} \label{eq:f_hat_mu_rho}
  \widehat{f}(\mu) \le \rho
\end{equation}
with equality if and only if $\tau > f(\mu)$. The other main properties of this transformation are summarized in the following lemma.
\begin{lemma} \label{lem:f_hat}
The transformation $\boldsymbol{\Theta}$ is well-defined, i.e., $\widehat{f} := \boldsymbol{\Theta}(\mu, \tau, \rho, f) \in \mathscr{R}$. The poles of $f$ and $\widehat{f}$ interlace if $\ind f \ge 2$ and $\ind \widehat{f} \ge 2$ (i.e., if both $f$ and $\widehat{f}$ have poles); moreover, $\mathring{\boldsymbol{\uppi}}(f) < \mathring{\boldsymbol{\uppi}}(\widehat{f})$ if $\tau = f(\mu)$, and $\mathring{\boldsymbol{\uppi}}(f) > \mathring{\boldsymbol{\uppi}}(\widehat{f})$ if $\tau > f(\mu)$. Also, if $\tau = f(\mu)$ then $\ind \widehat{f} = \ind f - 1$,
\begin{equation} \label{eq:r_hat}
  \widehat{f}_\uparrow(\lambda) = \frac{\rho f_\uparrow(\lambda) - \left( \lambda - \mu + \tau \rho \right) f_\downarrow(\lambda)}{\lambda - \mu}, \qquad \widehat{f}_\downarrow(\lambda) = \frac{f_\uparrow(\lambda) - \tau f_\downarrow(\lambda)}{\lambda - \mu},
\end{equation}
while if $\tau > f(\mu)$ then $\ind \widehat{f} = \ind f + 1$,
\begin{equation*}
  \widehat{f}_\uparrow(\lambda) = -\rho f_\uparrow(\lambda) + \left( \lambda - \mu + \tau \rho \right) f_\downarrow(\lambda), \qquad \widehat{f}_\downarrow(\lambda) = -f_\uparrow(\lambda) + \tau f_\downarrow(\lambda).
\end{equation*}
\end{lemma}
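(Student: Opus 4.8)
The plan is to treat the two cases $\tau = f(\mu)$ and $\tau > f(\mu)$ separately, but in both to reduce everything to algebraic manipulations of the polynomial pair $(f_\uparrow, f_\downarrow)$. First I would dispose of the degenerate inputs: if $f = \infty$ then $\widehat{f} \equiv \rho$ is a constant, hence lies in $\mathscr{R}_0 \subset \mathscr{R}$; if $f \equiv \tau$ (possible only when $\ind f = 0$) then $\widehat{f} = \infty \in \mathscr{R}_{-1}$. So assume $f$ is nonconstant and $f \not\equiv \tau$. The key structural fact I would invoke is the one recalled in Subsection~\ref{ss:notation}: a nonconstant $f$ of the form~(\ref{eq:f_F}) is characterized among rational functions by being real, strictly increasing between consecutive poles, with all residues positive and the coefficient of $\lambda$ (if present) nonnegative --- equivalently, $f$ is a Herglotz--Nevanlinna function with the stated coefficient structure. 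So the whole lemma comes down to checking that $\widehat{f}(\lambda) = (\mu - \lambda)/(f(\lambda) - \tau) + \rho$ is again Herglotz--Nevanlinna of this shape, and then reading off its index, poles, and the $(\cdot)_\uparrow, (\cdot)_\downarrow$ polynomials.

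For the Herglotz--Nevanlinna property the cleanest route is to work on the upper half-plane: if $f$ maps $\mathbb{C}_+$ into $\overline{\mathbb{C}_+}$, then $f(\lambda) - \tau$ does too, so $-1/(f(\lambda)-\tau)$ maps $\mathbb{C}_+$ into $\overline{\mathbb{C}_+}$; multiplying by $(\lambda - \mu)$, which maps $\mathbb{C}_+$ into itself and is positive-imaginary, one checks $\im\bigl((\mu-\lambda)/(f(\lambda)-\tau)\bigr) = \im\bigl(-(\lambda-\mu)/(f(\lambda)-\tau)\bigr) \ge 0$ on $\mathbb{C}_+$ by expanding $\im(uv)$ for $u = \lambda - \mu$, $v = -1/(f-\tau)$; adding the real constant $\rho$ changes nothing. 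Thus $\widehat{f}$ is Herglotz--Nevanlinna. That it is \emph{rational} is clear, and that its partial-fraction expansion has the normalization of~(\ref{eq:f_F}) (real coefficients, $\widehat h_0 \ge 0$, positive residues, simple real poles) follows because every rational Herglotz--Nevanlinna function automatically has this form; one only has to note that the linear-growth coefficient of $\widehat{f}$ is nonnegative, which is visible from the behavior of $(\mu-\lambda)/(f(\lambda)-\tau)$ as $\lambda \to \pm\infty$ together with the parity rule for $\ind f$. The condition $\mu < \mathring{\boldsymbol{\uppi}}(f)$ guarantees $f$ is finite and real at $\mu$ (or, if $\ind f \le 1$, finite everywhere to the left of $+\infty$), so the definition makes sense and $\tau - f(\mu)$ is well-defined.

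Next, the pole structure. The poles of $\widehat{f}$ are exactly the points where $f(\lambda) = \tau$ but $\lambda \ne \mu$ (a point where $f(\lambda) = \tau$ and $\lambda = \mu$ produces a removable singularity since the numerator also vanishes). Because $f$ is strictly increasing from $-\infty$ to its next pole on each interpole interval, the equation $f(\lambda) = \tau$ has exactly one solution strictly between consecutive poles of $f$, and exactly one solution to the left of $\mathring{\boldsymbol{\uppi}}(f)$ --- but whether that leftmost solution is $< \mu$, $= \mu$, or does not exist to the left of $\mu$ is precisely governed by comparing $\tau$ with $f(\mu)$. If $\tau = f(\mu)$, the leftmost solution of $f(\lambda) = \tau$ is $\mu$ itself, hence removable, so $\widehat{f}$ has one fewer pole than $f$ and its smallest pole sits strictly to the right of $\mathring{\boldsymbol{\uppi}}(f)$; this gives $\ind\widehat f = \ind f - 1$ together with $\mathring{\boldsymbol{\uppi}}(f) < \mathring{\boldsymbol{\uppi}}(\widehat f)$ and the interlacing. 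If $\tau > f(\mu)$, then since $f$ is increasing, the solution of $f(\lambda) = \tau$ nearest to $\mu$ on its right is a genuine new pole of $\widehat{f}$ lying in $(\mu, \mathring{\boldsymbol{\uppi}}(f))$ --- here one must also track what happens at $\mu$: $\widehat{f}(\mu) = (\mu-\mu)/(f(\mu)-\tau) + \rho = \rho$ when $\tau = f(\mu)$ is a removable point but when $\tau > f(\mu)$ the formula gives $\widehat f(\mu) = \rho$ as a finite value, consistent with (\ref{eq:f_hat_mu_rho}) with equality. So $\widehat{f}$ gains a pole, $\ind\widehat f = \ind f + 1$, $\mathring{\boldsymbol{\uppi}}(\widehat f) < \mathring{\boldsymbol{\uppi}}(f)$, and interlacing holds.

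Finally the formulas~(\ref{eq:r_hat}) and its $\tau > f(\mu)$ analogue: here I would simply substitute $f = f_\uparrow/f_\downarrow$ into $\widehat f = (\mu-\lambda)/(f - \tau) + \rho$ and clear denominators, getting
\begin{equation*}
  \widehat f(\lambda) = \frac{(\mu - \lambda) f_\downarrow(\lambda) + \rho\bigl(f_\uparrow(\lambda) - \tau f_\downarrow(\lambda)\bigr)}{f_\uparrow(\lambda) - \tau f_\downarrow(\lambda)}
  = \frac{-\rho f_\uparrow(\lambda) + (\lambda - \mu + \tau\rho) f_\downarrow(\lambda)}{-f_\uparrow(\lambda) + \tau f_\downarrow(\lambda)},
\end{equation*}
and then divide numerator and denominator by $\lambda - \mu$ in the case $\tau = f(\mu)$, using that $f_\uparrow(\mu) - \tau f_\downarrow(\mu) = f_\downarrow(\mu)(f(\mu) - \tau) = 0$ so that $\lambda - \mu$ divides the denominator polynomial (and one checks it divides the numerator as well, e.g.\ from the Herglotz property forcing numerator and denominator of a function in the form~(\ref{eq:f_F}) to be coprime, or directly). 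The only remaining point is that the right-hand sides of~(\ref{eq:r_hat}) really are the canonically normalized $\widehat f_\downarrow$, $\widehat f_\uparrow$ in the sense of Subsection~\ref{ss:notation} --- i.e.\ $\widehat f_\downarrow$ has leading coefficient of the correct sign/normalization and $\gcd(\widehat f_\uparrow, \widehat f_\downarrow) = 1$ --- which follows from the already-established facts that $\widehat f \in \mathscr{R}$ and $\deg \widehat f_\downarrow = \#(\text{poles of }\widehat f)$, matched against the degree of the displayed denominator.

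\textbf{Main obstacle.} The genuinely fiddly part is not the Herglotz--Nevanlinna verification (standard half-plane argument) but the careful bookkeeping of the \emph{leftmost} behavior: making sure that in the case $\tau = f(\mu)$ the point $\mu$ is the leftmost solution of $f = \tau$ (so that cancellation, not a new pole, occurs there), and conversely in the case $\tau > f(\mu)$ that a new pole is created in $(\mu, \mathring{\boldsymbol{\uppi}}(f))$ and nothing pathological happens at $-\infty$; the parity statement "$f(\lambda) \to \pm\infty$ if $\ind f$ odd, $\to h$ if even" from Subsection~\ref{ss:notation} is exactly what is needed to count solutions of $f = \tau$ on the unbounded left interval and hence to pin down $\ind\widehat f$ and the sign of $\widehat h_0$.
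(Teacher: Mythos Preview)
Your half-plane argument has a genuine gap. You claim that $\im\bigl((\mu-\lambda)/(f(\lambda)-\tau)\bigr)\ge 0$ on $\mathbb{C}_+$ follows ``by expanding $\im(uv)$'' with $u=\lambda-\mu$ and $v=-1/(f-\tau)$, both Herglotz--Nevanlinna. But the product of two Herglotz--Nevanlinna functions is \emph{not} Herglotz--Nevanlinna in general: $\im(uv)=\Re u\,\im v+\im u\,\Re v$, and here $\Re u$ and $\Re v$ can have either sign. (For instance $u=\lambda$, $v=\lambda+1$ gives $uv=\lambda^2+\lambda$, which has negative imaginary part at $\lambda=-2+i$.) The conclusion $\widehat f\in\mathscr{R}$ is correct, but it needs the specific constraints $\mu<\mathring{\boldsymbol{\uppi}}(f)$ and $\tau\ge f(\mu)$: one clean fix is to write $f(\lambda)-\tau=(\lambda-\mu)\bigl[h_0+\sum_k\frac{\delta_k/(h_k-\mu)}{h_k-\lambda}\bigr]-(\tau-f(\mu))$, whence $(\mu-\lambda)/(f(\lambda)-\tau)$ is $-1$ divided by a genuine Herglotz--Nevanlinna function (with an extra pole at $\mu$ when $\tau>f(\mu)$). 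Since you then rely on the Herglotz property to conclude that the partial-fraction data of $\widehat f$ has the right signs, this gap propagates.

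The paper sidesteps the half-plane entirely and works on the real line, which is closer to what you do in your ``pole structure'' paragraph: it locates the solutions of $f(\lambda)=\tau$ by monotonicity on each interpole interval and then verifies $\widehat\delta_k>0$ directly from the fact that $f(\lambda)\nearrow\tau$ as $\lambda\nearrow\widehat h_k$, and $\widehat h_0\ge 0$ from the asymptotics at $+\infty$. Two minor slips in your real-variable sketch: your claim that in the case $\tau=f(\mu)$ ``$\widehat f$ has one fewer pole than $f$'' is only true when $\ind f$ is even --- when $\ind f$ is odd, the pole at $\mu$ is removed but a new one appears on $(h_d,+\infty)$, so the pole count is unchanged and it is the vanishing of $\widehat h_0$ that makes the index drop; and the interval where the parity of $\ind f$ governs the count is the \emph{rightmost} one $(h_d,+\infty)$, not the leftmost.
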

\begin{proof}
The cases $\ind f = -1$, $0$, $1$ are trivial, so we assume that $\ind f \ge 2$. We can write $\widehat{f}$ as
$$
  \widehat{f}(\lambda) = \frac{f_\downarrow(\lambda) (\lambda - \mu)}{\tau f_\downarrow(\lambda) - f_\uparrow(\lambda)} + \rho,
$$
where the polynomials $f_\uparrow$ and $f_\downarrow$, and thus $f_\downarrow$ and $\tau f_\downarrow - f_\uparrow$ have no common roots. When $\tau = f(\mu)$ the polynomial $\tau f_\downarrow(\lambda) - f_\uparrow(\lambda)$ is divisible by $\lambda - \mu$, and hence $\widehat{f}$ is a rational function whose poles $\widehat{h}_1$, $\widehat{h}_2$, $\ldots$, $\widehat{h}_{\widehat{d}}$ coincide with the set $\{ \lambda \ne \mu \mid f(\lambda) = \tau \}$. Recall that $f$ is strictly increasing on each of the intervals $(-\infty, h_1)$, $(h_1, h_2)$, $\dots$, $(h_{d-1}, h_d)$, $(h_d, +\infty)$. Hence $\widehat{h}_k \in (h_k, h_{k+1})$ for $k = 1$, $\ldots$, $d-1$. Therefore $\widehat{d} = d - 1$ or $\widehat{d} = d$, depending on whether the function $f$ takes the value $\tau$ on the interval $(h_d, +\infty)$ or not. If $\ind f = 2d$ then $f(\lambda) \nearrow h < f(\mu) = \tau$ as $\lambda \to +\infty$, and thus $\widehat{d} = d - 1$. Since the degree of the polynomial $\left( \tau f_\downarrow(\lambda) - f_\uparrow(\lambda) \right) / (\lambda - \mu)$ also equals $d - 1$, the function $\widehat{f}$ can be written as
\begin{equation*}
  \widehat{f}(\lambda) = \widehat{h}_0 \lambda + \widehat{h} + \sum_{k=1}^{\widehat{d}} \frac{\widehat{\delta}_k}{\widehat{h}_k - \lambda}.
\end{equation*}
Here $\widehat{h}_0 > 0$ since $\widehat{f}(\lambda) \to +\infty$ as $\lambda \to +\infty$, and $\widehat{\delta}_k > 0$ since $f(\lambda) \nearrow \tau$ as $\lambda \nearrow \widehat{h}_k$. Therefore $\widehat{f} \in \mathscr{R}$ with $\ind \widehat{f} = 2 \widehat{d} + 1 = \ind f - 1$. Finally, the consideration of the leading coefficients of the polynomials $\left( \lambda - \mu + \tau \rho \right) f_\downarrow(\lambda) - \rho f_\uparrow(\lambda)$ and $\tau f_\downarrow - f_\uparrow$ yields the identities~(\ref{eq:r_hat}). If $\ind f = 2d + 1$ then $\widehat{f}$ has one more pole in $(h_d, +\infty)$, so $\widehat{d} = d$. Also, since $f(\lambda) / \lambda \to h_0$ as $\lambda \to +\infty$, we obtain that $\lim_{\lambda \to +\infty} \widehat{f}(\lambda)$ is finite, i.e. $\widehat{h}_0 = 0$, and $\ind \widehat{f} = 2 \widehat{d} = \ind f - 1$.

The case $\tau > f(\mu)$ can be analyzed in a similar way by taking into account the fact that the set of poles of $\widehat{f}$ is now $\{ \lambda \in \mathbb{R} \mid f(\lambda) = \tau \}$.
\end{proof}

\subsection{Direct transformation between problems} \label{ss:isospectral}

We now introduce our first transformation between boundary value problems of the form~(\ref{eq:SL})-(\ref{eq:boundary}), and study its properties. This transformation reduces the index of each non-Dirichlet boundary coefficient by one. Hence, by applying it a sufficient number of times to a boundary value problem of the form (\ref{eq:SL})-(\ref{eq:boundary}), we will eventually arrive at a problem with boundary conditions independent of the eigenvalue parameter.

The domain $\widehat{\mathcal{S}}$ of our transformation consists of all possible boundary value problems of the form (\ref{eq:SL})-(\ref{eq:boundary}), excluding the case when both boundary conditions are Dirichlet:
$$
  \widehat{\mathcal{S}} := \left\{ (s, f, F) \biggm| s \in \mathring{\mathscr{L}}_2(0, \pi),\ f, F \in \mathscr{R},\ \ind f + \ind F \ge -1 \right\}.
$$
We define the transformation
$$
  \widehat{\mathbf{T}} \colon \widehat{\mathcal{S}} \to \mathring{\mathscr{L}}_2(0, \pi) \times \mathscr{R} \times \mathscr{R},\ (s, f, F) \mapsto (\widehat{s}, \widehat{f}, \widehat{F})
$$
by
\begin{equation} \label{eq:s_f_F_hat}
\begin{gathered}
  \widehat{s} := s - \frac{2 v'}{v} + \frac{2}{\pi} \ln \frac{v(\pi)}{v(0)}, \qquad \widehat{f} := \boldsymbol{\Theta} \left( \Lambda, -\frac{v^{[1]}_s(0)}{v(0)}, -\frac{v^{[1]}_s(0)}{v(0)} + \frac{2}{\pi} \ln \frac{v(\pi)}{v(0)}, f \right), \\
  \widehat{F} := \boldsymbol{\Theta} \left( \Lambda, \frac{v^{[1]}_s(\pi)}{v(\pi)}, \frac{v^{[1]}_s(\pi)}{v(\pi)} - \frac{2}{\pi} \ln \frac{v(\pi)}{v(0)}, F \right),
\end{gathered}
\end{equation}
where
\begin{equation} \label{eq:mu}
  \Lambda := \begin{cases} \lambda_0, & f, F \ne \infty, \\ \lambda_0 - 2, & \text{otherwise} \end{cases} \qquad \text{and} \qquad v(x) := \begin{cases} \varphi(x, \Lambda), & f \ne \infty, \\ \psi(x, \Lambda), & f = \infty \end{cases}
\end{equation}
(the motivation for choosing this particular value for $\Lambda$ can be found in~\cite[Remark 3.4]{G17}). That this transformation is well-defined follows from Remark~\ref{rem:pi_f}, Lemmas~\ref{lem:lambda0}, \ref{lem:no_zero}, \ref{lem:f_hat} and the identity $s - 2 v' / v = -s - 2 v^{[1]}_s / v$.

By Lemma~\ref{lem:f_hat}, $\ind \widehat{f} = \ind f - 1$ if $\ind f \ge 0$, and $\ind \widehat{f} = 0$ if $\ind f = -1$. The same is true for $F$ and $\widehat{F}$. Thus we denote
\begin{equation} \label{eq:I}
  I := \ind f - \ind \widehat{f} = \begin{cases} 1, & \ind f \ge 0, \\ -1, & \ind f = -1 \end{cases}
\end{equation}
and
\begin{equation} \label{eq:J}
  J := \frac{\ind f + \ind F}{2} - \frac{\ind \widehat{f} + \ind \widehat{F}}{2} = \begin{cases} 1, & \ind f, \ind F \ge 0, \\ 0, & \text{otherwise.} \end{cases}
\end{equation}

\begin{theorem} \label{thm:transformation}
If $\{ \lambda_n, \gamma_n \}_{n \ge 0}$ is the spectral data of the problem $\mathscr{P}(s, f, F)$ and $(\widehat{s}, \widehat{f}, \widehat{F}) = \widehat{\mathbf{T}} (s, f, F)$ then the spectral data of the transformed problem $\mathscr{P}(\widehat{s}, \widehat{f}, \widehat{F})$ is
$$
  \left\{ \lambda_n, \frac{\gamma_n}{(\lambda_n - \Lambda)^I} \right\}_{n \ge J}.
$$
\end{theorem}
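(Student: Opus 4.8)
The plan is to exhibit $\widehat{\mathbf{T}}$ as a Darboux transformation at the level of solutions of the equation, as in \cite[Section~3]{G17}, and to transfer the spectral data through the characteristic function. Set $\sigma := v^{[1]}_s/v$; by Remark~\ref{rem:pi_f} and Lemmas~\ref{lem:lambda0} and~\ref{lem:no_zero} the solution $v$ has no zeros on $[0,\pi]$, so $\sigma \in \mathscr{AC}[0,\pi]$, and differentiating $\sigma = v^{[1]}_s/v$ with the help of $\ell_s v = \Lambda v$ gives $\sigma' = -s^2 - 2 s \sigma - \sigma^2 - \Lambda$. Writing $c := \frac{2}{\pi} \ln \frac{v(\pi)}{v(0)}$ (so that $\widehat{s} = -s - 2\sigma + c$ by the identity $s - 2v'/v = -s - 2 v^{[1]}_s/v$ already used in the text), the first step is to verify that for each solution $y = y(\cdot,\lambda)$ of $\ell_s y = \lambda y$ the function $\mathbf{T} y := y^{[1]}_s - \sigma y$ lies in $\mathscr{AC}[0,\pi]$, has quasi-derivative
$$(\mathbf{T} y)^{[1]}_{\widehat{s}} = (\Lambda - \lambda)\, y + (\sigma - c)\, \mathbf{T} y \in \mathscr{AC}[0,\pi],$$
and solves the transformed equation $\ell_{\widehat{s}} (\mathbf{T} y) = \lambda\, \mathbf{T} y$. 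This is a routine computation from the displayed formulas for $\sigma'$ and $(\mathbf{T} y)^{[1]}_{\widehat s}$ together with $y' = y^{[1]}_s + s y$, entirely parallel to the classical Darboux transformation; note $\mathbf{T} v = 0$, which is what will make the eigenvalue $\lambda_0$ disappear when $f, F \ne \infty$.

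The second step identifies the canonical solutions $\widehat{\varphi}$, $\widehat{\psi}$ of $\mathscr{P}(\widehat{s},\widehat{f},\widehat{F})$ determined by~(\ref{eq:phi_psi}). The crucial point is that the middle argument of the $\boldsymbol{\Theta}$ producing $\widehat{f}$ equals $f(\Lambda)$ (when $f \ne \infty$) and that of the $\boldsymbol{\Theta}$ producing $\widehat{F}$ equals $F(\Lambda)$: the former because $-v^{[1]}_s(0)/v(0) = -\varphi^{[1]}_s(0,\Lambda)/\varphi(0,\Lambda) = f(\Lambda)$, the latter because $v$ is proportional to $\psi(\cdot,\Lambda)$ — by definition if $f = \infty$, and by~(\ref{eq:beta}) (since $\Lambda = \lambda_0$ is an eigenvalue) if $f \ne \infty$. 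Hence Lemma~\ref{lem:f_hat} applies in its $\tau = f(\mu)$ form, and feeding the initial conditions~(\ref{eq:phi_psi}) into the explicit formulas~(\ref{eq:r_hat}) gives $\varphi^{[1]}_s(0,\lambda) - \sigma(0)\varphi(0,\lambda) = (\Lambda - \lambda)^{\kappa_f}\, \widehat{f}_\downarrow(\lambda)$, where $\kappa_f = 1$ if $\ind f \ge 0$ and $\kappa_f = 0$ if $f = \infty$, and symmetrically at $\pi$ with an exponent $\kappa_F$. Dividing $\mathbf{T}\varphi$ by $(\Lambda - \lambda)^{\kappa_f}$ and $\mathbf{T}\psi$ by $(\lambda - \Lambda)^{\kappa_F}$ (with a sign in the sole case $F = \infty$), and using the displayed formula for $(\mathbf{T} y)^{[1]}_{\widehat{s}}$ together with the algebraic consequences of~(\ref{eq:r_hat}) — chiefly $\widehat{f}_\uparrow(\lambda) - \bigl(c - \sigma(0)\bigr)\widehat{f}_\downarrow(\lambda) = -f_\downarrow(\lambda)$ and its analogue at $\pi$, which are exactly the relations engineered into $\boldsymbol{\Theta}$ — one checks that the resulting functions satisfy~(\ref{eq:phi_psi}) for $(\widehat{s},\widehat{f},\widehat{F})$; by analyticity this persists at $\lambda = \Lambda$. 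Thus they are $\widehat{\varphi}$ and $\widehat{\psi}$.

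Substituting the expressions for $\widehat{\varphi}(\pi,\cdot)$ and $\widehat{\varphi}^{[1]}_{\widehat{s}}(\pi,\cdot)$ into the definition of the characteristic function $\widehat{\chi}$ of $\mathscr{P}(\widehat{s},\widehat{f},\widehat{F})$ and again invoking~(\ref{eq:r_hat}) (now through $\widehat{F}_\uparrow - (\sigma(\pi) - c)\widehat{F}_\downarrow = -F_\downarrow$ in the case $\ind F \ge 0$), all $\mathbf{T}\varphi$-terms cancel and one obtains $\widehat{\chi}(\lambda) = \chi(\lambda)/(\Lambda - \lambda)^J$ with $J$ as in~(\ref{eq:J}). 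If $f, F \ne \infty$ then $J = 1$ and $\Lambda = \lambda_0$ is a simple zero of $\chi$ (simplicity of eigenvalues was recalled in Subsection~\ref{ss:characteristic_function}), so $\widehat{\chi}$ is entire with zero set $\{\lambda_n\}_{n \ge 1}$; otherwise $J = 0$ and $\Lambda = \lambda_0 - 2$ is not a zero of $\chi$, so the zero set is $\{\lambda_n\}_{n \ge 0}$. In either case the eigenvalues of $\mathscr{P}(\widehat{s},\widehat{f},\widehat{F})$ are $\{\lambda_n\}_{n \ge J}$.

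For the norming constants, linearity of $y \mapsto \mathbf{T} y$ and $\psi(\cdot,\lambda_n) = \beta_n \varphi(\cdot,\lambda_n)$ give, after comparing the normalizations chosen above, $\widehat{\beta}_n = \pm (\lambda_n - \Lambda)^{\kappa_F - \kappa_f}\, \beta_n$ with an explicit sign; applying~(\ref{eq:chi_beta_gamma}) to both problems and noting that $\widehat{\chi}'(\lambda_n) = \chi'(\lambda_n)/(\Lambda - \lambda_n)^J$ since $\chi(\lambda_n) = 0$, we get
$$\widehat{\gamma}_n = \frac{\widehat{\chi}'(\lambda_n)}{\widehat{\beta}_n} = \frac{\beta_n \gamma_n}{(\Lambda - \lambda_n)^J\, \widehat{\beta}_n} = \frac{\gamma_n}{(\lambda_n - \Lambda)^I}$$
with $I$ as in~(\ref{eq:I}), once the powers of $\lambda_n - \Lambda$ and the sign are collected ($J=1$, $\kappa_f=\kappa_F=1$ in the first case; $J=0$ with $\kappa_F-\kappa_f=\pm1$ in the other two). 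The only genuine labor is precisely this bookkeeping across the three cases $f, F \ne \infty$, $f = \infty$, $F = \infty$ (the case $f = F = \infty$ being excluded from $\widehat{\mathcal{S}}$): the normalizing factors, the constant $c$, and the signs must be threaded consistently through $\widehat{s}$, $\widehat{f}$, $\widehat{F}$, $\widehat{\chi}$ and $\widehat{\beta}_n$. Beyond that, the argument is the distributional-potential analogue of the one in \cite[Section~3]{G17} and presents no new difficulty.
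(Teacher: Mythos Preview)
Your approach is correct and runs in parallel with the paper's, but organizes the Darboux calculation differently. The paper works pointwise at each eigenvalue: it verifies directly that $\varphi^{[1]}_s(\cdot,\lambda_n) - \sigma\,\varphi(\cdot,\lambda_n)$ is an eigenfunction of $\mathscr{P}(\widehat{s},\widehat{f},\widehat{F})$, then rules out additional eigenvalues by an inverse-transformation argument together with a separate analysis of the general solution at $\lambda = \Lambda$, and finally computes $\widehat{\gamma}_n$ from the definition of the norming constants (deferring the algebra to \cite[Theorem~3.3]{G17}). You instead identify $\widehat{\varphi}$ and $\widehat{\psi}$ globally in $\lambda$, deduce the characteristic-function identity $\widehat{\chi}(\lambda) = \chi(\lambda)/(\Lambda-\lambda)^J$, and read off both the eigenvalues and the norming constants (via $\widehat{\chi}'(\lambda_n) = \widehat{\beta}_n\widehat{\gamma}_n$) from it. Your route has the pleasant feature that the exclusion of $\Lambda$ as an eigenvalue of the transformed problem comes for free from the simplicity of $\lambda_0$ as a zero of $\chi$, without the paper's contradiction argument based on the explicit form of solutions at $\lambda = \Lambda$; the price is that you must verify both initial conditions in~(\ref{eq:phi_psi}) for $\widehat{\varphi}$ and $\widehat{\psi}$ (not just one), and keep track of the signs in the normalizing factors $(\Lambda-\lambda)^{\kappa_f}$ versus $(\lambda-\Lambda)^{\kappa_F}$ across the three cases. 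Your sketch is accurate on those points --- in particular your algebraic relations $\widehat{f}_\uparrow - (c-\sigma(0))\widehat{f}_\downarrow = -f_\downarrow$ and $\widehat{F}_\uparrow - (\sigma(\pi)-c)\widehat{F}_\downarrow = -F_\downarrow$ are exactly what make the $\mathbf{T}\varphi$-terms cancel in $\widehat{\chi}$ --- so the remaining work is, as you say, bookkeeping.
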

\begin{proof}
It is straightforward to verify that for every $n \ge J$ (i.e., $\lambda_n \ne \Lambda$) the function
$$
  \varphi'(x, \lambda_n) - \frac{v'(x)}{v(x)} \varphi(x, \lambda_n) = \varphi^{[1]}_s(x, \lambda_n) - \frac{v^{[1]}_s(x)}{v(x)} \varphi(x, \lambda_n)
$$
is an eigenfunction of $\mathscr{P}(\widehat{s}, \widehat{f}, \widehat{F})$ corresponding to the eigenvalue $\lambda_n$. Hence the numbers $\lambda_n$ for $n \ge J$ are eigenvalues of this boundary value problem. In order to prove that there are no other eigenvalues, we first observe that if $\widehat{y}$ is an eigenfunction of $\mathscr{P}(\widehat{s}, \widehat{f}, \widehat{F})$ corresponding to an eigenvalue $\lambda \ne \Lambda$ then $\widehat{y}' + \widehat{y} v' / v$ is an eigenfunction of $\mathscr{P}(s, f, F)$ corresponding to the same eigenvalue $\lambda$. Thus no $\lambda \notin \{\Lambda\} \cup \bigcup_{n \ge J} \{\lambda_n\}$ is an eigenvalue of $\mathscr{P}(\widehat{s}, \widehat{f}, \widehat{F})$. It remains to show that $\Lambda$ is not an eigenvalue of $\mathscr{P}(\widehat{s}, \widehat{f}, \widehat{F})$ either. Suppose the contrary. Since the general solution of the equation $-\left( \widehat{y}^{[1]}_{\widehat{s}} \right)' - \widehat{s} \widehat{y}^{[1]}_{\widehat{s}} - \widehat{s}^2 y = \Lambda \widehat{y}$ is of the form
\begin{equation*}
  \widehat{y}(x) := \frac{1}{v(x)} \left( A + B \int_0^x v^2(t) \,\du t \right)
\end{equation*}
for some constants $A$ and $B$, we have
\begin{equation*}
\begin{aligned}
  \frac{\widehat{y}^{[1]}_{\widehat{s}}(0)}{\widehat{y}(0)} &= \frac{v^{[1]}_s(0)}{v(0)} - \frac{2}{\pi} \ln \frac{v(\pi)}{v(0)} + \frac{B v^2(0)}{A}, \\
  \frac{\widehat{y}^{[1]}_{\widehat{s}}(\pi)}{\widehat{y}(\pi)} &= \frac{v^{[1]}_s(\pi)}{v(\pi)} - \frac{2}{\pi} \ln \frac{v(\pi)}{v(0)} + \frac{B v^2(\pi)}{A + B \int_0^\pi v^2(t) \,\du t}.
\end{aligned}
\end{equation*}
Then from~(\ref{eq:f_hat_mu_rho}) we obtain
\begin{equation*}
  \frac{B v^2(0)}{A} \ge 0, \qquad \frac{B v^2(\pi)}{A + B \int_0^\pi v^2(t) \,\du t} \le 0
\end{equation*}
with equality in the first inequality (respectively, in the second inequality) if and only if $f = \infty$ (respectively, $F = \infty$); in the case when one of these denominators is zero the corresponding inequality should be omitted. Since at least one of $f$ and $F$ is not $\infty$, this is a contradiction.

For the part concerning the norming constants, we observe that the eigenfunction
\begin{equation} \label{eq:phi_hat}
  \widehat{\varphi}_n(x) := \begin{cases}
    \dfrac{1}{\Lambda - \lambda_n} \left( \varphi^{[1]}_s(x, \lambda_n) - \dfrac{v^{[1]}_s(x)}{v(x)} \varphi(x, \lambda_n) \right), & \ind f \ge 0, \\
    \varphi^{[1]}_s(x, \lambda_n) - \dfrac{v^{[1]}_s(x)}{v(x)} \varphi(x, \lambda_n), & \ind f = -1
  \end{cases}
\end{equation}
satisfies the initial condition $\widehat{\varphi}_n(0) = \widehat{f}_\downarrow(\lambda_n)$ and thus
\begin{equation*}
\widehat{\gamma}_n := \int_0^{\pi} \widehat{\varphi}_n^2(x) \,\du x + \widehat{f}'(\lambda_n) \widehat{f}_\downarrow^2(\lambda_n) + \widehat{F}'(\lambda_n) \widehat{\varphi}_n^2(\pi) = \begin{cases} \dfrac{\gamma_n}{\lambda_n - \Lambda}, & \ind f \ge 0, \\ \gamma_n (\lambda_n - \Lambda), & \ind f = -1 \end{cases}
\end{equation*}
(see the proof of \cite[Theorem~3.3]{G17} for details).
\end{proof}

\subsection{An expression for $\mathring{\boldsymbol{\upgamma}}(s, f, F)$} \label{ss:gamma0}

When at least one of $f$ and $F$ is $\infty$ the last theorem expresses the spectral data of the problems $\mathscr{P}(s, f, F)$ and $\mathscr{P}(\widehat{s}, \widehat{f}, \widehat{F})$ in terms of each other. But if $f \ne \infty$ and $F \ne \infty$ then the information about the smallest eigenvalue $\lambda_0$ of $\mathscr{P}(s, f, F)$ and the corresponding norming constant $\gamma_0$ is lost. We will see in the next subsection that they can be given almost arbitrarily; ``almost'' here means that $\lambda_0$ should of course be strictly smaller than the smallest eigenvalue of the problem $\mathscr{P}(\widehat{s}, \widehat{f}, \widehat{F})$ and $\gamma_0$ should be positive. In this subsection we will obtain an expression for $\gamma_0$ in terms of the transformed problem $\mathscr{P}(\widehat{s}, \widehat{f}, \widehat{F})$.

Let $\widehat{C}(x, \lambda)$ and $\widehat{S}(x, \lambda)$ be the solutions of the equation
\begin{equation} \label{eq:SL_hat}
  - \left( y^{[1]}_{\widehat{s}} \right)'(x) - \widehat{s}(x) y^{[1]}_{\widehat{s}}(x) - \widehat{s}^2(x) y(x) = \lambda y(x)
\end{equation}
satisfying the initial conditions
\begin{equation*}
  \widehat{C}(0, \lambda) = \widehat{S}^{[1]}_{\widehat{s}}(0, \lambda) = 1, \qquad \widehat{S}(0, \lambda) = \widehat{C}^{[1]}_{\widehat{s}}(0, \lambda) = 0.
\end{equation*}
It is easy to see that the function $1 / \varphi(x, \lambda_0)$ satisfies the equation (\ref{eq:SL_hat}) and the initial conditions
\begin{equation*}
  \frac{1}{\varphi(0, \lambda_0)} = \frac{1}{f_\downarrow(\lambda_0)}, \qquad \left( \frac{1}{\varphi} \right)^{[1]}_{\widehat{s}}(0, \lambda_0) = -\frac{\rho}{f_\downarrow(\lambda_0)},
\end{equation*}
where
\begin{equation*}
  \rho := f(\lambda_0) + \frac{2}{\pi} \ln \frac{\varphi(\pi, \lambda_0)}{f_\downarrow(\lambda_0)}.
\end{equation*}
Thus
\begin{equation} \label{eq:1_phi}
  \frac{1}{\varphi(x, \lambda_0)} = \frac{1}{f_\downarrow(\lambda_0)} \left( \widehat{C}(x, \lambda_0) - \rho \widehat{S}(x, \lambda_0) \right).
\end{equation}
Since $\widehat{S}(x, \lambda_0)$ and $1 / \varphi(x, \lambda_0)$ are both solutions of the equation (\ref{eq:SL_hat}), their Wronskian is constant:
\begin{multline} \label{eq:S_phi}
  \widehat{S}^{[1]}_{\widehat{s}}(x, \lambda_0) \frac{1}{\varphi(x, \lambda_0)} - \widehat{S}(x, \lambda_0) \left( \frac{1}{\varphi} \right)^{[1]}_{\widehat{s}}(x, \lambda_0) \\
  = \widehat{S}^{[1]}_{\widehat{s}}(0, \lambda_0) \frac{1}{\varphi(0, \lambda_0)} - \widehat{S}(0, \lambda_0) \left( \frac{1}{\varphi} \right)^{[1]}_{\widehat{s}}(0, \lambda_0) = \frac{1}{f_\downarrow(\lambda_0)}.
\end{multline}
From here one readily obtains (for a.e. $x \in [0, \pi]$)
\begin{equation*}
  \varphi^2(x, \lambda_0) = f_\downarrow(\lambda_0) \left( \widehat{S}(x, \lambda_0) \varphi(x, \lambda_0) \right)',
\end{equation*}
and hence (note that the function in parentheses is absolutely continuous)
\begin{equation*}
  \int_0^{\pi} \varphi^2(x, \lambda_0) \,\du x = f_\downarrow(\lambda_0) \widehat{S}(\pi, \lambda_0) \varphi(\pi, \lambda_0).
\end{equation*}
If $\ind f \ge 1$ and $\ind F \ge 1$ then we have
\begin{equation*}
  f'(\lambda_0) = \frac{1}{\rho - \widehat{f}(\lambda_0)}
\end{equation*}
and
\begin{equation*}
\begin{split}
  F'(\lambda_0) &= \left( \frac{\varphi^{[1]}_s(\pi, \lambda_0)}{\varphi(\pi, \lambda_0)} - \frac{2}{\pi} \ln \frac{\varphi(\pi, \lambda_0)}{f_\downarrow(\lambda_0)} - \widehat{F}(\lambda_0) \right)^{-1} \\
  &= \left( \frac{\left( 1 / \varphi \right)^{[1]}_{\widehat{s}}(\pi, \lambda_0)}{1 / \varphi(\pi, \lambda_0)} - \widehat{F}(\lambda_0) \right)^{-1}.
\end{split}
\end{equation*}
Therefore
\begin{multline*}
  \int_0^{\pi} \varphi^2(x, \lambda_0) \,\du x + F'(\lambda_0) \varphi^2(\pi, \lambda_0) \\
  = \varphi(\pi, \lambda_0) \left( f_\downarrow(\lambda_0) \widehat{S}(\pi, \lambda_0) + \varphi(\pi, \lambda_0) \left( \frac{\left( 1 / \varphi \right)^{[1]}_{\widehat{s}}(\pi, \lambda_0)}{1 / \varphi(\pi, \lambda_0)} - \widehat{F}(\lambda_0) \right)^{-1} \right).
\end{multline*}
Using (\ref{eq:1_phi}) and (\ref{eq:S_phi}) we can write the expression in parentheses as
\begin{equation*}
\begin{split}
  f_\downarrow(\lambda_0) &\widehat{S}(\pi, \lambda_0) + \varphi(\pi, \lambda_0) \left( \frac{\left( 1 / \varphi \right)^{[1]}_{\widehat{s}}(\pi, \lambda_0)}{1 / \varphi(\pi, \lambda_0)} - \widehat{F}(\lambda_0) \right)^{-1} \\
  &= \left( \frac{\left( 1 / \varphi \right)^{[1]}_{\widehat{s}}(\pi, \lambda_0)}{1 / \varphi(\pi, \lambda_0)} - \widehat{F}(\lambda_0) \right)^{-1} \\
  &\phantom{{}={}} \times \left( f_\downarrow(\lambda_0) \widehat{S}(\pi, \lambda_0) \frac{\left( 1 / \varphi \right)^{[1]}_{\widehat{s}}(\pi, \lambda_0)}{1 / \varphi(\pi, \lambda_0)} - f_\downarrow(\lambda_0) \widehat{F}(\lambda_0) \widehat{S}(\pi, \lambda_0) + \varphi(\pi, \lambda_0) \right) \\
  &= f_\downarrow(\lambda_0) \left( \frac{\left( 1 / \varphi \right)^{[1]}_{\widehat{s}}(\pi, \lambda_0)}{1 / \varphi(\pi, \lambda_0)} - \widehat{F}(\lambda_0) \right)^{-1} \left( \widehat{S}^{[1]}_{\widehat{s}}(\pi, \lambda_0) - \widehat{F}(\lambda_0) \widehat{S}(\pi, \lambda_0) \right) \\
  &= \frac{f_\downarrow^2(\lambda_0)}{\left( \varkappa - \rho \right) \varphi(\pi, \lambda_0)},
\end{split}
\end{equation*}
where
\begin{equation*}
  \varkappa := \frac{\widehat{C}^{[1]}_{\widehat{s}}(\pi, \lambda_0) - \widehat{C}(\pi, \lambda_0) \widehat{F}(\lambda_0)}{\widehat{S}^{[1]}_{\widehat{s}}(\pi, \lambda_0) - \widehat{S}(\pi, \lambda_0) \widehat{F}(\lambda_0)}.
\end{equation*}
Taking into account (\ref{eq:r_hat}), we finally obtain
\begin{equation*}
\begin{split}
  \gamma_0 &= f'(\lambda_0) f_\downarrow^2(\lambda_0) + \int_0^{\pi} \varphi^2(x, \lambda_0) \,\du x + F'(\lambda_0) \varphi^2(\pi, \lambda_0) \\
  &= \frac{f_\downarrow^2(\lambda_0)}{\rho - \widehat{f}(\lambda_0)} - \frac{f_\downarrow^2(\lambda_0)}{\rho - \varkappa} = \widehat{f}_\downarrow^2(\lambda_0) \left( \widehat{f}(\lambda_0) - \varkappa \right) \frac{\rho - \widehat{f}(\lambda_0)}{\rho - \varkappa}
\end{split}
\end{equation*}
The above identity can be written as
\begin{equation} \label{eq:gamma_0}
  \gamma_0 = \left( \widehat{f}_\uparrow(\lambda_0) - \varkappa \widehat{f}_\downarrow(\lambda_0) \right) \frac{\rho \widehat{f}_\downarrow(\lambda_0) - \widehat{f}_\uparrow(\lambda_0)}{\rho - \varkappa},
\end{equation}
and in this form it will also hold for the case $\ind f = 0$. If $F$ is constant then $\widehat{F} = \infty$, and the above expression for $\varkappa$ is understood as
\begin{equation*}
  \varkappa := \frac{\widehat{C}(\pi, \lambda_0)}{\widehat{S}(\pi, \lambda_0)}.
\end{equation*}

\subsection{Inverse transformation between problems} \label{ss:inverseisospectral}

Our aim in this subsection is to invert the action of the transformation $\widehat{\mathbf{T}}$. As we will see shortly, this cannot be done in a unique way, and in order to determine the original problem one needs some more information, e.g., its smallest eigenvalue $\lambda_0$ and the corresponding norming constant $\gamma_0$. Theorem~\ref{thm:transformation} shows that the smallest eigenvalue is not removed if and only if one of the boundary conditions of the original problem is Dirichlet. In this case the corresponding norming constant is multiplied (respectively, divided) by two if and only if the first (respectively, the second) boundary condition is Dirichlet.

With these considerations in mind, we consider the union
\begin{equation*}
  \widetilde{\mathcal{S}} := \widetilde{\mathcal{S}}_1 \cup \widetilde{\mathcal{S}}_2 \cup \widetilde{\mathcal{S}}_3
\end{equation*}
of the three disjoint sets
\begin{equation*}
  \widetilde{\mathcal{S}}_1 := \left\{ (\mu, \nu, s, f, F) \colon s \in \mathring{\mathscr{L}}_2(0, \pi),\ f, F \in \mathscr{R},\ \mu < \mathring{\boldsymbol{\uplambda}}(s, f, F),\ \nu > 0 \right\},
\end{equation*}
\begin{multline*}
  \widetilde{\mathcal{S}}_2 := \Big\{ (\mu, \nu, s, f, F) \colon s \in \mathring{\mathscr{L}}_2(0, \pi),\ f \in \mathscr{R}_0,\ F \in \mathscr{R}, \\
  \mu = \mathring{\boldsymbol{\uplambda}}(s, f, F),\ \nu = \mathring{\boldsymbol{\upgamma}}(s, f, F) / 2 \Big\},
\end{multline*}
and
\begin{multline*}
  \widetilde{\mathcal{S}}_3 := \Big\{ (\mu, \nu, s, f, F) \colon s \in \mathring{\mathscr{L}}_2(0, \pi),\ f \in \mathscr{R},\ F \in \mathscr{R}_0, \\
  \mu = \mathring{\boldsymbol{\uplambda}}(s, f, F),\ \nu = 2 \mathring{\boldsymbol{\upgamma}}(s, f, F) \Big\}.
\end{multline*}
We define the transformation
$$\widetilde{\mathbf{T}} \colon \widetilde{\mathcal{S}} \to \mathring{\mathscr{L}}_2(0, \pi) \times \mathscr{R} \times \mathscr{R},\ (\mu, \nu, s, f, F) \mapsto (\widetilde{s}, \widetilde{f}, \widetilde{F})$$
by
\begin{equation} \label{eq:s_f_F_tilde}
\begin{gathered}
  \widetilde{s} := s - \frac{2 u'}{u} + \frac{2}{\pi} \ln \frac{u(\pi)}{u(0)}, \qquad \widetilde{f} := \boldsymbol{\Theta} \left( \Lambda, -\frac{u^{[1]}_s(0)}{u(0)}, -\frac{u^{[1]}_s(0)}{u(0)} + \frac{2}{\pi} \ln \frac{u(\pi)}{u(0)}, f \right), \\
  \widetilde{F} := \boldsymbol{\Theta} \left( \Lambda, \frac{u^{[1]}_s(\pi)}{u(\pi)}, \frac{u^{[1]}_s(\pi)}{u(\pi)} - \frac{2}{\pi} \ln \frac{u(\pi)}{u(0)}, F \right),
\end{gathered}
\end{equation}
where
\begin{equation*}
  \Lambda := \begin{cases} \mu, & (\mu, \nu, s, f, F) \in \widetilde{\mathcal{S}}_1, \\ \mu - 2, & (\mu, \nu, s, f, F) \in \widetilde{\mathcal{S}}_2 \cup \widetilde{\mathcal{S}}_3 \end{cases}
\end{equation*}
and
\begin{equation*}
  u(x) := \begin{cases} C(x, \mu) - \rho S(x, \mu), & (\mu, \nu, s, f, F) \in \widetilde{\mathcal{S}}_1, \\ \varphi(x, \mu - 2), & (\mu, \nu, s, f, F) \in \widetilde{\mathcal{S}}_2, \\ \psi(x, \mu - 2), & (\mu, \nu, s, f, F) \in \widetilde{\mathcal{S}}_3 \end{cases}
\end{equation*}
with
\begin{equation*}
  \rho := \frac{\nu \varkappa + f_\uparrow(\mu) \left( \varkappa f_\downarrow(\mu) - f_\uparrow(\mu) \right)}{\nu + f_\downarrow(\mu) \left( \varkappa f_\downarrow(\mu) - f_\uparrow(\mu) \right)}, \qquad \varkappa := \frac{C^{[1]}_s(\pi, \mu) - C(\pi, \mu) F(\mu)}{S^{[1]}_s(\pi, \mu) - S(\pi, \mu) F(\mu)}.
\end{equation*}

The well-definedness of this transformation on $\widetilde{\mathcal{S}}_1$ can be justified as follows; the two other cases are similar and even simpler. Lemma~\ref{lem:lambda0} implies $\mu < \mathring{\boldsymbol{\uplambda}}(s, f, F) \le \mathring{\boldsymbol{\uplambda}}(s, \infty, F)$, i.e., $\mu$ is not an eigenvalue of the problem $\mathscr{P}(s, \infty, F)$. Since the denominator of the above expression for $\varkappa$ is zero only at the eigenvalues of the problem $\mathscr{P}(s, \infty, F)$, $\varkappa$ is well-defined. Arguing as in the proof of Lemma~\ref{lem:lambda0} we see that $\mathscr{P}(s, \varkappa, F)$ has only one eigenvalue not exceeding $\mathring{\boldsymbol{\uplambda}}(s, \infty, F)$, and hence $\mu = \mathring{\boldsymbol{\uplambda}}(s, \varkappa, F)$. The same proof also shows that if $f \ne \infty$ then
$$
  f(\mu) < f \left( \mathring{\boldsymbol{\uplambda}}(s, f, F) \right) = -\frac{\psi^{[1]}_s \left( 0, \mathring{\boldsymbol{\uplambda}}(s, f, F) \right)}{\psi \left( 0, \mathring{\boldsymbol{\uplambda}}(s, f, F) \right)} < -\frac{\psi^{[1]}_s(0, \mu)}{\psi(0, \mu)} = \varkappa.
$$
Thus the denominator of $\rho$ is strictly positive and $f(\mu) < \rho < \varkappa$. Lemma~\ref{lem:lambda0} implies that $\mu = \mathring{\boldsymbol{\uplambda}}(s, \varkappa, F) < \mathring{\boldsymbol{\uplambda}}(s, \rho, \infty)$, and hence Lemma~\ref{lem:no_zero} shows that $u$ has no zeros on $[0,\pi]$. Moreover, if $F \ne \infty$ then using the asymptotics of the solutions $S$ and $S^{[1]}_s$ we obtain that the denominator of the expression for $\varkappa$ is strictly positive, and thus $\rho < \varkappa$ implies $F(\mu) < u^{[1]}_s(\pi) / u(\pi)$. This shows that the arguments of $\boldsymbol{\Theta}$ in the expressions for both $\widetilde{f}$ and $\widetilde{F}$ in~(\ref{eq:s_f_F_tilde}) belong to its domain.

Now we prove that, in a sense, the two transformations that we defined in this and the previous subsections are inverses of each other.

\begin{theorem} \label{thm:inverse}
The transformations $\widehat{\mathbf{T}}$ and $\widetilde{\mathbf{T}}$ are inverses of each other in the sense that if $(s, f, F) \in \widehat{\mathcal{S}}$ and $(\widehat{s}, \widehat{f}, \widehat{F}) = \widehat{\mathbf{T}}(s, f, F)$ then
$$
  \widetilde{\mathbf{T}} \left( \mathring{\boldsymbol{\uplambda}}(s, f, F), \mathring{\boldsymbol{\upgamma}}(s, f, F), \widehat{s}, \widehat{f}, \widehat{F} \right) = (s, f, F),$$
and conversely if $(\mu, \nu, s, f, F) \in \widetilde{\mathcal{S}}$ then $\widehat{\mathbf{T}} \widetilde{\mathbf{T}}(\mu, \nu, s, f, F) = (s, f, F)$.
\end{theorem}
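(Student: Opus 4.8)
The plan is to exploit the fact that $\widehat{\mathbf{T}}$ and $\widetilde{\mathbf{T}}$ are Darboux transformations of exactly the same shape: each replaces $s$ by $s - 2 w'/w + \frac{2}{\pi}\ln\frac{w(\pi)}{w(0)}$ and each boundary coefficient by a $\boldsymbol{\Theta}$-transformation whose parameters $\tau$, $\rho$ are the logarithmic quasi-derivative $w^{[1]}_s/w$ at the corresponding endpoint, one of them shifted by the constant $\frac{2}{\pi}\ln\frac{w(\pi)}{w(0)}$, where $w$ is $v$ for $\widehat{\mathbf{T}}$ and $u$ for $\widetilde{\mathbf{T}}$. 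By (\ref{eq:ThetaTheta}), applying $\boldsymbol{\Theta}$ twice with the same first argument and with $\tau$ and $\rho$ interchanged is the identity. Hence, in each of the two directions, it suffices to check three things: (i) the value of $\Lambda$ used by the second transformation equals the one used by the first; (ii) the auxiliary function of the second transformation is a nonzero constant multiple of the reciprocal of the auxiliary function of the first; (iii) given (ii), the constant shifts and the pairs $(\tau,\rho)$ line up correctly. Item (iii) is a short formal computation: if $u = c/v$ with $c \ne 0$ constant, then $u'/u = -v'/v$, $u(\pi)/u(0) = v(0)/v(\pi)$, and $u^{[1]}_{\widehat{s}}(x)/u(x) = v^{[1]}_s(x)/v(x) - \frac{2}{\pi}\ln\frac{v(\pi)}{v(0)}$ (using $\widehat{s} = s - 2v'/v + \frac{2}{\pi}\ln\frac{v(\pi)}{v(0)}$); substituting these into the defining formulas shows at once that the transformed potential is $s$ again and that the arguments of the two nested $\boldsymbol{\Theta}$'s have their $\tau$ and $\rho$ slots swapped, so that (\ref{eq:ThetaTheta}) returns $f$ and $F$. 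In the cases where $f$ or $F$ is $\infty$ one uses in addition the two special clauses in the definition of $\boldsymbol{\Theta}$ (the one for $f = \infty$ and the one for $f \equiv \tau$), which respectively create and destroy a Dirichlet condition.

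For the first direction, write $(\widehat{s},\widehat{f},\widehat{F}) = \widehat{\mathbf{T}}(s,f,F)$. By Theorem~\ref{thm:transformation} together with (\ref{eq:I}), (\ref{eq:J}): if $f,F \ne \infty$ then $(\lambda_0,\gamma_0)$ is deleted, $\mathring{\boldsymbol{\uplambda}}(\widehat{s},\widehat{f},\widehat{F}) = \lambda_1 > \mathring{\boldsymbol{\uplambda}}(s,f,F)$ and $\mathring{\boldsymbol{\upgamma}}(s,f,F) > 0$, so the tuple $(\mathring{\boldsymbol{\uplambda}}(s,f,F),\mathring{\boldsymbol{\upgamma}}(s,f,F),\widehat{s},\widehat{f},\widehat{F})$ lies in $\widetilde{\mathcal{S}}_1$; if $f = \infty$ then $\widehat{f} \in \mathscr{R}_0$, the smallest eigenvalue is retained and its norming constant is doubled, so the tuple lies in $\widetilde{\mathcal{S}}_2$; and symmetrically it lies in $\widetilde{\mathcal{S}}_3$ when $F = \infty$. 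In each case the $\Lambda$ of $\widetilde{\mathbf{T}}$ equals the $\Lambda$ of $\widehat{\mathbf{T}}$ (namely $\lambda_0$, respectively $\lambda_0 - 2$), giving (i). For (ii) when $f,F \ne \infty$ we have $v = \varphi(\cdot,\lambda_0)$, and (\ref{eq:1_phi}) writes $1/v$ as $\tfrac{1}{f_\downarrow(\lambda_0)}(\widehat{C}(\cdot,\lambda_0) - \rho\,\widehat{S}(\cdot,\lambda_0))$ with $\rho$ as in Subsection~\ref{ss:gamma0}, whereas the function used by $\widetilde{\mathbf{T}}$ is $u = \widehat{C}(\cdot,\lambda_0) - \widetilde{\rho}\,\widehat{S}(\cdot,\lambda_0)$, and the $\varkappa$ of $\widetilde{\mathbf{T}}$ is literally the $\varkappa$ of Subsection~\ref{ss:gamma0}; so $u = c/v$ reduces to the scalar identity $\widetilde{\rho} = \rho$ once $\nu = \mathring{\boldsymbol{\upgamma}}(s,f,F) = \gamma_0$, and this identity is precisely (\ref{eq:gamma_0}) solved for $\rho$, a one-line manipulation of rational functions. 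When $f = \infty$ (respectively $F = \infty$) one has $v = \psi(\cdot,\lambda_0 - 2)$ (respectively $\varphi(\cdot,\lambda_0-2)$), $\ind\widehat{f} = 0$ (respectively $\ind\widehat{F} = 0$), and a direct comparison of the endpoint data of $1/v$ with those of the corresponding solution $\varphi$ (respectively $\psi$) of $\mathscr{P}(\widehat{s},\widehat{f},\widehat{F})$ at $\lambda_0 - 2$, using the $f = \infty$ clause of $\boldsymbol{\Theta}$ and the fact that $\widehat{f}$ is the constant $-v^{[1]}_s(0)/v(0) + \frac{2}{\pi}\ln\frac{v(\pi)}{v(0)}$, again yields $u = c/v$. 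Then (iii) finishes: $\widetilde{\mathbf{T}}(\mathring{\boldsymbol{\uplambda}}(s,f,F),\mathring{\boldsymbol{\upgamma}}(s,f,F),\widehat{s},\widehat{f},\widehat{F}) = (s,f,F)$.

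For the converse, start from $(\mu,\nu,s,f,F) \in \widetilde{\mathcal{S}}$, put $(\widetilde{s},\widetilde{f},\widetilde{F}) = \widetilde{\mathbf{T}}(\mu,\nu,s,f,F)$, and note from Lemma~\ref{lem:f_hat} and the discussion of well-definedness preceding the theorem that $(\widetilde{s},\widetilde{f},\widetilde{F}) \in \widehat{\mathcal{S}}$, with $\widetilde{f},\widetilde{F} \ne \infty$ in the $\widetilde{\mathcal{S}}_1$ case, $\widetilde{f} = \infty$ and $\widetilde{F} \ne \infty$ in $\widetilde{\mathcal{S}}_2$, and symmetrically in $\widetilde{\mathcal{S}}_3$. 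The key step, which is the analogue of Theorem~\ref{thm:transformation} for $\widetilde{\mathbf{T}}$, is that $\mathring{\boldsymbol{\uplambda}}(\widetilde{s},\widetilde{f},\widetilde{F}) = \mu$ and that $\mathring{\boldsymbol{\upgamma}}(\widetilde{s},\widetilde{f},\widetilde{F})$ matches $\nu$ consistently with the defining relations of the three sets. In $\widetilde{\mathcal{S}}_1$ this follows because $1/u$ satisfies both boundary conditions of $\mathscr{P}(\widetilde{s},\widetilde{f},\widetilde{F})$ at $\lambda = \mu$ — evaluating each $\boldsymbol{\Theta}$ at its first argument produces exactly the relevant $\rho$, which by the quasi-derivative identity in (iii) equals the logarithmic quasi-derivative of $1/u$ at that endpoint — and since $u$, hence $1/u$, has no zeros in $(0,\pi)$, $\mu$ must be the smallest eigenvalue; the norming-constant claim is the reverse reading of (\ref{eq:gamma_0}). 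In $\widetilde{\mathcal{S}}_2$ (and symmetrically $\widetilde{\mathcal{S}}_3$) one argues as for the Dirichlet case of $\widehat{\mathbf{T}}$: $\Lambda = \mu - 2$ is not an eigenvalue of $\mathscr{P}(\widetilde{s},\widetilde{f},\widetilde{F})$, its spectrum is otherwise unchanged, and comparison of endpoint data shows that the corresponding $\psi$ (respectively $\varphi$) of $\mathscr{P}(\widetilde{s},\widetilde{f},\widetilde{F})$ at $\mu - 2$ is a constant multiple of $1/u$. Consequently the $\Lambda$ and the auxiliary function that $\widehat{\mathbf{T}}$ uses on $(\widetilde{s},\widetilde{f},\widetilde{F})$ coincide with the $\Lambda$ of $\widetilde{\mathbf{T}}$ and with a constant multiple of $1/u$, and (iii), applied with the roles of $(v,s)$ and $(u,\widetilde{s})$ interchanged, gives $\widehat{\mathbf{T}}(\widetilde{s},\widetilde{f},\widetilde{F}) = (s,f,F)$. (One could alternatively derive this from the already-proven first direction together with the injectivity of $\widetilde{\mathbf{T}}$, but the direct route avoids a separate injectivity argument.)

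The main obstacle is not the algebra — every computation is short and the rational-function identity behind (ii) is a one-liner — but the case bookkeeping forced by distributional potentials: one must keep the three alternatives $f,F \ne \infty$ / $f = \infty$ / $F = \infty$ in step with the three sets $\widetilde{\mathcal{S}}_1,\widetilde{\mathcal{S}}_2,\widetilde{\mathcal{S}}_3$, track in each which hypothesis of Lemma~\ref{lem:f_hat} holds (whether $\tau = f(\mu)$ or $\tau > f(\mu)$, i.e. whether the index drops or rises), handle the special clauses of $\boldsymbol{\Theta}$ that create or remove Dirichlet conditions, and make sure the two possible values $\lambda_0$ and $\lambda_0 - 2$ of $\Lambda$ agree across the two transformations. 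The one genuinely non-formal ingredient is the analogue of Theorem~\ref{thm:transformation} for $\widetilde{\mathbf{T}}$ used in the converse — that $\mu$ (respectively $\mu - 2$) is the smallest eigenvalue of the transformed problem — which rests on the absence of interior zeros of $u$ already established when $\widetilde{\mathbf{T}}$ was shown to be well-defined.
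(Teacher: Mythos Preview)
Your proposal is correct and follows essentially the same route as the paper: the core idea in both is that the auxiliary functions $v$ and $u$ of the two transformations are reciprocals of each other up to a nonzero constant, after which the potential identity is immediate and the boundary coefficients are recovered via~(\ref{eq:ThetaTheta}). The paper carries this out explicitly only for the first direction with $f,F\ne\infty$, using~(\ref{eq:gamma_0}) to match the parameter~$\rho$ (exactly your item~(ii)), and dismisses the remaining cases and the converse as analogous; your write-up supplies more of that case bookkeeping but does not depart from the argument.
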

\begin{proof}
The main idea of the proof is to show that the solutions $v$ and $u$, used in (\ref{eq:s_f_F_hat}) and (\ref{eq:s_f_F_tilde}) respectively, are inverses of each other up to a constant factor. We will give the details for the first statement when $f$, $F \ne \infty$; the remaining cases and the second statement can be analyzed in an analogous manner.

Denote $\lambda_0 := \mathring{\boldsymbol{\uplambda}}(s, f, F)$ and $(\widetilde{s}, \widetilde{f}, \widetilde{F}) := \widetilde{\mathbf{T}} \left( \lambda_0, \mathring{\boldsymbol{\upgamma}}(s, f, F), \widehat{s}, \widehat{f}, \widehat{F} \right)$. In our case, $\left( \lambda_0, \mathring{\boldsymbol{\upgamma}}(s, f, F), \widehat{s}, \widehat{f}, \widehat{F} \right) \in \widetilde{\mathcal{S}}_1$ and $v(x) = \varphi(x, \lambda_0)$. Comparing the definition of $\widetilde{\mathbf{T}}$ with the expression (\ref{eq:gamma_0}) for $\mathring{\boldsymbol{\upgamma}}(s, f, F)$, we conclude that both $f_\downarrow(\lambda_0) / v(x)$ and $u(x)$ satisfy
the equation (\ref{eq:SL_hat}) with $\lambda = \lambda_0$ and the same initial conditions, and hence $u(x) = f_\downarrow(\lambda_0) / v(x)$. Thus
$$
  \widetilde{s} = \widehat{s} - \frac{2 u'}{u} + \frac{2}{\pi} \ln \frac{u(\pi)}{u(0)} = s - \frac{2 v'}{v} + \frac{2}{\pi} \ln \frac{v(\pi)}{v(0)} - \frac{2 u'}{u} + \frac{2}{\pi} \ln \frac{u(\pi)}{u(0)} = s.
$$
Finally, the identity~(\ref{eq:ThetaTheta}) implies $\widetilde{f} = f$ and $\widetilde{F} = F$.
\end{proof}

We can also prove an analogue of Theorem~\ref{thm:transformation} for the transformation $\widetilde{\mathbf{T}}$.
\begin{theorem} \label{thm:inverse_transformation}
If $\{ \lambda_n, \gamma_n \}_{n \ge 0}$ is the spectral data of the problem $\mathscr{P}(s, f, F)$ and $(\widetilde{s}, \widetilde{f}, \widetilde{F}) = \widetilde{\mathbf{T}}(\mu, \nu, s, f, F)$ then the spectral data of the problem $\mathscr{P}(\widetilde{s}, \widetilde{f}, \widetilde{F})$ is
$$
  \left\{ \lambda_n, \gamma_n (\lambda_n - \Lambda)^I \right\}_{n \ge -J},
$$
where
\begin{equation*}
  I := \ind \widetilde{f} - \ind f, \qquad J := \frac{\ind \widetilde{f} + \ind \widetilde{F}}{2} - \frac{\ind f + \ind F}{2},
\end{equation*}
and we denote $\lambda_{-1} := \mu$ and $\gamma_{-1} := \nu$ in the case when $J = 1$.
\end{theorem}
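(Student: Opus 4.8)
The plan is to deduce the statement from Theorems~\ref{thm:transformation} and~\ref{thm:inverse}. By Theorem~\ref{thm:inverse} we have $(s, f, F) = \widehat{\mathbf{T}}(\widetilde{s}, \widetilde{f}, \widetilde{F})$, so it is enough to run Theorem~\ref{thm:transformation} ``backwards'', applied to the problem $\mathscr{P}(\widetilde{s}, \widetilde{f}, \widetilde{F})$. Denote by $\{\widetilde{\lambda}_m, \widetilde{\gamma}_m\}_{m \ge 0}$ the spectral data of $\mathscr{P}(\widetilde{s}, \widetilde{f}, \widetilde{F})$, and let $\Lambda^\circ$, $I^\circ$, $J^\circ$ be the quantities~(\ref{eq:mu}), (\ref{eq:I}), (\ref{eq:J}) associated with this application of $\widehat{\mathbf{T}}$. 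The discussion preceding Theorem~\ref{thm:inverse} (the well-definedness of $\widetilde{\mathbf{T}}$) shows that on $\widetilde{\mathcal{S}}_1$ both $\widetilde{f}$ and $\widetilde{F}$ have nonnegative index, while on $\widetilde{\mathcal{S}}_2$ (respectively $\widetilde{\mathcal{S}}_3$) exactly one of them equals $\infty$ and the other has nonnegative index; comparing the index shifts (recall that $\widehat{\mathbf{T}}$ lowers a nonnegative index by one and sends $-1$ to $0$) one gets $I^\circ = \ind\widetilde{f} - \ind f = I$ and $J^\circ = \frac{\ind\widetilde{f} + \ind\widetilde{F}}{2} - \frac{\ind f + \ind F}{2} = J$, and that $J = 1$ precisely on $\widetilde{\mathcal{S}}_1$.

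Next I would establish $\Lambda^\circ = \Lambda$, which by~(\ref{eq:mu}) is the same as $\mathring{\boldsymbol{\uplambda}}(\widetilde{s}, \widetilde{f}, \widetilde{F}) = \mu$. On $\widetilde{\mathcal{S}}_2 \cup \widetilde{\mathcal{S}}_3$ this is immediate: there $J^\circ = 0$, so Theorem~\ref{thm:transformation}, applied to $\widehat{\mathbf{T}}(\widetilde{s}, \widetilde{f}, \widetilde{F}) = (s, f, F)$, gives $\widetilde{\lambda}_m = \lambda_m$ for all $m$, hence $\mathring{\boldsymbol{\uplambda}}(\widetilde{s}, \widetilde{f}, \widetilde{F}) = \lambda_0 = \mathring{\boldsymbol{\uplambda}}(s, f, F) = \mu$. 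On $\widetilde{\mathcal{S}}_1$ I would first argue that $\mu$ is an eigenvalue of $\mathscr{P}(\widetilde{s}, \widetilde{f}, \widetilde{F})$: the function $1/u$, with $u$ as in the definition of $\widetilde{\mathbf{T}}$, solves~(\ref{eq:SL}) with $\widetilde{s}$ in place of $s$ and $\lambda = \mu$ — this is the Darboux computation that produced~(\ref{eq:1_phi}), the additive constant in $\widetilde{s}$ being immaterial for the equation — and a short computation with the quasi-derivative, using that $\mu$ is the base point of $\boldsymbol{\Theta}$ in both boundary coefficients (so that $\widetilde{f}(\mu) = -u^{[1]}_s(0)/u(0) + \tfrac{2}{\pi}\ln\tfrac{u(\pi)}{u(0)}$ and similarly at $\pi$), shows that $1/u$ satisfies the boundary conditions of $\mathscr{P}(\widetilde{s}, \widetilde{f}, \widetilde{F})$. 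On the other hand, here $J^\circ = 1$, so Theorem~\ref{thm:transformation} applied to $\widehat{\mathbf{T}}(\widetilde{s}, \widetilde{f}, \widetilde{F}) = (s, f, F)$ shows that the eigenvalues of $\mathscr{P}(\widetilde{s}, \widetilde{f}, \widetilde{F})$ are those of $\mathscr{P}(s, f, F)$ together with one extra, smaller one, namely $\mathring{\boldsymbol{\uplambda}}(\widetilde{s}, \widetilde{f}, \widetilde{F}) < \mathring{\boldsymbol{\uplambda}}(s, f, F)$; as $\mu < \mathring{\boldsymbol{\uplambda}}(s, f, F)$, the eigenvalue $\mu$ can only be this extra one, so $\widetilde{\lambda}_0 = \mu$.

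It remains, on $\widetilde{\mathcal{S}}_1$, to show that the norming constant $\widetilde{\gamma}_0$ corresponding to $\widetilde{\lambda}_0 = \mu$ equals $\nu$. For this I would apply the computation of Subsection~\ref{ss:gamma0} — formula~(\ref{eq:gamma_0}) — to the problem $\mathscr{P}(\widetilde{s}, \widetilde{f}, \widetilde{F})$ and its transform $\widehat{\mathbf{T}}(\widetilde{s}, \widetilde{f}, \widetilde{F}) = (s, f, F)$. Since $\widetilde{\lambda}_0 = \mu$, and the eigenfunction $1/u$ found above, multiplied by $\widetilde{f}_\downarrow(\mu)$, is precisely the canonical solution $\widetilde{\varphi}(\cdot, \mu)$ of this problem, the quantities $\varkappa$ and $\rho$ appearing in~(\ref{eq:gamma_0}) — computed for $\mathscr{P}(\widetilde{s}, \widetilde{f}, \widetilde{F})$ — turn out to coincide with the $\varkappa$ and $\rho$ from the definition of $\widetilde{\mathbf{T}}$, and then~(\ref{eq:gamma_0}) is exactly the defining relation for $\rho$ solved for $\nu$; hence $\mathring{\boldsymbol{\upgamma}}(\widetilde{s}, \widetilde{f}, \widetilde{F}) = \nu$. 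Finally, substituting $I^\circ = I$, $J^\circ = J$, $\Lambda^\circ = \Lambda$, and (when $J = 1$) $\widetilde{\lambda}_0 = \mu$, $\widetilde{\gamma}_0 = \nu$, into the formula of Theorem~\ref{thm:transformation} for the spectral data of $\mathscr{P}(s, f, F) = \widehat{\mathbf{T}}(\widetilde{s}, \widetilde{f}, \widetilde{F})$ and solving for $\{\widetilde{\lambda}_m, \widetilde{\gamma}_m\}$ yields precisely $\left\{ \lambda_n, \gamma_n (\lambda_n - \Lambda)^I \right\}_{n \ge -J}$.

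The main obstacle is the $\widetilde{\mathcal{S}}_1$ case, specifically the two identifications $\mathring{\boldsymbol{\uplambda}}(\widetilde{s}, \widetilde{f}, \widetilde{F}) = \mu$ and $\mathring{\boldsymbol{\upgamma}}(\widetilde{s}, \widetilde{f}, \widetilde{F}) = \nu$: these require carefully tracking the several occurrences of $\boldsymbol{\Theta}$, the quasi-derivatives at the two endpoints, and the rather unwieldy formula for $\rho$, and matching them against the quantities in Subsection~\ref{ss:gamma0}. Once these are in place, everything else is bookkeeping with the indices and with the re-indexing of the spectral sequences.
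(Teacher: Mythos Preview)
Your proposal is correct and follows essentially the same route as the paper's proof: invoke Theorem~\ref{thm:inverse} to obtain $\widehat{\mathbf{T}}(\widetilde{s},\widetilde{f},\widetilde{F})=(s,f,F)$, read off the spectral data with nonnegative indices from Theorem~\ref{thm:transformation}, and in the case $J=1$ verify directly that $\mu$ is the extra eigenvalue (via the eigenfunction $1/u$) and that $\nu$ is its norming constant by matching the definition of $\rho$ in $\widetilde{\mathbf{T}}$ against formula~(\ref{eq:gamma_0}). The paper compresses this into four sentences, whereas you spell out the index and $\Lambda$ bookkeeping in more detail, but there is no difference in substance.
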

\begin{proof}
By Theorem \ref{thm:inverse}, $\widehat{\mathbf{T}}(\widetilde{s}, \widetilde{f}, \widetilde{F}) = (s, f, F)$. Thus the part of the claim concerning the eigenvalues and the norming constants with nonnegative indices immediately follows from Theorem \ref{thm:transformation}. It only remains to consider the case $\mu < \mathring{\boldsymbol{\uplambda}}(s, f, F)$, i.e., $J = 1$. It is straightforward to verify in this case that $\mu$ is an eigenvalue of the problem $\mathscr{P}(\widetilde{s}, \widetilde{f}, \widetilde{F})$ corresponding to the eigenfunction $1 / u$. Again, comparing the definition of $\widetilde{\mathbf{T}}$ with (\ref{eq:gamma_0}) we obtain that $\nu$ is the corresponding norming constant.
\end{proof}

\section{Direct spectral problems} \label{sec:direct}

\subsection{Asymptotics of eigenvalues and norming constants} \label{ss:asymptotics}

In the case of constant boundary conditions the eigenvalues of the problem $\mathscr{P}(s, f, F)$ have the asymptotics (\cite[Theorem 1]{S01}, \cite[Lemma 7.1]{HM03})
\begin{equation*}
  \sqrt{\lambda_n} = n + \frac{\mathcal{N}_{\text{D}}}{2} + \ell_2(1),
\end{equation*}
where $\mathcal{N}_{\text{D}}$ is the number of Dirichlet boundary conditions. The norming constants have the asymptotics
\begin{equation*}
  \gamma_n = \frac{\pi}{2} \left( n + \frac{\mathcal{N}_{\text{D}}}{2} \right)^{-2} \left( 1 + \ell_2(1) \right)
\end{equation*}
if the first boundary condition is Dirichlet, and
\begin{equation*}
  \gamma_n = \frac{\pi}{2} \left( 1 + \ell_2(1) \right)
\end{equation*}
otherwise (see~\cite[Lemmas 2.4 and 7.2]{HM03} but note that the norming constants are defined differently there). Our next theorem shows that the transformation $\widehat{\mathbf{T}}$ allows us to extend these results to the case of boundary conditions dependent on the eigenvalue parameter and write them in a unified manner.

\begin{theorem} \label{thm:asymptotics}
The spectral data of the problem $\mathscr{P}(s, f, F)$ have the asymptotics
$$\begin{aligned}
  \sqrt{\lambda_n} &= n - \frac{\ind f + \ind F}{2} + \ell_2(1), \\
  \gamma_n &= \frac{\pi}{2} \left( n - \frac{\ind f + \ind F}{2} \right)^{2 \ind f} \left( 1 + \ell_2(1) \right).
\end{aligned}$$
\end{theorem}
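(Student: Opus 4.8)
The plan is to bootstrap from the known case of boundary conditions independent of the eigenvalue parameter, quoted at the start of this subsection, by repeatedly applying the Darboux-type transformation $\widehat{\mathbf{T}}$ of Subsection~\ref{ss:isospectral}. Starting from $(s^{(0)}, f^{(0)}, F^{(0)}) := (s, f, F)$, I would set $(s^{(j)}, f^{(j)}, F^{(j)}) := \widehat{\mathbf{T}}(s^{(j-1)}, f^{(j-1)}, F^{(j-1)})$ as long as at least one of $\ind f^{(j-1)}$, $\ind F^{(j-1)}$ is positive; by~(\ref{eq:I}) each such step decreases $\max \{ \ind f^{(j-1)}, 0 \} + \max \{ \ind F^{(j-1)}, 0 \}$ by one, so after finitely many steps, say $N$ (with $N = 0$ if $\ind f, \ind F \le 0$ already), one reaches $\mathscr{P}(s^{(N)}, f^{(N)}, F^{(N)})$ with $\ind f^{(N)}, \ind F^{(N)} \le 0$, i.e., with constant or Dirichlet boundary conditions. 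Write $L := (\ind f + \ind F)/2$ and $L^{(N)} := (\ind f^{(N)} + \ind F^{(N)})/2$, let $\Lambda_j$ be the value of the parameter $\Lambda$ from~(\ref{eq:mu}) used at the $j$-th step (a fixed real number), and let $I_j$, $J_j$ be the integers~(\ref{eq:I}),~(\ref{eq:J}) for that step. Iterating Theorem~\ref{thm:transformation} (at step $j$ the smallest eigenvalue is discarded exactly when $J_j = 1$, and each surviving norming constant is divided by $(\lambda_n - \Lambda_j)^{I_j}$) and telescoping, the spectral data of $\mathscr{P}(s^{(N)}, f^{(N)}, F^{(N)})$ is $\{ \lambda_{n+k}, \gamma_{n+k} / \prod_{j=1}^N (\lambda_{n+k} - \Lambda_j)^{I_j} \}_{n \ge 0}$, where $k := \sum_{j=1}^N J_j = L - L^{(N)}$ and $\sum_{j=1}^N I_j = \ind f - \ind f^{(N)}$.

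For the eigenvalues I would feed the quoted asymptotics $\sqrt{\lambda_n^{(N)}} = n + \mathcal{N}_{\text{D}}/2 + \ell_2(1)$ for the reduced problem into this identity, noting that its number $\mathcal{N}_{\text{D}}$ of Dirichlet conditions equals the number of indices among $\ind f^{(N)}$, $\ind F^{(N)}$ that are $-1$, hence $\mathcal{N}_{\text{D}} = -2 L^{(N)}$; since $\lambda_n^{(N)} = \lambda_{n+k}$ and $k + L^{(N)} = L$, this rearranges to $\sqrt{\lambda_n} = n - L + \ell_2(1)$. For the norming constants I would first rewrite the two quoted formulas uniformly as $\gamma_n^{(N)} = \frac{\pi}{2} (n - L^{(N)})^{2 \ind f^{(N)}} (1 + \ell_2(1))$ (valid since $\ind f^{(N)} \in \{ -1, 0 \}$ and $\mathcal{N}_{\text{D}}/2 = -L^{(N)}$), and then solve for $\gamma_{n+k}$ in the telescoped identity: the eigenvalue asymptotics just proved give $\lambda_{n+k} = (n - L^{(N)})^2 (1 + \ell_2(1))$ and $\lambda_{n+k} - \Lambda_j = \lambda_{n+k} (1 + \ell_2(1))$, so $\prod_{j=1}^N (\lambda_{n+k} - \Lambda_j)^{I_j} = (n - L^{(N)})^{2 (\ind f - \ind f^{(N)})} (1 + \ell_2(1))$, using that a fixed integer power and a finite product of factors $1 + \ell_2(1)$ are again $1 + \ell_2(1)$. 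Multiplying the two displays, cancelling $\ind f^{(N)}$ against $-\ind f^{(N)}$ in the exponent, and using $n - L^{(N)} = (n + k) - L$, I obtain $\gamma_n = \frac{\pi}{2} (n - L)^{2 \ind f} (1 + \ell_2(1))$, as asserted.

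The analytic input — the estimates for $C$, $S$, $C^{[1]}_s$, $S^{[1]}_s$ and the resulting growth bounds — is already available from Subsection~\ref{ss:characteristic_function}, and the manipulations with $\ell_2(1)$ are routine, so the real work lies in the bookkeeping. I expect the main obstacle to be making the telescoping and reindexing airtight across all configurations of the pair of boundary conditions simultaneously — in particular the cases in which one or both of $\ind f$, $\ind F$ is already $\le 0$ (so the corresponding coefficient is never touched, or merely oscillates between $0$ and $-1$), and correctly tracking which steps have $J_j = 1$ versus $J_j = 0$. One must also take care to translate the cited norming-constant formulas of~\cite{HM03} (which use a different normalization) into the present convention before quoting them.
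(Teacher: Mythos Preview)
Your proposal is correct and follows essentially the same strategy as the paper: iterate $\widehat{\mathbf{T}}$ until both boundary coefficients have index $\le 0$, invoke the known constant-coefficient asymptotics, and transport the result back via Theorem~\ref{thm:transformation}. The paper sets $K := \max\{\ind f, \ind F\}$ and presents the argument as a single inductive step followed by ``repeating this argument $K-1$ more times'', whereas you telescope the whole chain at once; your $N$ in fact equals the paper's $K$ in every case. One minor imprecision: the quantity $\max\{\ind f^{(j-1)}, 0\} + \max\{\ind F^{(j-1)}, 0\}$ can drop by $2$ (when both indices are $\ge 1$), not always by $1$ as you state --- but since it is strictly decreasing the termination argument is unaffected.
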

\begin{proof}
According to the discussion at the beginning of the subsection, these formulas hold in the case of constant boundary conditions. Consider now the chain of problems $\mathscr{P}(s^{(k)}, f^{(k)}, F^{(k)})$ defined by
\begin{equation} \label{eq:P_k}
\begin{aligned}
  (s^{(0)}, f^{(0)}, F^{(0)}) &:= (s, f, F), \\
  (s^{(k)}, f^{(k)}, F^{(k)}) &:= \widehat{\mathbf{T}} (s^{(k-1)}, f^{(k-1)}, F^{(k-1)}), \qquad k = 1, 2, \ldots, K,
\end{aligned}
\end{equation}
where $K := \max \{ \ind f, \ind F \}$. Then the last problem $\mathscr{P}(s^{(K)}, f^{(K)}, F^{(K)})$ has constant boundary conditions, and hence its eigenvalues have the asymptotics
\begin{equation*}
  \sqrt{\lambda_n^{(K)}} = n - \frac{\ind f^{(K)} + \ind F^{(K)}}{2} + \ell_2(1).
\end{equation*}
Let $I$ and $J$ be defined by (\ref{eq:I})-(\ref{eq:J}) with $f$ and $F$ replaced by $f^{(K-1)}$ and $F^{(K-1)}$ respectively. Using Theorem~\ref{thm:transformation} we calculate
\begin{equation*}
\begin{split}
  \sqrt{\lambda_n^{(K-1)}} = \sqrt{\lambda_{n-J}^{(K)}} &= n - J - \frac{\ind f^{(K)} + \ind F^{(K)}}{2} + \ell_2(1) \\
  &= n - \frac{\ind f^{(K-1)} + \ind F^{(K-1)}}{2} + \ell_2(1).
\end{split}
\end{equation*}
Repeating this argument $K-1$ more times we get the above asymptotics for $\sqrt{\lambda_n}$.

In a similar manner, from
$$
  \gamma_n^{(K)} = \frac{\pi}{2} \left( n - \frac{\ind f^{(K)} + \ind F^{(K)}}{2} \right)^{2 \ind f^{(K)}} \left( 1 + \ell_2(1) \right),
$$
Theorem~\ref{thm:transformation} and the asymptotics of the eigenvalues we obtain
\begin{equation*}
\begin{split}
  \gamma_n^{(K-1)} &= \gamma_{n-J}^{(K)} \left( \lambda_n^{(K-1)} - \mu \right)^I \\
  &= \frac{\pi}{2} \left( n - J - \frac{\ind f^{(K)} + \ind F^{(K)}}{2} \right)^{2 \ind f^{(K)}} \\
  &\phantom{{}= \frac{\pi}{2}{}} \times \left( n - \frac{\ind f^{(K-1)} + \ind F^{(K-1)}}{2} \right)^{2I} \left( 1 + \ell_2(1) \right) \\
  &= \frac{\pi}{2} \left( n - \frac{\ind f^{(K-1)} + \ind F^{(K-1)}}{2} \right)^{2 \ind f^{(K-1)}} \left( 1 + \ell_2(1) \right).
\end{split}
\end{equation*}
Again, repeating this argument $K-1$ more times yields the above asymptotics for the sequence $\gamma_n$.
\end{proof}

\subsection{Oscillation of eigenfunctions} \label{ss:oscillation}

As is shown in~\cite[Theorem 1]{SB09} (see also~\cite[Theorem 4.4]{HH14}), the Sturm oscillation theorem holds also in the case of distributional potentials with constant boundary conditions, i.e., an eigenfunction corresponding to the $n$-th eigenvalue of the problem $\mathscr{P}(s, f, F)$ with $\ind f$, $\ind F \le 0$ has exactly $n$ zeros in the open interval $(0, \pi)$. By using the transformation $\widehat{\mathbf{T}}$, we will now extend this result to the case of arbitrary $\ind f$ and $\ind F$. But first we need the following auxiliary result.

\begin{lemma} \label{lem:oscillation}
Let $J$ and $\widehat{\varphi}_n$ be defined by the formulas (\ref{eq:J}) and (\ref{eq:phi_hat}) respectively. If the function $\widehat{\varphi}_n(x)$ has $N$ zeros in $(0, \pi)$ then the function $\varphi(x, \lambda_n)$ has exactly $N + J + \boldsymbol{\Pi}_{\widehat{f}}(\lambda_n) + \boldsymbol{\Pi}_{\widehat{F}}(\lambda_n) - \boldsymbol{\Pi}_f(\lambda_n) - \boldsymbol{\Pi}_F(\lambda_n)$ zeros in $(0, \pi)$.
\end{lemma}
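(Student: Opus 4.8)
The plan is to track how the zeros of the eigenfunction change under the Darboux-type transformation $\widehat{\mathbf{T}}$, the key tool being the explicit relation $\widehat{\varphi}_n = \bigl(\varphi^{[1]}_s(\cdot,\lambda_n) - (v^{[1]}_s/v)\varphi(\cdot,\lambda_n)\bigr)/(\Lambda-\lambda_n)$ (up to the sign convention of~(\ref{eq:phi_hat})). Since $v = \varphi(\cdot,\Lambda)$ (or $\psi(\cdot,\Lambda)$) has no zeros on $[0,\pi]$ by Lemma~\ref{lem:no_zero} and Remark~\ref{rem:pi_f} (because $\Lambda < \mathring{\boldsymbol{\uplambda}}(s,f,F)$ when $f,F\ne\infty$, and $\Lambda=\lambda_0-2$ otherwise), the function $\widehat{\varphi}_n$ is, up to a nonvanishing factor $1/v$, just the Wronskian-type combination $w(x) := v(x)\varphi'(x,\lambda_n) - v'(x)\varphi(x,\lambda_n) = v(x)\varphi^{[1]}_s(x,\lambda_n) - v^{[1]}_s(x)\varphi(x,\lambda_n)$. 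So counting zeros of $\widehat{\varphi}_n$ in $(0,\pi)$ is the same as counting zeros of $w$ there.

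First I would establish the elementary interlacing fact: between two consecutive zeros of $\varphi(\cdot,\lambda_n)$ in $[0,\pi]$ there is exactly one zero of $w$, and conversely. This is the standard Sturm-type argument — at a zero $x_0$ of $\varphi(\cdot,\lambda_n)$ one has $w(x_0) = v(x_0)\varphi^{[1]}_s(x_0,\lambda_n) \ne 0$, and differentiating $w/\varphi$ (or $\varphi/v$, which solves a first-order ODE away from zeros of $v$) shows the quotient is strictly monotone between consecutive common-zero-free points, forcing the interlacing. The subtlety is what happens at the endpoints $x=0$ and $x=\pi$, since the boundary behaviour of $\varphi$ and of $\widehat{\varphi}_n$ is governed by $f,\widehat f$ and $F,\widehat F$ respectively, and $\varphi(0,\lambda_n)=f_\downarrow(\lambda_n)$ vanishes precisely when $\lambda_n$ is a pole of $f$, i.e. when $\boldsymbol{\Pi}_f$ or $\boldsymbol{\Pi}_F$ jumps; similarly $\widehat{\varphi}_n(0)=\widehat f_\downarrow(\lambda_n)$. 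So the count of interior zeros of $\varphi(\cdot,\lambda_n)$ equals the count of interior zeros of $\widehat{\varphi}_n$ plus a correction coming from whether $\varphi$ (resp. $\widehat{\varphi}_n$) vanishes at $0$ or at $\pi$.

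Concretely, I would write the number of zeros of $\varphi(\cdot,\lambda_n)$ on the closed interval $[0,\pi]$, minus the number of zeros of $\widehat{\varphi}_n$ on $[0,\pi]$, as a bounded quantity determined by the signs of $\varphi^{[1]}_s$, $\widehat{\varphi}_n^{[1]}$, etc., at the two endpoints, and then read off the contribution of each endpoint. At $x=0$: a zero of $\varphi$ there occurs iff $\lambda_n\in\{h_1,\dots,h_d\}$ (a pole of $f$), a zero of $\widehat{\varphi}_n$ there iff $\lambda_n$ is a pole of $\widehat f$; by Lemma~\ref{lem:f_hat} the poles of $f$ and $\widehat f$ interlace and the relative position ($\mathring{\boldsymbol{\uppi}}(f)\gtrless\mathring{\boldsymbol{\uppi}}(\widehat f)$) is controlled, and over the range $\lambda\le\lambda_n$ the net effect is exactly $\boldsymbol{\Pi}_{\widehat f}(\lambda_n)-\boldsymbol{\Pi}_f(\lambda_n)$ (and the analogous statement at $x=\pi$ gives $\boldsymbol{\Pi}_{\widehat F}(\lambda_n)-\boldsymbol{\Pi}_F(\lambda_n)$). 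Finally the term $J$ appears because when $f,F\ne\infty$ we have $\Lambda=\lambda_0$, the transformed problem has lost its smallest eigenvalue, so the indexing is shifted by $J=1$, and this shift by one eigenvalue corresponds — via the already-established Sturm theorem for the chain terminating in constant boundary conditions — to one extra oscillation; when $\ind f=-1$ or $\ind F=-1$ we have $J=0$ and no shift. Assembling these three contributions gives exactly $N + J + \boldsymbol{\Pi}_{\widehat f}(\lambda_n) + \boldsymbol{\Pi}_{\widehat F}(\lambda_n) - \boldsymbol{\Pi}_f(\lambda_n) - \boldsymbol{\Pi}_F(\lambda_n)$.

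The main obstacle I anticipate is the careful bookkeeping at the endpoints: one must handle simultaneously the cases $\ind f\ge 2$ (so $f$ has poles and $f_\downarrow$ can vanish at $\lambda_n$), $\ind f\in\{0,1\}$, and $\ind f=-1$ (Dirichlet, $\varphi$ replaced by the appropriate solution), and match each against the corresponding case for $\widehat f$, using the precise sign information from~(\ref{eq:r_hat}) and from Lemma~\ref{lem:f_hat} about which of $\mathring{\boldsymbol{\uppi}}(f)$, $\mathring{\boldsymbol{\uppi}}(\widehat f)$ is larger. Getting the $+1$ versus $0$ in each endpoint contribution right — so that the pole-counting functions $\boldsymbol{\Pi}$ emerge cleanly rather than off by a constant — is where the real work lies; the interior interlacing is routine Sturm theory once $v\ne 0$ is invoked.
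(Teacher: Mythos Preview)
Your overall strategy---interior interlacing of the zeros of $\varphi(\cdot,\lambda_n)$ and $\widehat{\varphi}_n$ via the Wronskian $w=v\varphi^{[1]}_s(\cdot,\lambda_n)-v^{[1]}_s\varphi(\cdot,\lambda_n)$, followed by endpoint bookkeeping---is exactly the route the paper takes (there the interlacing is phrased through the integral identity $\int_{x_k}^{x_{k+1}}\varphi(x,\lambda_n)\varphi(x,\Lambda)\,\du x=\widehat{\varphi}_n\,v\big|_{x_k}^{x_{k+1}}$, which is the integrated form of your Wronskian relation). The interior count of $N-1$ zeros in $(x_1,x_N)$ is correct.

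The gap is in your account of the term $J$. You attribute it to the eigenvalue indexing shift and invoke ``the already-established Sturm theorem for the chain terminating in constant boundary conditions'' to say that losing one eigenvalue means one extra oscillation. This is circular: the present lemma is precisely the inductive step used to \emph{prove} Theorem~\ref{thm:oscillation}, so you cannot appeal to that theorem here. More importantly, the lemma is a pointwise comparison at a \emph{fixed} $\lambda_n$ between two functions $\varphi(\cdot,\lambda_n)$ and $\widehat{\varphi}_n$; no re-indexing is involved in that comparison at all.

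In the paper's proof the $J$ arises purely from the right-endpoint interval $(x_N,\pi)$. When $F\ne\infty$ (so $J=1$), the analysis there mirrors the left endpoint and contributes $1-\bigl(\boldsymbol{\Pi}_F(\lambda_n)-\boldsymbol{\Pi}_{\widehat F}(\lambda_n)\bigr)$ zeros; when $F=\infty$ (so $J=0$), one has $\varphi(\pi,\lambda_n)=0$ and there is no zero of $\varphi(\cdot,\lambda_n)$ in $(x_N,\pi)$, while also $\boldsymbol{\Pi}_F=\boldsymbol{\Pi}_{\widehat F}=0$. In both cases the contribution is $J-\bigl(\boldsymbol{\Pi}_F(\lambda_n)-\boldsymbol{\Pi}_{\widehat F}(\lambda_n)\bigr)$. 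Likewise the left endpoint contributes $1-\bigl(\boldsymbol{\Pi}_f(\lambda_n)-\boldsymbol{\Pi}_{\widehat f}(\lambda_n)\bigr)$ (not just $\boldsymbol{\Pi}_{\widehat f}-\boldsymbol{\Pi}_f$, which lies in $\{-1,0\}$ and cannot itself be a zero count). Summing $(N-1)+1+J$ with the $\boldsymbol{\Pi}$-corrections gives the stated formula. So your sketch becomes correct once you drop the indexing-shift argument and instead carry out the sign analysis on $(x_N,\pi)$ exactly as you outlined for $(0,x_1)$, distinguishing $F=\infty$ from $F\ne\infty$.
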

\begin{proof}
We give the proof for the case $f \ne \infty$; in the case when $f = \infty$ we only need to consider $\psi$ instead of $\varphi$. Let $\Lambda$ be defined by~(\ref{eq:mu}). Denote by $x_1$, $\ldots$, $x_N$ the zeros of the function $\widehat{\varphi}_n(x)$ in $(0, \pi)$. Then for each $k = 1$, $\ldots$, $N-1$ we have
$$
  \int_{x_k}^{x_{k+1}} \varphi(x, \lambda_n) \varphi(x, \Lambda) \,\du x = \widehat{\varphi}_n(x_{k+1}) \varphi(x_{k+1}, \Lambda) - \widehat{\varphi}_n(x_k) \varphi(x_k, \Lambda) = 0
$$
and hence the function $\varphi(x, \lambda_n)$ has a zero in $(x_k, x_{k+1})$. Similarly we obtain that between any two zeros of $\varphi(x, \lambda_n)$ there is a zero of $\widehat{\varphi}_n(x)$. This implies that the zeros of these two functions strictly interlace; in particular, they have no common zeros. Thus $\varphi(x, \lambda_n)$ has $N-1$ zeros in $(x_1, x_N)$.

Consider now the interval $(0, x_1)$. It is obvious that if $f_\downarrow(\lambda_n) = \varphi(0, \lambda_n) = 0$ then $\varphi(x, \lambda_n)$ does not have a zero in $(0, x_1)$, and if $\widehat{f}_\downarrow(\lambda_n) = \widehat{\varphi}_n(0) = 0$ then $\varphi(x, \lambda_n)$ has a zero in $(0, x_1)$. In the case when $\widehat{f}_\downarrow(\lambda_n) f_\downarrow(\lambda_n) \ne 0$ we see that if $\varphi(x, \lambda_n)$ does not have a zero in $(0, x_1)$ then
$$
  \int_{0}^{x_1} \varphi(x, \lambda_n) \varphi(x, \Lambda) \,\du x = -\widehat{f}_\downarrow(\lambda_n) f_\downarrow(\Lambda)
$$
and $\varphi(0, \lambda_n) \varphi(0, \Lambda) = f_\downarrow(\lambda_n) f_\downarrow(\Lambda)$ must have the same sign, and if $\varphi(x, \lambda_n)$ has a zero $x_0 \in (0, x_1)$ then
$$
  \int_{0}^{x_0} \frac{\widehat{\varphi}_n(x)}{\varphi(x, \Lambda)} \,\du x = -\frac{f_\downarrow(\lambda_n)}{(\Lambda - \lambda_n) f_\downarrow(\Lambda)}
$$
and $\widehat{\varphi}_n(0) / \varphi(0, \Lambda) = \widehat{f}_\downarrow(\lambda_n) / f_\downarrow(\Lambda)$ must have the same sign (recall that $\Lambda < \lambda_n$). Therefore the function $\varphi(x, \lambda_n)$ has a zero in $(0, x_1)$ if and only if $\widehat{f}_\downarrow(\lambda_n) f_\downarrow(\lambda_n) > 0$ or $\widehat{f}_\downarrow(\lambda_n) = 0$, i.e., if and only if the functions $f$ and $\widehat{f}$ have the same number of poles not exceeding $\lambda_n$. Thus the number of zeros of $\varphi(x, \lambda_n)$ in $(0, x_1)$ equals $1 - \left( \boldsymbol{\Pi}_f(\lambda_n) - \boldsymbol{\Pi}_{\widehat{f}}(\lambda_n) \right)$.

A similar assertion holds for the interval $(x_N, \pi)$ and the functions $F$ and $\widehat{F}$ if the boundary condition at $\pi$ is not Dirichlet (i.e., $J = 1$). Otherwise, if the boundary condition at $\pi$ is Dirichlet (i.e., $J = 0$), the function $\varphi(x, \lambda_n)$ does not have a zero in $(x_N, \pi)$, but $\boldsymbol{\Pi}_F(\lambda_n) = \boldsymbol{\Pi}_{\widehat{F}}(\lambda_n) = 0$. Thus the number of zeros of $\varphi(x, \lambda_n)$ in $(x_N, \pi)$ equals $J - \left( \boldsymbol{\Pi}_F(\lambda_n) - \boldsymbol{\Pi}_{\widehat{F}}(\lambda_n) \right)$. This concludes the proof.
\end{proof}

We can now proceed to our main oscillation result.

\begin{theorem} \label{thm:oscillation}
An eigenfunction of the problem $\mathscr{P}(s, f, F)$ corresponding to the eigenvalue $\lambda_n$ has exactly $n - \boldsymbol{\Pi}_f(\lambda_n) - \boldsymbol{\Pi}_F(\lambda_n)$ zeros in $(0, \pi)$.
\end{theorem}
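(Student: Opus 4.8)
The plan is to run the same inductive reduction used in the proof of Theorem~\ref{thm:asymptotics}: apply the transformation $\widehat{\mathbf{T}}$ repeatedly to strip one layer off each eigenvalue-dependent boundary condition until both boundary conditions become independent of the parameter, where the assertion reduces to the classical Sturm oscillation theorem for distributional potentials quoted at the beginning of the subsection. Formally I would induct on $K := \max\{\ind f, \ind F\}$. For the base case $K \le 0$ both $\boldsymbol{\Pi}_f$ and $\boldsymbol{\Pi}_F$ vanish identically, and an eigenfunction for $\lambda_n$ has exactly $n = n - \boldsymbol{\Pi}_f(\lambda_n) - \boldsymbol{\Pi}_F(\lambda_n)$ zeros in $(0,\pi)$ by that result.

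For the inductive step, suppose $K \ge 1$ and that the theorem holds for all problems with maximal boundary index $K-1$. Put $(\widehat{s},\widehat{f},\widehat{F}) := \widehat{\mathbf{T}}(s,f,F)$; then $\max\{\ind\widehat{f},\ind\widehat{F}\} = K-1$, so the inductive hypothesis applies to $\mathscr{P}(\widehat{s},\widehat{f},\widehat{F})$. Let $J$ be the integer in~(\ref{eq:J}); since $J \le 1$, for every $n \ge J$ Theorem~\ref{thm:transformation} identifies $\lambda_n$ with the $(n-J)$-th eigenvalue of $\mathscr{P}(\widehat{s},\widehat{f},\widehat{F})$, and its eigenfunction may be taken to be the function $\widehat{\varphi}_n$ of~(\ref{eq:phi_hat}), as verified in the proof of Theorem~\ref{thm:transformation}. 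By the inductive hypothesis $\widehat{\varphi}_n$ has exactly $N := (n-J) - \boldsymbol{\Pi}_{\widehat{f}}(\lambda_n) - \boldsymbol{\Pi}_{\widehat{F}}(\lambda_n)$ zeros in $(0,\pi)$, and feeding this into Lemma~\ref{lem:oscillation} I would obtain that an eigenfunction for $\lambda_n$ has exactly
\[
  N + J + \boldsymbol{\Pi}_{\widehat{f}}(\lambda_n) + \boldsymbol{\Pi}_{\widehat{F}}(\lambda_n) - \boldsymbol{\Pi}_f(\lambda_n) - \boldsymbol{\Pi}_F(\lambda_n) = n - \boldsymbol{\Pi}_f(\lambda_n) - \boldsymbol{\Pi}_F(\lambda_n)
\]
zeros in $(0,\pi)$, the $J$-term and the hatted pole counts cancelling. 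This covers every $n \ge J$, hence every $n \ge 1$ (and also $n = 0$ whenever $J = 0$).

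The induction leaves out $n = 0$ exactly when $J = 1$, i.e. when $\ind f, \ind F \ge 0$, since then $\widehat{\mathbf{T}}$ discards $\lambda_0$; this case I would dispose of by hand. By Remark~\ref{rem:pi_f}, $\lambda_0 = \mathring{\boldsymbol{\uplambda}}(s,f,F) < \min\{\mathring{\boldsymbol{\uppi}}(f),\mathring{\boldsymbol{\uppi}}(F)\}$, so $\boldsymbol{\Pi}_f(\lambda_0) = \boldsymbol{\Pi}_F(\lambda_0) = 0$; and Lemma~\ref{lem:lambda0} (with $f \preccurlyeq f$ and $\infty \preccurlyeq F$, the latter since $\infty$ is the least element of $\preccurlyeq$) gives $\mathring{\boldsymbol{\uplambda}}(s,f,\infty) \ge \lambda_0$, whence Lemma~\ref{lem:no_zero} shows that $\varphi(x,\lambda_0)$ has no zeros in $(0,\pi)$ — precisely $0 - \boldsymbol{\Pi}_f(\lambda_0) - \boldsymbol{\Pi}_F(\lambda_0)$. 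The only checks along the way are that $\max\{\ind\widehat{f},\ind\widehat{F}\}$ really drops to $K-1$ and that the pole-counting and $J$ contributions telescope as claimed; the genuine analytic content — interlacing of zeros, the behaviour on the two extreme subintervals, and the effect of the transformation on poles — is already isolated in Lemma~\ref{lem:oscillation}, so I do not expect any serious obstacle beyond this bookkeeping and the separate treatment of the ground state.
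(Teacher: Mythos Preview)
Your proof is correct and follows essentially the same route as the paper: reduce to constant boundary conditions by iterating $\widehat{\mathbf{T}}$, invoke the classical oscillation theorem there, and use Lemma~\ref{lem:oscillation} at each step to carry the zero count back, with the $J$-terms and the pole counts telescoping. Your explicit inductive framing and separate treatment of the ground state when $J=1$ (via Remark~\ref{rem:pi_f}, Lemma~\ref{lem:lambda0} and Lemma~\ref{lem:no_zero}) is in fact a bit more careful than the paper's argument, which silently restricts to $n\ge J'$ when writing $\lambda_n=\lambda_{n-J'}^{(K)}$.
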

\begin{proof}
Consider again the problems $\mathscr{P}(s^{(k)}, f^{(k)}, F^{(k)})$ defined by (\ref{eq:P_k}). Since the last problem $\mathscr{P}(s^{(K)}, f^{(K)}, F^{(K)})$ has constant boundary conditions, its eigenfunction corresponding to the eigenvalue $\lambda_m^{(K)}$ has $m$ zeros in the open interval $(0, \pi)$ for each $m \ge 0$. On the other hand, the constancy of $f^{(K)}$ and $F^{(K)}$ implies $\boldsymbol{\Pi}_{f^{(K)}}(\lambda) \equiv 0$ and $\boldsymbol{\Pi}_{F^{(K)}}(\lambda) \equiv 0$, and hence the statement of the theorem holds in this case. Let $J^{(k)}$ be defined by (\ref{eq:J}) with $f$ and $F$ replaced by $f^{(k)}$ and $F^{(k)}$ respectively. By successive applications of Theorem~\ref{thm:transformation}, it follows that $\lambda_n = \lambda_{n - J'}^{(K)}$, where $J' := \sum_{k=0}^{K-1} J^{(k)}$.
Applying Lemma~\ref{lem:oscillation} successively to the problems $\mathscr{P}(s^{(K-1)}, f^{(K-1)}, F^{(K-1)})$, $\ldots$, $\mathscr{P}(s^{(0)}, f^{(0)}, F^{(0)})$, we finally obtain that an eigenfunction of $\mathscr{P}(s, f, F)$ corresponding to the eigenvalue $\lambda_n$ has
\begin{multline*}
  n - J' + \sum_{k=0}^{K-1} \left( J^{(k)} + \boldsymbol{\Pi}_{f^{(k+1)}}(\lambda_n) + \boldsymbol{\Pi}_{F^{(k+1)}}(\lambda_n) - \boldsymbol{\Pi}_{f^{(k)}}(\lambda_n) - \boldsymbol{\Pi}_{F^{(k)}}(\lambda_n) \right) \\ = n - \boldsymbol{\Pi}_f(\lambda_n) - \boldsymbol{\Pi}_F(\lambda_n)
\end{multline*}
zeros in $(0, \pi)$.
\end{proof}

\subsection{On problems with a common boundary condition} \label{ss:two_problems}

Let $\alpha \ne 0$ be some real number and together with the problem $\mathscr{P}(s, f, F)$ consider the problem $\mathscr{P}(s, f + \alpha, F)$. Denote the eigenvalues of the latter problem by $\mu_n$. Theorem~\ref{thm:asymptotics} shows that they have the same asymptotics as $\lambda_n$. In this subsection we study further properties of these two sequences. We will use these results in Subsection~\ref{ss:bytwospectra}.

Throughout this subsection we assume that $\ind f \ge 0$ so that the problems $\mathscr{P}(s, f, F)$ and $\mathscr{P}(s, f + \alpha, F)$ are different. We also assume that no eigenvalue of $\mathscr{P}(s, f, F)$ is a pole of $f$ or, which is the same, the spectra of the problems $\mathscr{P}(s, f, F)$ and $\mathscr{P}(s, f + \alpha, F)$ do not intersect. 
Obviously, $\mu_n$ are the zeros of the (characteristic) function
\begin{equation*}
  \xi(\lambda) := F_\uparrow(\lambda) \theta(\pi, \lambda) - F_\downarrow(\lambda) \theta^{[1]}_s(\pi, \lambda) = f_\downarrow(\lambda) \psi^{[1]}_s(0, \lambda) + \left( f_\uparrow(\lambda) + \alpha f_\downarrow(\lambda) \right) \psi(0, \lambda),
\end{equation*}
where $\theta(x, \lambda)$ is the solution of (\ref{eq:SL}) satisfying the initial conditions
\begin{equation} \label{eq:theta}
  \theta(0, \lambda) = f_\downarrow(\lambda), \qquad \theta^{[1]}_s(0, \lambda) = -f_\uparrow(\lambda) - \alpha f_\downarrow(\lambda).
\end{equation}
Together with the expression for $\chi$ from Subsection~\ref{ss:characteristic_function} this gives
\begin{equation} \label{eq:xi-chi}
  \xi(\lambda) - \chi(\lambda) = \alpha f_\downarrow(\lambda) \psi(0, \lambda).
\end{equation}

The function
\begin{equation*}
  m(\lambda) := -\frac{\xi(\lambda)}{\chi(\lambda)}
\end{equation*}
satisfies the identity $m(\overline{\lambda}) = \overline{m(\lambda)}$ and is a meromorphic function with poles at $\lambda_n$ and zeros at $\mu_n$. For nonreal values of $\lambda$ the solution
\begin{equation*}
  y(x, \lambda) := \theta(x, \lambda) + m(\lambda) \varphi(x, \lambda)
\end{equation*}
satisfies the boundary condition
\begin{equation*}
  F_\uparrow(\lambda) y(\pi, \lambda) - F_\downarrow(\lambda) y^{[1]}_s(\pi, \lambda) = 0.
\end{equation*}
Taking into account~(\ref{eq:phi_psi}) and (\ref{eq:theta}), we obtain
\begin{multline*}
  \left. (\lambda - \mu) \int_0^{\pi} y(x, \lambda) y(x, \mu) \,\du x = \left( y(x, \lambda) y^{[1]}_s(x, \mu) - y^{[1]}_s(x, \lambda) y(x, \mu) \right) \right|_0^{\pi} \\
  = \left( F(\mu) - F(\lambda) \right) y(\pi, \lambda) y(\pi, \mu) + \alpha f_\downarrow(\lambda) f_\downarrow(\mu) \left( m(\lambda) - m(\mu) \right) \\
  + \left( f_\downarrow(\lambda) f_\uparrow(\mu) - f_\downarrow(\mu) f_\uparrow(\lambda) \right) \left( 1 + m(\lambda) \right) \left( 1 + m(\mu) \right).
\end{multline*}
For $\mu = \overline{\lambda}$ this implies
\begin{equation*}
\begin{split}
  \alpha \frac{\im m(\lambda)}{\im \lambda} &= \frac{1}{\left| f_\downarrow(\lambda) \right|^2} \int_0^{\pi} \left| y(x, \lambda) \right|^2 \,\du x \\
  &+ \left| \frac{y(\pi, \lambda)}{f_\downarrow(\lambda)} \right|^2 \frac{\im F(\lambda)}{\im \lambda} + \left| 1 + m(\lambda) \right|^2 \frac{\im f(\lambda)}{\im \lambda} > 0.
\end{split}
\end{equation*}
Thus $\alpha m(\lambda)$ is a Herglotz--Nevanlinna function, and hence its zeros $\mu_n$ and poles $\lambda_n$ interlace.

Using~(\ref{eq:phi_psi}), (\ref{eq:beta}) and the constancy of the Wronskian we obtain
\begin{align*}
  \xi(\lambda_n) &= F_\uparrow(\lambda_n) \theta(\pi, \lambda_n) - F_\downarrow(\lambda_n) \theta^{[1]}_s(\pi, \lambda_n) \\
  &= \beta_n \left( \varphi^{[1]}_s(\pi, \lambda_n) \theta(\pi, \lambda_n) - \varphi(\pi, \lambda_n) \theta^{[1]}_s(\pi, \lambda_n) \right) \\
  &= \beta_n \left( \varphi^{[1]}_s(0, \lambda_n) \theta(0, \lambda_n) - \varphi(0, \lambda_n) \theta^{[1]}_s(0, \lambda_n) \right) = \alpha \beta_n f_\downarrow^2(\lambda_n).
\end{align*}
Together with~(\ref{eq:chi_beta_gamma}) this yields
\begin{equation*}
  \gamma_n = \frac{\alpha f_\downarrow^2(\lambda_n) \chi'(\lambda_n)}{\xi(\lambda_n)}.
\end{equation*}
We will need this formula in order to solve the two-spectra inverse problem in Subsection~\ref{ss:bytwospectra}, but for now we will use it to obtain more refined asymptotics for the difference $\sqrt{\lambda_n} - \sqrt{\vphantom{\lambda_n} \mu_n}$. The mean value theorem implies
\begin{equation} \label{eq:xi_lambda_mu}
  \xi(\lambda_n) = \xi(\lambda_n) - \xi(\mu_n) = \left( \sqrt{\lambda_n} - \sqrt{\vphantom{\lambda_n} \mu_n} \right) \left( \sqrt{\lambda_n} + \sqrt{\vphantom{\lambda_n} \mu_n} \right) \xi'(\zeta_n)
\end{equation}
for $\zeta_n \in \left[ \lambda_n, \mu_n \right]$. Thus
\begin{equation*}
  \sqrt{\lambda_n} - \sqrt{\vphantom{\lambda_n} \mu_n} = \frac{\alpha f_\downarrow^2(\lambda_n) \chi'(\lambda_n)}{\left( \sqrt{\lambda_n} + \sqrt{\vphantom{\lambda_n} \mu_n} \right) \gamma_n \xi'(\zeta_n)}.
\end{equation*}
Using the infinite product representations
\begin{equation*}
  \chi(\lambda) = -\prod_{n < L} (\lambda_n - \lambda) \prod_{n = L} \pi (\lambda_n - \lambda) \prod_{n > L} \frac{\lambda_n - \lambda}{(n - L)^2}
\end{equation*}
and
\begin{equation*}
  \xi(\lambda) = -\prod_{n < L} (\mu_n - \lambda) \prod_{n = L} \pi (\mu_n - \lambda) \prod_{n > L} \frac{\mu_n - \lambda}{(n - L)^2}
\end{equation*}
we obtain (see \cite[Appendix A]{G18} and \cite[Lemma 3.2]{HM04a} for details)
\begin{equation} \label{eq:chi_xi}
  \chi'(\lambda_n) = (-1)^n \left( n - L \right)^{2L} \left( \frac{\pi}{2} + \ell_2(1) \right), \quad \xi'(\zeta_n) = (-1)^n \left( n - L \right)^{2L} \left( \frac{\pi}{2} + \ell_2(1) \right),
\end{equation}
where
\begin{equation*}
  L := \frac{\ind f + \ind F}{2}.
\end{equation*}
Now using the asymptotics of $\gamma_n$ from Theorem~\ref{thm:asymptotics}, we finally obtain
\begin{equation*}
  \sqrt{\lambda_n} - \sqrt{\vphantom{\lambda_n} \mu_n} = \left( n - L \right)^{-2 r - 1} \left( \frac{\alpha \left( h'_0 \right)^2}{\pi} + \ell_2(1) \right),
\end{equation*}
where
\begin{equation*}
  r := \ind f - 2 d = \begin{cases} 1, & \ind f \text{ is odd,} \\ 0, & \ind f \text{ is even.} \end{cases}
\end{equation*}

\section{Inverse spectral problems} \label{sec:inverse}

\subsection{Inverse problem by eigenvalues and norming constants} \label{ss:byspectraldata}

Theorem~\ref{thm:asymptotics} shows that the spectral data of a problem of the form (\ref{eq:SL})-(\ref{eq:boundary}) necessarily satisfies the conditions
\begin{equation} \label{eq:increasing}
  \lambda_0 < \lambda_1 < \lambda_2 < \ldots, \qquad \gamma_n > 0, \quad n \ge 0
\end{equation}
and
\begin{equation} \label{eq:asymptotics}
  \sqrt{\lambda_n} = n - \frac{M + N}{2} + \ell_2(1), \qquad \gamma_n = \frac{\pi}{2} \left( n - \frac{M + N}{2} \right)^{2M} \left( 1 + \ell_2(1) \right)
\end{equation}
for some integers $M$, $N \ge -1$. The aim of this subsection is to prove that these necessary conditions are also sufficient for sequences of real numbers $\{ \lambda_n \}_{n \ge 0}$ and $\{ \gamma_n \}_{n \ge 0}$ to be the eigenvalues and the norming constants of a problem of the form (\ref{eq:SL})-(\ref{eq:boundary}).

If $\{ \lambda_n \}_{n \ge 0}$ and $\{ \gamma_n \}_{n \ge 0}$ are two sequences of real numbers satisfying the above conditions with $-1 \le M$, $N \le 0$, then there exists a unique boundary value problem $\mathscr{P}(s, f, F)$ with constant boundary conditions having these sequences as its spectral data (see, e.g., \cite[Corollary 5.4 and Theorem 7.4]{HM03}). The transformations defined in Section~\ref{sec:transformations} allow us to extend this result to the case of boundary conditions (\ref{eq:boundary}).

\begin{theorem} \label{thm:by_spectral_data}
Let $\{ \lambda_n \}_{n \ge 0}$ and $\{ \gamma_n \}_{n \ge 0}$ be sequences of real numbers satisfying the conditions (\ref{eq:increasing}) and (\ref{eq:asymptotics}). Then there exists a unique boundary value problem $\mathscr{P}(s, f, F)$ having the spectral data $\{ \lambda_n, \gamma_n \}_{n \ge 0}$.
\end{theorem}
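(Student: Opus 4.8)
The plan is to reduce the general case to the known case of constant boundary conditions by peeling off one eigenvalue/norming-constant pair at a time using the inverse transformation $\widetilde{\mathbf{T}}$, exactly mirroring the way Theorem~\ref{thm:asymptotics} and Theorem~\ref{thm:oscillation} were proved. Given $\{\lambda_n\}_{n\ge 0}$ and $\{\gamma_n\}_{n\ge 0}$ satisfying (\ref{eq:increasing}) and (\ref{eq:asymptotics}) with parameters $M$ and $N$, the target problem $\mathscr{P}(s,f,F)$ should have $\ind f = M$ and $\ind F = N$. I would induct on $\max\{M,N\}$. The base case $-1\le M,N\le 0$ is precisely the classical result cited before the theorem (\cite[Corollary 5.4 and Theorem 7.4]{HM03}): constant (or Dirichlet) boundary conditions are uniquely recovered from their spectral data.

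For the inductive step, suppose $\max\{M,N\}\ge 1$, say $M\ge 1$ (the case $N\ge 1$ with $M\le 0$ is symmetric). Split off the first pair: consider the sequences $\{\lambda_n\}_{n\ge 1}$ and $\{\gamma_n(\lambda_n-\Lambda)^{-1}\}_{n\ge 1}$ for a suitable $\Lambda<\lambda_0$, which by Theorem~\ref{thm:transformation} must be the spectral data of the transformed problem $\widehat{\mathbf{T}}(s,f,F)$; this transformed problem has indices $M-1$ and $N$ (or, when $N=-1$, indices $M-1$ and $0$), so by the induction hypothesis it exists and is unique — call it $\mathscr{P}(\widehat{s},\widehat{f},\widehat{F})$. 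One checks that the reindexed data still satisfies (\ref{eq:increasing}) and (\ref{eq:asymptotics}) with the decremented parameters: strict monotonicity is inherited, positivity of the norming constants needs $\lambda_n>\Lambda$ for $n\ge 1$ (which holds since $\Lambda<\lambda_0<\lambda_1$), and the asymptotics transform correctly because $(\lambda_n-\Lambda)^{-1}\sim n^{-2}$ absorbs exactly the shift in the exponent $2M\mapsto 2(M-1)$ together with the shift in the effective order $(M+N)/2$. Then I would define $(s,f,F) := \widetilde{\mathbf{T}}\big(\lambda_0, \gamma_0, \widehat{s}, \widehat{f}, \widehat{F}\big)$, after verifying that $(\lambda_0,\gamma_0,\widehat{s},\widehat{f},\widehat{F})\in\widetilde{\mathcal{S}}_1$ — which requires $\lambda_0<\mathring{\boldsymbol{\uplambda}}(\widehat{s},\widehat{f},\widehat{F})=\lambda_1$ and $\gamma_0>0$, both given. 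By Theorem~\ref{thm:inverse_transformation}, the spectral data of this $\mathscr{P}(s,f,F)$ is $\{\lambda_n,\gamma_n\}_{n\ge 0}$, which is what we want; and $\ind f=M$, $\ind F=N$ by Lemma~\ref{lem:f_hat}.

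For uniqueness: if $\mathscr{P}(s,f,F)$ is any problem with the prescribed spectral data, then by Theorem~\ref{thm:asymptotics} its indices are forced to be $M$ and $N$, so $\widehat{\mathbf{T}}$ is applicable and by Theorem~\ref{thm:transformation} the spectral data of $\widehat{\mathbf{T}}(s,f,F)$ is the reindexed data above; by the induction hypothesis $\widehat{\mathbf{T}}(s,f,F)=\mathscr{P}(\widehat{s},\widehat{f},\widehat{F})$ is uniquely determined, and then Theorem~\ref{thm:inverse} gives $(s,f,F)=\widetilde{\mathbf{T}}(\lambda_0,\gamma_0,\widehat{s},\widehat{f},\widehat{F})$, which is the object constructed above. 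Hence the problem is unique.

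The main obstacle is the bookkeeping around the edge cases where one boundary condition is Dirichlet, i.e.\ $N=-1$ (or, in the symmetric branch, $M=-1$). There, $\widehat{\mathbf{T}}$ does \emph{not} remove the smallest eigenvalue ($J=0$), and instead of peeling off a pair one must recurse until one reaches $M=N=-1$, the pure Dirichlet problem; moreover, in those cases the norming constant of the smallest eigenvalue is merely rescaled by a factor of $2$, and one must match this against the membership conditions in $\widetilde{\mathcal{S}}_2$ and $\widetilde{\mathcal{S}}_3$ (which demand $\nu=\mathring{\boldsymbol{\upgamma}}/2$ or $\nu=2\mathring{\boldsymbol{\upgamma}}$). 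So the induction should really be set up carefully — e.g.\ on the quantity $\max\{M,N,0\}+\tfrac12(\text{number of Dirichlet conditions})$ or simply on $\ind f+\ind F$ — so that each application of $\widehat{\mathbf{T}}$/$\widetilde{\mathbf{T}}$ strictly decreases it and the reindexing of the data is handled uniformly; but none of this requires any new analytic input beyond the three transformation theorems and the verification that (\ref{eq:increasing})–(\ref{eq:asymptotics}) are preserved, which is the one genuinely computational point.
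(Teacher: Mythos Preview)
Your strategy is the paper's own: iterate $\widehat{\mathbf{T}}$ forward and $\widetilde{\mathbf{T}}$ backward to reduce to the constant-boundary-condition case, invoking Theorems~\ref{thm:transformation}, \ref{thm:inverse}, and \ref{thm:inverse_transformation} at each step. The paper packages this as an explicit recursion of length $K=\max\{M,N\}$ rather than an induction, but the content is identical.

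There is, however, a concrete bookkeeping error: a single application of $\widehat{\mathbf{T}}$ shifts \emph{both} boundary indices, not only $\ind f$. By Lemma~\ref{lem:f_hat} (applied to $f$ and to $F$, cf.\ the paragraph containing~(\ref{eq:I})--(\ref{eq:J})), the transformed indices are $(M-1,N-1)$ when $M,N\ge 0$ and $(M-1,0)$ when $M\ge 0$, $N=-1$; your claim ``$(M-1,N)$'' is wrong and is in fact inconsistent with your own asymptotic check, since after deleting $\lambda_0$ and reindexing one gets $\sqrt{\lambda_{n+1}}=n-((M+N)/2-1)+\ell_2(1)$, which matches $(M-1)+(N-1)$ but not $(M-1)+N$. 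Two smaller slips: when $J=1$ one has $\Lambda=\lambda_0$, not $\Lambda<\lambda_0$ (see~(\ref{eq:mu})); and one never ``recurses until $M=N=-1$'' via $\widehat{\mathbf{T}}$, since once an index hits $0$ it oscillates $0\to-1\to 0\to\cdots$, so after $K$ steps one index is $0$ and the other is $0$ or $-1$---this is the base case, and the pure Dirichlet problem is just one instance of it, not the terminus of the iteration. With these corrections your argument goes through verbatim and coincides with the paper's.
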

\begin{proof}
With Theorem~\ref{thm:transformation} in mind, we denote $K := \max \{ M, N \}$, and consider the numbers $M^{(k)}$, $N^{(k)}$ and the sequences $\{ \lambda_n^{(k)} \}_{n \ge 0}$, $\{ \gamma_n^{(k)} \}_{n \ge 0}$ for $k = 0$, $1$, $\ldots$, $K$ defined by
$$
  M^{(0)} := M, \qquad N^{(0)} := N, \qquad \lambda_n^{(0)} := \lambda_n, \qquad \gamma_n^{(0)} := \gamma_n
$$
and
$$\begin{aligned}
  M^{(k)} &:= M^{(k-1)} - I, \qquad & N^{(k)} &:= N^{(k-1)} + I - 2 J, \\
  \lambda_n^{(k)} &:= \lambda_{n+J}^{(k-1)}, \qquad & \gamma_n^{(k)} &:= \frac{\gamma_{n+J}^{(k-1)}}{(\lambda_{n+J}^{(k-1)} - \lambda_0^{(k-1)} + 2 - 2 J)^I},
\end{aligned}$$
where
$$
  I := \begin{cases} 1, & M^{(k-1)} \ge 0, \\ -1, & M^{(k-1)} = -1, \end{cases} \qquad J := \begin{cases} 1, & M^{(k-1)}, N^{(k-1)} \ge 0, \\ 0, & \text{otherwise} \end{cases}
$$
(we omit the indices of $I$ and $J$ here to avoid double indices). One easily verifies that they satisfy the conditions (\ref{eq:increasing}) and (\ref{eq:asymptotics}) with $M$, $N$, $\lambda_n$ and $\gamma_n$ replaced by $M^{(k)}$, $N^{(k)}$, $\lambda_n^{(k)}$ and $\gamma_n^{(k)}$ respectively. Moreover, one of the numbers $M^{(K)}$ and $N^{(K)}$ is always $0$, while the other one is either $0$ or $-1$. Hence there exists a boundary value problem $\mathscr{P}(s^{(K)}, f^{(K)}, F^{(K)})$ (with constant boundary conditions) having $\{ \lambda_n^{(K)}, \gamma_n^{(K)} \}_{n \ge 0}$ as its spectral data. Now we successively define $\mathscr{P}(s^{(K-1)}, f^{(K-1)}, F^{(K-1)})$, $\dots$, $\mathscr{P}(s^{(0)}, f^{(0)}, F^{(0)})$ by
$$
  (s^{(k-1)}, f^{(k-1)}, F^{(k-1)}) := \widetilde{\mathbf{T}}(\lambda_0^{(k-1)}, \gamma_0^{(k-1)}, s^{(k)}, f^{(k)}, F^{(k)}).
$$
Theorem~\ref{thm:inverse_transformation} ensures at each step that the spectral data of $\mathscr{P}(s^{(k)}, f^{(k)}, F^{(k)})$ is $\{ \lambda_n^{(k)}, \gamma_n^{(k)} \}_{n \ge 0}$, and hence the existence part of the theorem follows.

In order to prove the uniqueness part we assume that $\mathscr{P}(s, f, F)$ and $\mathscr{P}(\widetilde{s}, \widetilde{f}, \widetilde{F})$ have the same spectral data. Then Theorem~\ref{thm:asymptotics} implies $\ind f = \ind \widetilde{f}$ and $\ind F = \ind \widetilde{F}$, and we denote $K := \max \{ \ind f, \ind F \}$. Together with $\mathscr{P}(s^{(k)}, f^{(k)}, F^{(k)})$ defined by (\ref{eq:P_k}) we consider the problems $\mathscr{P}(\widetilde{s}^{(k)}, \widetilde{f}^{(k)}, \widetilde{F}^{(k)})$ defined by
\begin{equation} \label{eq:P_tilde_k}
\begin{aligned}
  (\widetilde{s}^{(0)}, \widetilde{f}^{(0)}, \widetilde{F}^{(0)}) &:= (\widetilde{s}, \widetilde{f}, \widetilde{F}), \\
  (\widetilde{s}^{(k)}, \widetilde{f}^{(k)}, \widetilde{F}^{(k)}) &:= \widehat{\mathbf{T}} (\widetilde{s}^{(k-1)}, \widetilde{f}^{(k-1)}, \widetilde{F}^{(k-1)}), \qquad k = 1, 2, \ldots, K.
\end{aligned}
\end{equation}
Theorem~\ref{thm:transformation} yields that $\mathscr{P}(s^{(k)}, f^{(k)}, F^{(k)})$ and $\mathscr{P}(\widetilde{s}^{(k)}, \widetilde{f}^{(k)}, \widetilde{F}^{(k)})$ have the same spectral data. In particular, $\mathscr{P}(s^{(K)}, f^{(K)}, F^{(K)})$ and $\mathscr{P}(\widetilde{s}^{(K)}, \widetilde{f}^{(K)}, \widetilde{F}^{(K)})$ are two problems with constant boundary conditions and the same spectral data. Therefore, according to the discussion preceding the theorem, we have $(s^{(K)}, f^{(K)}, F^{(K)}) = (\widetilde{s}^{(K)}, \widetilde{f}^{(K)}, \widetilde{F}^{(K)})$, and successive applications of Theorem~\ref{thm:inverse} concludes the proof.
\end{proof}

\subsection{Inverse problem by two spectra} \label{ss:bytwospectra}

The results of Subsection~\ref{ss:two_problems} show that if two disjoint sequences $\{ \lambda_n \}_{n \ge 0}$ and $\{ \mu_n \}_{n \ge 0}$ are the eigenvalues of two problems of the form $\mathscr{P}(s, f, F)$ and $\mathscr{P}(s, f + \alpha, F)$, then they interlace and satisfy asymptotics of the form
\begin{equation} \label{eq:lambda_mu}
  \sqrt{\lambda_n} = n - L + \ell_2(1), \qquad \sqrt{\lambda_n} - \sqrt{\vphantom{\lambda_n} \mu_n} = (n - L)^{-2 r - 1} \left( \nu + \ell_2(1) \right)
\end{equation}
for some integer or half-integer $L \ge -1/2$, $\nu \in \mathbb{R} \setminus \{0\}$ and $r \in \{0, 1\}$, with the exception of the case when $L = -1/2$ and $r = 1$ (because of our assumption $\ind f \ge 0$, if $L = -1/2$ then necessarily $\ind f = 0$, and consequently $r = 0$). We are now going to prove that these conditions are also sufficient for two sequences to be the eigenvalues of two such problems. Note that one cannot directly apply the transformations of Section~\ref{sec:transformations} as in the previous subsection, because a pair of boundary value problems with a common boundary condition is transformed to a pair of boundary value problems with no common boundary conditions. Therefore we will first reduce our two-spectra inverse problem to the one solved in Subsection~\ref{ss:byspectraldata}.

As Theorem~\ref{thm:by_spectral_data} shows, the inverse problem by spectral data with boundary conditions of the form~(\ref{eq:boundary}), (\ref{eq:f_F}) is completely analogous to the one with constant boundary conditions in the sense that a boundary value problem is uniquely determined by its eigenvalues and norming constants. It turns out, however, that unlike the case of constant boundary conditions, a pair of problems of the form $\mathscr{P}(s, f, F)$ and $\mathscr{P}(s, f + \alpha, F)$ is not uniquely determined by their eigenvalues. One also needs to specify the poles of the function $f$. These poles (i.e., the zeros of $f_\downarrow$) are among the zeros of the difference of the corresponding characteristic functions (see (\ref{eq:xi-chi})), and our next theorem shows that they can be chosen arbitrarily among these zeros. The difference between these two kinds of inverse problems can be explained intuitively by the fact that the information about $f_\downarrow$ is already incorporated into the definition of norming constants.

\begin{theorem} \label{thm:two_spectra}
  Let $\{ \lambda_n \}_{n \ge 0}$ and $\{ \mu_n \}_{n \ge 0}$ be two interlacing sequences satisfying the asymptotics (\ref{eq:lambda_mu}). Then there exists a pair of problems of the form $\mathscr{P}(s, f, F)$ and $\mathscr{P}(s, f + \alpha, F)$ having the eigenvalues $\{ \lambda_n \}_{n \ge 0}$ and $\{ \mu_n \}_{n \ge 0}$ respectively. Moreover, there is a one-to-one correspondence between such pairs of problems and sets of nonnegative integers of cardinality not exceeding $L + (1 - r) / 2$.
\end{theorem}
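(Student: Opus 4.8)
The plan is to reduce the two-spectra problem to the inverse problem by spectral data solved in Theorem~\ref{thm:by_spectral_data}. Given the two interlacing sequences, I would first form the characteristic functions $\chi$ and $\xi$, defined by infinite products of the form~(\ref{eq:infinite_product}) over $\{\lambda_n\}$ and over $\{\mu_n\}$ respectively; these converge by~(\ref{eq:lambda_mu}) and obey the estimate~(\ref{eq:chi}). Since $\{\lambda_n\}$ and $\{\mu_n\}$ interlace, $\xi(\lambda)/\chi(\lambda)=\prod_n(\mu_n-\lambda)/(\lambda_n-\lambda)$ is, up to sign, a Herglotz--Nevanlinna function of the form $1+\sum_n c_n/(\lambda_n-\lambda)$ tending to $1$ as $\lambda\to-\infty$; hence the equation $\xi(\lambda)=\chi(\lambda)$ has exactly one solution in each interval $(\lambda_n,\lambda_{n+1})$ and none below $\lambda_0$. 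Consequently $g:=\xi-\chi$ (a nonzero entire function, the two spectra being disjoint) has a strictly increasing sequence $\zeta_0<\zeta_1<\cdots$ of real simple zeros which interlace with $\{\lambda_n\}$; in particular $g(\lambda_n)\ne0$ for every $n$. By~(\ref{eq:xi-chi}), in any realizing pair the poles of $f$ must be among these $\zeta_j$, so the free parameter in the construction is a finite set $S\subseteq\{0,1,2,\ldots\}$ prescribing which of the $\zeta_j$ are poles.

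Given such an $S$, put $d:=|S|$, let $h_1<\cdots<h_d$ enumerate $\{\zeta_j:j\in S\}$, and (modelled on the identity of Subsection~\ref{ss:two_problems}) define
$$
  \gamma_n := \pi\nu\prod_{k=1}^d(h_k-\lambda_n)^2\,\frac{\chi'(\lambda_n)}{\xi(\lambda_n)}, \qquad n\ge0.
$$
A short sign analysis of the interlacing shows that $\chi'(\lambda_n)$ and $\xi(\lambda_n)$ have equal signs when $\nu>0$ and opposite signs when $\nu<0$, so that $\gamma_n>0$ in either case; and from~(\ref{eq:chi_xi}) and~(\ref{eq:lambda_mu}) one computes $\gamma_n=\frac{\pi}{2}(n-L)^{2(2d+r)}(1+\ell_2(1))$. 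Thus $\{\lambda_n,\gamma_n\}$ satisfies the hypotheses of Theorem~\ref{thm:by_spectral_data} with $M:=2d+r$ and $N:=2L-2d-r$, the requirement $N\ge-1$ being exactly $d\le L+(1-r)/2$; let $\mathscr{P}(s,f,F)$ be the unique problem it produces. When $d\ge1$ one must still check that the poles of $f$ are exactly $h_1,\ldots,h_d$. Writing $\widetilde p(\lambda):=\prod_{k=1}^d(h_k-\lambda)$, the definition of $\gamma_n$ and the simplicity of the zeros of $\chi$ yield, as in the proof of Lemma~\ref{lem:zero_sum},
$$
  \sum_{n\ge0}\frac{\lambda_n^k\,\widetilde p(\lambda_n)}{\gamma_n}=\frac{1}{\pi\nu}\sum_{n\ge0}\Res_{\lambda=\lambda_n}\frac{\lambda^k\,\xi(\lambda)}{\widetilde p(\lambda)\,\chi(\lambda)}, \qquad k=0,\ldots,d-1.
$$
Evaluating the right-hand side by integrating $\lambda^k\xi(\lambda)/(\widetilde p(\lambda)\chi(\lambda))$ over the circles $C_N$ of the proof of Lemma~\ref{lem:zero_sum}, on which $\xi/\chi=1+o(1)$, and accounting for the residues at the zeros $\zeta_j$ of $\widetilde p$ (where $\chi(\zeta_j)=\xi(\zeta_j)$), one finds that each of these sums vanishes: for $k\le d-2$ the contour integral is negligible and the residues at the $\zeta_j$ cancel among themselves, whereas for $k=d-1$ the contour integral contributes a nonzero constant that exactly cancels the sum of residues at the $\zeta_j$. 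Lemma~\ref{lem:zero_sum} then forces $f_\downarrow$ to be a constant multiple of $\widetilde p$, so $f$ has the prescribed poles; and $\ind f=M=2d+r$ gives $\deg f_\uparrow=d+r$, so $f$ has the required form.

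Next, set $\alpha:=\pi\nu/(h_0')^2$ with $h_0'$ the leading coefficient of the reconstructed $f_\downarrow$, and consider $\mathscr{P}(s,f+\alpha,F)$; I would show that its eigenvalues are $\{\mu_n\}$. Since $\ind f\ge0$ and no $\lambda_n$ is a pole of $f$, Subsection~\ref{ss:two_problems} applies and gives $\gamma_n=\alpha f_\downarrow^2(\lambda_n)\chi'(\lambda_n)/\widetilde\xi(\lambda_n)$, where $\widetilde\xi$ is the characteristic function of $\mathscr{P}(s,f+\alpha,F)$ and, by~(\ref{eq:infinite_product}), the characteristic function of $\mathscr{P}(s,f,F)$ is the $\chi$ above; comparing with the definition of $\gamma_n$ (and using $\pi\nu=\alpha(h_0')^2$) gives $\widetilde\xi(\lambda_n)=\xi(\lambda_n)$ for all $n$. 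Hence $\xi-\widetilde\xi$ vanishes at every zero $\lambda_n$ of $\chi$, so $(\xi-\widetilde\xi)/\chi$ is entire and, being bounded on the circles $C_N$, constant: $\xi-\widetilde\xi=c_0\chi$. But by~(\ref{eq:chi}) both $\xi$ and $\widetilde\xi$ carry the leading asymptotics of $\chi$, whereas $\xi-\chi=g$ does not, so $c_0=0$ and $\widetilde\xi=\xi$. Therefore $\mathscr{P}(s,f+\alpha,F)$ has eigenvalues $\{\mu_n\}$ and $\mathscr{P}(s,f,F)$ has eigenvalues $\{\lambda_n\}$, which finishes the existence part.

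For the one-to-one correspondence: if $\mathscr{P}(s,f,F)$, $\mathscr{P}(s,f+\alpha,F)$ is any realizing pair, then by~(\ref{eq:infinite_product}) its characteristic functions are the $\chi$ and $\xi$ above, and by~(\ref{eq:xi-chi}) the poles of $f$ lie among the zeros $\{\zeta_j\}$ of $g$; being finite in number and equal in count to $d=(\ind f-r)/2$, with $\ind F=2L-\ind f\ge-1$ and hence $d\le L+(1-r)/2$, they single out a unique finite set $S$. That distinct sets $S$ yield distinct pairs, and that every admissible pair arises this way, then follows from the construction above together with the uniqueness assertions in Theorems~\ref{thm:by_spectral_data} and~\ref{thm:inverse}. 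I expect the principal obstacle to be the contour-integral evaluation verifying that the reconstructed $f$ has the prescribed poles — particularly the borderline case $k=d-1$, in which neither the contour integral nor the sum of residues at the $\zeta_j$ vanishes on its own, only their combination — together with the entire-function uniqueness step identifying $\widetilde\xi$ with $\xi$.
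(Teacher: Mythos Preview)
Your proposal is correct and follows essentially the same route as the paper: build $\chi$ and $\xi$ as infinite products, choose a finite set of zeros of $\xi-\chi$ to serve as the poles of $f$, define $\gamma_n$ by the formula from Subsection~\ref{ss:two_problems}, invoke Theorem~\ref{thm:by_spectral_data}, verify the poles of $f$ via Lemma~\ref{lem:zero_sum}, and then identify the second characteristic function by an entire-function argument.

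One simplification is worth noting precisely at the spot you flag as the principal obstacle. In the contour-integral step, the paper integrates $\lambda^k(\xi-\chi)/(p\,\chi)$ rather than $\lambda^k\xi/(p\,\chi)$. Since the zeros of $p$ are among those of $\xi-\chi$, this substitution removes the extra poles at the $\zeta_j$ altogether; and since $\chi(\lambda_n)=0$, it leaves the residues at the $\lambda_n$ unchanged. Moreover $(\xi-\chi)/\chi=o(1)$ on the circles $C_N$, so the contour integral tends to zero uniformly for every $k\le d-1$, and your borderline case $k=d-1$ simply does not arise. Finally, your citation of Theorem~\ref{thm:inverse} in the uniqueness paragraph is misplaced; only the uniqueness in Theorem~\ref{thm:by_spectral_data} is needed there, applied after one recovers $\gamma_n=\alpha f_\downarrow^2(\lambda_n)\chi'(\lambda_n)/\xi(\lambda_n)$ from Subsection~\ref{ss:two_problems}.
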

\begin{proof}
Define the functions
\begin{equation*}
  \chi(\lambda) := -\prod_{n < L} (\lambda_n - \lambda) \prod_{n = L} \pi (\lambda_n - \lambda) \prod_{n > L} \frac{\lambda_n - \lambda}{(n - L)^2}
\end{equation*}
and
\begin{equation*}
  \xi(\lambda) := -\prod_{n < L} (\mu_n - \lambda) \prod_{n = L} \pi (\mu_n - \lambda) \prod_{n > L} \frac{\mu_n - \lambda}{(n - L)^2}.
\end{equation*}
Let $d$ be an integer with
\begin{equation*}
  0 \le d \le L + \frac{1 - r}{2},
\end{equation*}
and let $i_1$, $i_2$, $\ldots$, $i_d$ be integers (indices) with $0 \le i_1 < i_2 < \ldots < i_d$. Define the polynomial
\begin{equation*}
  p(\lambda) := \prod_{k=1}^d (\tau_{i_k} - \lambda),
\end{equation*}
where $\tau_0 < \tau_1 < \ldots$ are the zeros of the function $\chi(\lambda) - \xi(\lambda)$. The use of the fact that $\lambda_n$ and $\mu_n$ interlace together with (\ref{eq:xi_lambda_mu}) and (\ref{eq:chi_xi}) (which is legitimate since the derivation of these estimates used only the infinite product representations of $\chi$ and $\xi$ and the asymptotics of $\lambda_n$ and $\zeta_n$) implies that the numbers $\gamma_n$ defined by
\begin{equation*}
  \gamma_n := \frac{\pi \nu p^2(\lambda_n) \chi'(\lambda_n)}{\xi(\lambda_n)}
\end{equation*}
are all positive and have the asymptotics
\begin{equation*}
  \gamma_n = \left( n - L \right)^{4 d + 2 r} \left( \frac{\pi}{2} + \ell_2(1) \right).
\end{equation*}
By Theorem~\ref{thm:by_spectral_data}, there exists a boundary value problem $\mathscr{P}(s, f, F)$ having the eigenvalues $\{ \lambda_n \}_{n \ge 0}$ and the norming constants $\{ \gamma_n \}_{n \ge 0}$. Moreover, $\ind f = 2 d + r \ge 0$ and $\ind F = 2 L - 2 d - r \ge -1$. Denote $\alpha := \pi \nu / \left( h'_0 \right)^2$ with $h'_0$ defined as at the beginning of Section~\ref{sec:preliminaries}. It only remains to show that the problem $\mathscr{P}(s, f + \alpha, F)$ has the eigenvalues $\mu_n$. But first we show that the polynomials $f_\downarrow(\lambda)$ and $p(\lambda)$ coincide up to a constant factor. Arguing as in the proof of Lemma~\ref{lem:zero_sum} we have
\begin{equation*}
  \sum_{n=0}^\infty \frac{\lambda_n^k p(\lambda_n)}{\gamma_n} = \sum_{n=0}^\infty \frac{\lambda_n^k \xi(\lambda_n)}{\pi \nu p(\lambda_n) \chi'(\lambda_n)} = \frac{1}{2 \pi^2 \nu \iu} \lim_{N \to \infty} \int_{C_N} \frac{\lambda^k \left( \xi(\lambda) - \chi(\lambda) \right)}{p(\lambda) \chi(\lambda)} \,\du \lambda = 0,
\end{equation*}
where $C_N$ is the same as in that proof. Thus, by the same lemma, $f_\downarrow(\lambda) = h'_0 p(\lambda)$.

Denote the eigenvalues of the boundary value problem $\mathscr{P}(s, f + \alpha, F)$ by $\widehat{\mu}_n$. They coincide with the zeros of the function
\begin{equation*}
  \widehat{\xi}(\lambda) := F_\uparrow(\lambda) \theta(\pi, \lambda) - F_\downarrow(\lambda) \theta^{[1]}_s(\pi, \lambda),
\end{equation*}
where $\theta(x, \lambda)$ is defined as in~(\ref{eq:theta}). Using the results of Subsection~\ref{ss:two_problems}, we obtain
\begin{equation*}
  \widehat{\xi}(\lambda_n) = \frac{\alpha f_\downarrow^2(\lambda_n) \chi'(\lambda_n)}{\gamma_n} = \frac{\pi \nu p^2(\lambda_n) \chi'(\lambda_n)}{\gamma_n} = \xi(\lambda_n), \qquad n \ge 0.
\end{equation*}
This and the asymptotics of $\chi$, $\xi$ and $\widehat{\xi}$ show that $\left( \widehat{\xi}(\lambda) - \xi(\lambda) \right) / \chi(\lambda)$ is an entire function satisfying the estimate
\begin{equation*}
  \frac{\widehat{\xi}(\lambda) - \xi(\lambda)}{\chi(\lambda)} = o(1)
\end{equation*}
on $\bigcup_N C_N$ and hence by the maximum principle on the whole plane. Then the Liouville theorem yields that this function is identically zero. Thus $\widehat{\xi}(\lambda) \equiv \xi(\lambda)$ and hence $\widehat{\mu}_n = \mu_n$, $n \ge 0$.

So far, we have constructed two problems $\mathscr{P}(s, f, F)$ and $\mathscr{P}(s, f + \alpha, F)$ with the eigenvalues $\{ \lambda_n \}_{n \ge 0}$ and $\{ \mu_n \}_{n \ge 0}$ respectively, and such that the poles of $f$ are $i_1$'s, $i_2$'s, $\ldots$, $i_d$'s zeros of $\chi(\lambda) - \xi(\lambda)$. To prove that such a pair of problems is unique, we assume that the problems $\mathscr{P}(s, f, F)$ and $\mathscr{P}(\widetilde{s}, \widetilde{f}, \widetilde{F})$ have the eigenvalues $\{ \lambda_n \}_{n \ge 0}$, and the problems $\mathscr{P}(s, f + \alpha, F)$ and $\mathscr{P}(\widetilde{s}, \widetilde{f} + \widetilde{\alpha}, \widetilde{F})$ have the eigenvalues $\{ \mu_n \}_{n \ge 0}$. We also assume that the poles of $f$ (respectively, $\widetilde{f}$) are $i_1$'s, $i_2$'s, $\ldots$, $i_d$'s zeros of $\chi - \xi$ (respectively, $\widetilde{\chi} - \widetilde{\xi}$). Then $\chi \equiv \widetilde{\chi}$ and $\xi \equiv \widetilde{\xi}$ by the definition of these functions, and hence $\left( h'_0 \right)^{-1} f_\downarrow \equiv \left( \widetilde{h}'_0 \right)^{-1} \widetilde{f}_\downarrow$. On the other hand, the asymptotics of the eigenvalues yields $\alpha \left( h'_0 \right)^2 = \widetilde{\alpha} \left( \widetilde{h}'_0 \right)^2$ (see the formula for $\sqrt{\lambda_n} - \sqrt{\vphantom{\lambda_n} \mu_n}$ at the end of Subsection~\ref{ss:two_problems}). Thus
\begin{equation*}
  \gamma_n = \frac{\alpha f_\downarrow^2(\lambda_n) \chi'(\lambda_n)}{\xi(\lambda_n)} = \frac{\widetilde{\alpha} \widetilde{f}_\downarrow^2(\lambda_n) \widetilde{\chi}'(\lambda_n)}{\widetilde{\xi}(\lambda_n)} = \widetilde{\gamma}_n.
\end{equation*}
Therefore the uniqueness part of Theorem~\ref{thm:by_spectral_data} implies that $s = \widetilde{s}$ a.e. on $[0, \pi]$, $f = \widetilde{f}$ and $F = \widetilde{F}$. Finally, $f = \widetilde{f}$ yields $\alpha = \pi \nu / \left( h'_0 \right)^2 = \pi \nu / \left( \widetilde{h}'_0 \right)^2 = \widetilde{\alpha}$.
\end{proof}

\begin{remark}
In particular, this proof shows that the problems $\mathscr{P}(s, f, F)$ and $\mathscr{P}(s, f + \alpha, F)$ are uniquely determined by their spectra and the poles of $f$. Theorem~\ref{thm:two_spectra} also yields that the two spectra determine these problems uniquely if and only if $\ind F \le 0 \le \ind f \le 1$ (i.e., the second boundary condition does not contain the eigenvalue parameter at all and the first boundary condition may depend on it only linearly).
\end{remark}

\subsection{Inverse problems by one spectrum} \label{ss:byonespectrum}

Theorems~\ref{thm:by_spectral_data} and \ref{thm:two_spectra} show that the spectrum of a boundary value problem of the form (\ref{eq:SL})-(\ref{eq:boundary}) does not uniquely determine this problem. In order for the unique determination by one spectrum to work, one has to impose some additional restrictions. In this subsection we will consider two types of such restrictions. We will call a boundary value problem of the form $\mathscr{P}(s, f, f)$ \emph{symmetric} if $s(x) + s(\pi - x) = 0$ (which, for differentiable $s \in \mathring{\mathscr{L}}_2(0, \pi)$, is equivalent to $s'(x) = s'(\pi - x)$). In the first part of the subsection we will prove that the spectrum alone determines the symmetric problem $\mathscr{P}(s, f, f)$.

We start by studying the properties of symmetric problems. Theorem~\ref{thm:asymptotics} shows that the eigenvalues of $\mathscr{P}(s, f, f)$ satisfy the asymptotics
\begin{equation} \label{eq:symmetric_asymptotics}
  \sqrt{\lambda_n} = n - L + \ell_2(1),
\end{equation}
where $L := \ind f$. Since our problem is symmetric, it follows from~(\ref{eq:phi_psi}) that $\psi(x, \lambda) = \varphi(\pi - x, \lambda)$. Then~(\ref{eq:beta}) implies $\psi(x, \lambda_n) = \beta_n^2 \psi(x, \lambda_n)$, and hence $\beta_n^2 = 1$. Using Theorem~\ref{thm:oscillation} we obtain $\beta_n = (-1)^n$. Thus (\ref{eq:chi_beta_gamma}) implies
\begin{equation} \label{eq:symmetric_chi_gamma}
  \gamma_n = (-1)^n \chi'(\lambda_n).
\end{equation}
Since $L$ is now an integer, the formula~(\ref{eq:infinite_product}) takes the form
\begin{equation} \label{eq:product}
  \chi(\lambda) = -\pi \prod_{n=0}^L (\lambda_n - \lambda) \prod_{n=L+1}^\infty \frac{\lambda_n - \lambda}{(n - L)^2}.
\end{equation}

Now we are ready to state the first result of this subsection.

\begin{theorem} \label{thm:symmetric}
Let $\{ \lambda_n \}_{n \ge 0}$ be a strictly increasing sequence of real numbers satisfying the asymptotics (\ref{eq:symmetric_asymptotics}) for some integer $L \ge -1$. Then there exists a unique symmetric boundary value problem $\mathscr{P}(s, f, f)$ having the spectrum $\{ \lambda_n \}_{n \ge 0}$.
\end{theorem}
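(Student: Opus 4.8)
The plan is to reconstruct the norming constants from the eigenvalues via the symmetry relation (\ref{eq:symmetric_chi_gamma}) and then to appeal to Theorem~\ref{thm:by_spectral_data}. First I would define the function $\chi$ by the infinite product (\ref{eq:product}); its convergence is guaranteed by $\sqrt{\lambda_n} = n - L + \ell_2(1)$, and $\chi$ is then a real entire function whose zeros are exactly the (simple) points $\{\lambda_n\}_{n\ge0}$. Consequently $(-1)^n\chi'(\lambda_n)$ has a sign independent of $n$, and inspecting $\chi$ immediately to the left of $\lambda_0$ (where every factor of the product is positive while the leading constant is $-\pi$) shows this sign is positive. Setting $\gamma_n := (-1)^n\chi'(\lambda_n) > 0$, the estimate for $\chi'(\lambda_n)$ behind (\ref{eq:chi_xi}) — which, as already noted in the proof of Theorem~\ref{thm:two_spectra}, uses only the product representation of $\chi$ and the asymptotics of the $\lambda_n$ — yields $\gamma_n = \frac{\pi}{2}(n-L)^{2L}(1+\ell_2(1))$. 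Thus $\{\lambda_n,\gamma_n\}_{n\ge0}$ satisfies (\ref{eq:increasing})--(\ref{eq:asymptotics}) with $M = N = L$, and Theorem~\ref{thm:by_spectral_data} supplies a \emph{unique} problem $\mathscr{P}(s,f,F)$ with this spectral data; by Theorem~\ref{thm:asymptotics} it has $\ind f = \ind F = L$.

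It remains to show that the problem produced this way is symmetric. Comparing its characteristic function, written in the product form (\ref{eq:infinite_product}), with the function $\chi$ above shows that the two coincide, so by (\ref{eq:chi_beta_gamma}) the constants $\beta_n$ from (\ref{eq:beta}) satisfy $\beta_n(-1)^n\chi'(\lambda_n) = \chi'(\lambda_n)$, whence $\beta_n = (-1)^n$. Now consider the \emph{reflected} problem $\mathscr{P}(s^{*},F,f)$ with $s^{*}(x) := -s(\pi-x) \in \mathring{\mathscr{L}}_2(0,\pi)$. A direct computation shows that $y\mapsto y(\pi-\cdot)$ maps solutions of (\ref{eq:SL}) with potential $s$ to solutions with potential $s^{*}$ while changing the sign of the quasi-derivative, so that the solutions (\ref{eq:phi_psi}) of $\mathscr{P}(s^{*},F,f)$ are $\varphi^{*}(x,\lambda) = \psi(\pi-x,\lambda)$ and $\psi^{*}(x,\lambda) = \varphi(\pi-x,\lambda)$. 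Feeding this into the definition of the characteristic function in Subsection~\ref{ss:characteristic_function} shows that $\mathscr{P}(s^{*},F,f)$ has the same characteristic function $\chi$, hence the same eigenvalues; moreover the relation $\psi^{*}(x,\lambda_n) = \beta_n^{*}\varphi^{*}(x,\lambda_n)$ becomes $\varphi(\pi-x,\lambda_n) = \beta_n^{*}\beta_n\varphi(\pi-x,\lambda_n)$, forcing $\beta_n^{*}\beta_n = 1$, so $\beta_n^{*} = (-1)^n$, and by (\ref{eq:chi_beta_gamma}) its norming constants are $(-1)^n\chi'(\lambda_n) = \gamma_n$. Thus $\mathscr{P}(s,f,F)$ and $\mathscr{P}(s^{*},F,f)$ have the same spectral data, and the uniqueness part of Theorem~\ref{thm:by_spectral_data} gives $s = s^{*}$ a.e.\ and $f = F$. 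Therefore $s(x)+s(\pi-x)=0$, and the constructed problem is the symmetric problem $\mathscr{P}(s,f,f)$ with spectrum $\{\lambda_n\}_{n\ge0}$.

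For uniqueness, suppose $\mathscr{P}(s,f,f)$ and $\mathscr{P}(\widetilde{s},\widetilde{f},\widetilde{f})$ are symmetric and have the same spectrum $\{\lambda_n\}_{n\ge0}$. By (\ref{eq:symmetric_asymptotics}) they share the same $L = \ind f = \ind\widetilde{f}$, so by (\ref{eq:product}) both characteristic functions equal the $\chi$ built from $\{\lambda_n\}_{n\ge0}$, and by (\ref{eq:symmetric_chi_gamma}) both have norming constants $(-1)^n\chi'(\lambda_n)$. Hence the two problems have identical spectral data and coincide by the uniqueness part of Theorem~\ref{thm:by_spectral_data}.

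I expect the reflection step in the second paragraph to be the point requiring the most care: one has to check that $x\mapsto\pi-x$ together with $s\mapsto-s(\pi-\cdot)$ is an involution on problems of the form (\ref{eq:SL})--(\ref{eq:boundary}) that swaps the two endpoints while preserving the equation, tracking carefully the sign in the quasi-derivative and the initial data (\ref{eq:phi_psi}), so that the characteristic function and the constants $\beta_n$ transform as claimed. Everything else is routine: the positivity and $\ell_2$-asymptotics of the $\gamma_n$ follow from the product representation exactly as in Subsections~\ref{ss:characteristic_function} and \ref{ss:two_problems}, and the reconstruction itself is a direct application of Theorem~\ref{thm:by_spectral_data}.
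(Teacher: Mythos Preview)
Your approach coincides with the paper's: define $\chi$ by the product~(\ref{eq:product}), set $\gamma_n := (-1)^n\chi'(\lambda_n)$, check positivity and the asymptotics $\gamma_n = \frac{\pi}{2}(n-L)^{2L}(1+\ell_2(1))$ via~(\ref{eq:chi_xi}), and then invoke Theorem~\ref{thm:by_spectral_data} with $M=N=L$. The paper's proof stops at ``the rest of the proof now follows from Theorem~\ref{thm:by_spectral_data}'', leaving the verification that the resulting problem is actually symmetric implicit; your reflection argument (passing to $\mathscr{P}(s^{*},F,f)$ with $s^{*}(x)=-s(\pi-x)$, showing it has the same characteristic function and $\beta_n^{*}=\beta_n^{-1}=(-1)^n$, hence the same spectral data, and concluding $s=s^{*}$, $f=F$ from the uniqueness part of Theorem~\ref{thm:by_spectral_data}) is precisely the standard way to supply this missing step, and your computations with the quasi-derivative under $x\mapsto\pi-x$ are correct.
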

\begin{proof}
Define $\chi$ by~(\ref{eq:product}) and then $\gamma_n$ by~(\ref{eq:symmetric_chi_gamma}). The expression (\ref{eq:product}) and the use of (\ref{eq:chi_xi}) in (\ref{eq:symmetric_chi_gamma}) imply that $\gamma_n$ are strictly positive numbers satisfying the asymptotics
\begin{equation*}
  \gamma_n = \frac{\pi}{2} (n - L)^{2L} \left( 1 + \ell_2(1) \right).
\end{equation*}
The rest of the proof now follows from Theorem~\ref{thm:by_spectral_data}.
\end{proof}

Another type of inverse problems where one spectrum is sufficient for the unique determination is a class of problems known under the names of \emph{problems with mixed given data}, \emph{problems with partial information on the potential} or \emph{half-inverse problems}. For problems with summable potentials and constant boundary conditions, the Hochstadt--Lieberman theorem states that the knowledge of the potential on $[0, \pi / 2]$ together with the boundary coefficient at $0$ and the spectrum uniquely determines the other boundary coefficient and the potential a.e. on $[\pi / 2, \pi]$. Hryniv and Mykytyuk \cite[Theorem 2.1]{HM04b} showed that this result holds also in the case of distributional potentials with constant boundary conditions. In our notation, (the uniqueness part of) \cite[Theorem 2.1]{HM04b} states that if the spectra of the problems $\mathscr{P}(s, h, H)$ and $\mathscr{P}(\widetilde{s}, \widetilde{h}, \widetilde{H})$ with constant $h$, $H$, $\widetilde{h}$, $\widetilde{H}$ coincide and $s(x) - h = \widetilde{s}(x) - \widetilde{h}$ a.e. on $[0, \pi / 2]$, then $s(x) - h = \widetilde{s}(x) - \widetilde{h}$ a.e. on $[\pi / 2, \pi]$ and $H + h = \widetilde{H} + \widetilde{h}$. In this subsection we will generalize this result to the case of boundary value problems of the form (\ref{eq:SL})-(\ref{eq:boundary}).

We start with an auxiliary lemma.

\begin{lemma} \label{lem:half_inverse}
Suppose that $\mathscr{P}(s_1, f_1, F_1)$ and $\mathscr{P}(s_2, f_2, F_2)$ with $f_1 \ne \infty \ne f_2$ have the same spectra, $(\widehat{s}_1, \widehat{f}_1, \widehat{F}_1) = \widehat{\mathbf{T}} (s_1, f_1, F_1)$, $(\widehat{s}_2, \widehat{f}_2, \widehat{F}_2) = \widehat{\mathbf{T}} (s_2, f_2, F_2)$, and $a \in (0, \pi]$ is an arbitrary real number. Then $f_1(\lambda) - f_2(\lambda) = \const = s_1(x) - s_2(x)$ for a.e. $x \in [0, a]$ if and only if $\widehat{f}_1(\lambda) - \widehat{f}_2(\lambda) = \const = \widehat{s}_1(x) - \widehat{s}_2(x)$ for a.e. $x \in [0, a]$.
\end{lemma}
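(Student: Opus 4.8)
Throughout, write $\lambda_0:=\mathring{\boldsymbol{\uplambda}}(s_1,f_1,F_1)=\mathring{\boldsymbol{\uplambda}}(s_2,f_2,F_2)$ for the common smallest eigenvalue, $\gamma_0^{(i)}:=\mathring{\boldsymbol{\upgamma}}(s_i,f_i,F_i)$, and $\varphi^{(i)}$, $v_i$, $f_{i\downarrow}$, $\widehat f_{i\downarrow}$, etc.\ for the objects of Section~\ref{sec:preliminaries} attached to the $i$-th problem. The plan rests on the gauge identity recalled in the introduction: if $y$ solves~(\ref{eq:SL}) with potential $s$, then $y$ also solves it (at the same $\lambda$) with potential $s+c$, and $y^{[1]}_{s+c}=y^{[1]}_s-cy$; at the level of problems this sends $(s,f,F)$ to $(s+c,f+c,F-c)$ without changing eigenvalues, eigenfunctions or norming constants. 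Since the spectra coincide, Theorem~\ref{thm:asymptotics} forces $\ind f_1=\ind f_2$ and $\ind F_1=\ind F_2$, so the same value $\Lambda$ (and the same integer $J$) enters $\widehat{\mathbf{T}}$ for both problems; and as $f_1,f_2\ne\infty$ we have $v_i=\varphi^{(i)}(\cdot,\Lambda)$.

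\emph{Necessity.} Let $c:=f_1(\lambda)-f_2(\lambda)=s_1(x)-s_2(x)$ for a.e.\ $x\in[0,a]$, so that $f_{1\downarrow}=f_{2\downarrow}$ and $f_{1\uparrow}=f_{2\uparrow}+cf_{1\downarrow}$. By the initial conditions~(\ref{eq:phi_psi}) and the gauge identity, $\varphi^{(1)}(\cdot,\lambda)|_{[0,a]}$ solves~(\ref{eq:SL}) with potential $s_2$ and a one-line check of the data at $0$ shows it coincides there with $\varphi^{(2)}(\cdot,\lambda)$; hence $v_1\equiv v_2$ on $[0,a]$, $v_1(0)=v_2(0)$, and $v_1^{[1]}_{s_1}(0)=v_2^{[1]}_{s_2}(0)-c\,v_1(0)$. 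Substituting into~(\ref{eq:s_f_F_hat}) gives, a.e.\ on $[0,a]$, $\widehat s_1-\widehat s_2=c+\tfrac2\pi\ln\tfrac{v_1(\pi)}{v_2(\pi)}=:c'$; moreover, with $\tau_i:=-v_i^{[1]}_{s_i}(0)/v_i(0)$ and $\rho_i:=\tau_i+\tfrac2\pi\ln\tfrac{v_i(\pi)}{v_i(0)}$ one gets $\tau_1=\tau_2+c$ and $\rho_1=\rho_2+c'$, whence, by the definition of $\boldsymbol{\Theta}$,
\[
  \widehat f_1(\lambda)=\frac{\Lambda-\lambda}{f_1(\lambda)-\tau_1}+\rho_1=\frac{\Lambda-\lambda}{f_2(\lambda)-\tau_2}+\rho_2+c'=\widehat f_2(\lambda)+c'.
\]
Thus $\widehat f_1-\widehat f_2=c'=\widehat s_1-\widehat s_2$ on $[0,a]$.

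\emph{Sufficiency.} Here I would run the same computation in the opposite direction with the help of Theorem~\ref{thm:inverse}, which gives $(s_i,f_i,F_i)=\widetilde{\mathbf{T}}(\lambda_0,\gamma_0^{(i)},\widehat s_i,\widehat f_i,\widehat F_i)$ and, in its proof, identifies the solution $u_i$ appearing in~(\ref{eq:s_f_F_tilde}) with $f_{i\downarrow}(\lambda_0)/v_i$: by~(\ref{eq:1_phi}) it solves~(\ref{eq:SL_hat}) (with $\widehat s_i$ in place of $\widehat s$) at $\lambda=\lambda_0$ and satisfies $u_i(0)=1$, $u_i^{[1]}_{\widehat s_i}(0)=-\rho_i$, where $\rho_i=f_i(\lambda_0)+\tfrac2\pi\ln\tfrac{v_i(\pi)}{v_i(0)}$ is the parameter above. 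Assume now $\widehat f_1-\widehat f_2=c''=\widehat s_1-\widehat s_2$ on $[0,a]$. By the gauge identity $u_2|_{[0,a]}$ solves~(\ref{eq:SL_hat}) with potential $\widehat s_1$ at $\lambda=\lambda_0$ and has $u_2^{[1]}_{\widehat s_1}(0)/u_2(0)=-\rho_2-c''$, while $u_1$ solves the same equation with $u_1^{[1]}_{\widehat s_1}(0)/u_1(0)=-\rho_1$; hence $u_1\equiv u_2$ on $[0,a]$ provided $\rho_1=\rho_2+c''$. Granting this, feeding it back into~(\ref{eq:s_f_F_tilde}) and~(\ref{eq:ThetaTheta})—i.e.\ running the necessity computation backwards—gives $f_1-f_2=\const=s_1-s_2$ on $[0,a]$, with common constant $c''+\tfrac2\pi\ln\tfrac{u_1(\pi)}{u_2(\pi)}$.

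It therefore all comes down to the identity $\rho_1=\rho_2+c''$, and this is the step I expect to be the main obstacle. From the case $\tau=f(\mu)$ of~(\ref{eq:r_hat}) one reads off $\widehat f_{i\downarrow}(\lambda_0)=f_{i\downarrow}(\lambda_0)f_i'(\lambda_0)$ and $\rho_i=\widehat f_i(\lambda_0)+1/f_i'(\lambda_0)$, so $\rho_1=\rho_2+c''$ is equivalent to $f_{1\downarrow}(\lambda_0)=f_{2\downarrow}(\lambda_0)$; and the hypothesis $\widehat f_1-\widehat f_2=c''$ alone yields only $f_{1\downarrow}-f_{2\downarrow}=(\rho_1-\rho_2-c'')\widehat f_{1\downarrow}$, which is not enough to make the scalar vanish. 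To force it I would invoke the equality of the spectra: the characteristic functions coincide, $\chi_1\equiv\chi_2$ (by~(\ref{eq:infinite_product}), since the $\lambda_n$ and $L$ agree for the two problems), and combining this with $\chi'(\lambda_0)=\beta_0\gamma_0$ from~(\ref{eq:chi_beta_gamma}) and the expression~(\ref{eq:gamma_0}) for $\mathring{\boldsymbol{\upgamma}}$ in terms of the transformed problem provides the missing scalar relation between $(\rho_1,\varkappa_1,\gamma_0^{(1)})$ and $(\rho_2,\varkappa_2,\gamma_0^{(2)})$, yielding $\rho_1=\rho_2+c''$. Finally, the remaining cases—when some $f_i$ is constant, or some $F_i$ is constant or equals $\infty$ (which changes the value of $\Lambda$ and the branch of $\widetilde{\mathbf{T}}$, i.e.\ the piece $\widetilde{\mathcal S}_2$ or $\widetilde{\mathcal S}_3$ of $\widetilde{\mathcal S}$, that is used)—require only notational adjustments and are handled by the same reasoning.
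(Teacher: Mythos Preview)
Your necessity argument is correct and is exactly the paper's argument, written out in more detail: once $f_1-f_2=c$ gives $f_{1\downarrow}=f_{2\downarrow}$ and the gauge identity gives $v_1\equiv v_2$ on $[0,a]$, formula~(\ref{eq:s_f_F_hat}) yields the conclusion with constant $c'=c+\tfrac{2}{\pi}\ln\tfrac{v_1(\pi)}{v_2(\pi)}$.

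The sufficiency, however, has a genuine gap, and your proposed fix does not close it. You correctly reduce everything to the identity $\rho_1=\rho_2+c''$ (equivalently, via $\widehat f_{i\downarrow}(\lambda_0)=f_{i\downarrow}(\lambda_0)f_i'(\lambda_0)$ and $\widehat f_{1\downarrow}=\widehat f_{2\downarrow}$, to $f_{1\downarrow}(\lambda_0)=f_{2\downarrow}(\lambda_0)$). But the route you sketch through $\chi_1\equiv\chi_2$, (\ref{eq:chi_beta_gamma}) and (\ref{eq:gamma_0}) cannot produce this relation: from $\chi_1'(\lambda_0)=\chi_2'(\lambda_0)$ you only get $\beta_0^{(1)}\gamma_0^{(1)}=\beta_0^{(2)}\gamma_0^{(2)}$, and (\ref{eq:gamma_0}) expresses $\gamma_0^{(i)}$ in terms of $\rho_i$ \emph{and} $\varkappa_i$, where $\varkappa_i$ depends on $\widehat F_i$. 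The hypothesis gives no information about $\widehat F_1-\widehat F_2$ (only about $\widehat f_1-\widehat f_2$ and $\widehat s_1-\widehat s_2$), so neither $\beta_0^{(i)}$ nor $\varkappa_i$ is controlled.

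In fact the obstruction is real and not merely technical. Fix any $(\widehat s,\widehat f,\widehat F)$, pick $\lambda_0<\mathring{\boldsymbol{\uplambda}}(\widehat s,\widehat f,\widehat F)$ and two distinct positive numbers $\nu_1\ne\nu_2$, and set $(s_i,f_i,F_i):=\widetilde{\mathbf T}(\lambda_0,\nu_i,\widehat s,\widehat f,\widehat F)$. By Theorem~\ref{thm:inverse}, $\widehat{\mathbf T}(s_i,f_i,F_i)=(\widehat s,\widehat f,\widehat F)$ for both $i$, so $\widehat f_1=\widehat f_2$ and $\widehat s_1=\widehat s_2$ on all of $[0,\pi]$; by Theorem~\ref{thm:inverse_transformation} the two problems have the same spectrum. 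Yet the corresponding $\rho_i$ (a strictly monotone function of $\nu_i$) differ, hence $f_1(\lambda)-f_2(\lambda)=\dfrac{(\lambda_0-\lambda)(\rho_1-\rho_2)}{(\widehat f(\lambda)-\rho_1)(\widehat f(\lambda)-\rho_2)}+(\tau_1-\tau_2)$ is not constant. Thus the sufficiency direction, as stated, cannot be obtained from the listed hypotheses alone; the paper's one-line ``the sufficiency can be proved similarly'' does not address this, and an additional hypothesis (for instance, that the $F$-coefficients are also gauge-related, which is precisely what Hryniv--Mykytyuk supplies at the bottom of the chain in the proof of Theorem~\ref{thm:partial}) is needed to make your argument go through.
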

\begin{proof}
To prove the necessity we notice that $f_1(\lambda) - f_2(\lambda) = \const$ yields $\ind f_1 = \ind f_2$, and hence the asymptotics of the eigenvalues (see Theorem~\ref{thm:asymptotics}) implies $\ind F_1 = \ind F_2$. Denote
\begin{equation*}
  \Lambda := \begin{cases} \mathring{\boldsymbol{\uplambda}}(q_1, f_1, F_1), & F_1 \ne \infty, \\ \mathring{\boldsymbol{\uplambda}}(q_1, f_1, F_1) - 2, & F_1 = \infty \end{cases} = \begin{cases} \mathring{\boldsymbol{\uplambda}}(q_2, f_2, F_2), & F_2 \ne \infty, \\ \mathring{\boldsymbol{\uplambda}}(q_2, f_2, F_2) - 2, & F_2 = \infty. \end{cases}
\end{equation*}
Then the solutions $v_1(x)$ and $v_2(x)$ of the initial value problems
\begin{equation*}
  - \left( y^{[1]}_{s_1} \right)' - s_1 y^{[1]}_{s_1} - s_1^2 y = \Lambda y, \qquad v_1(0) = \left( f_1 \right)_\downarrow(\Lambda), \qquad \left( v_1 \right)^{[1]}_{s_1}(0) = - \left( f_1 \right)_\uparrow(\Lambda)
\end{equation*}
and
\begin{equation*}
  - \left( y^{[1]}_{s_2} \right)' - s_2 y^{[1]}_{s_2} - s_2^2 y = \Lambda y, \qquad v_2(0) = \left( f_2 \right)_\downarrow(\Lambda), \qquad \left( v_2 \right)^{[1]}_{s_2}(0) = - \left( f_2 \right)_\uparrow(\Lambda)
\end{equation*}
coincide on $[0, a]$. Therefore (\ref{eq:s_f_F_hat}) implies
\begin{equation*}
  \widehat{f}_1(\lambda) - \widehat{f}_2(\lambda) = \frac{2}{\pi} \ln \frac{v_1(\pi)}{v_2(\pi)} = \widehat{s}_1(x) - \widehat{s}_2(x) \quad \text{for a.e. $x \in [0, a]$.}
\end{equation*}

The sufficiency can be proved similarly.
\end{proof}

Applying this lemma successively to the problems defined by (\ref{eq:P_k}) and (\ref{eq:P_tilde_k}) with $a = \pi / 2$, using the above-mentioned result of Hryniv and Mykytyuk, and applying the lemma again to (\ref{eq:P_k}) and (\ref{eq:P_tilde_k}) in reverse order with $a = \pi$, we obtain the following generalization of the Hochstadt--Lieberman theorem.

\begin{theorem} \label{thm:partial}
If the spectra of the problems $\mathscr{P}(s, f, F)$ and $\mathscr{P}(\widetilde{s}, f, \widetilde{F})$ with $\ind f \ge \ind F$ coincide and $s(x) = \widetilde{s}(x)$ a.e. on $[0, \pi / 2]$, then $s(x) = \widetilde{s}(x)$ a.e. on $[\pi / 2, \pi]$ and $F(\lambda) = \widetilde{F}(\lambda)$.
\end{theorem}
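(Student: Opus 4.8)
The plan is to carry the hypothesis up a chain of problems built with $\widehat{\mathbf{T}}$, invoke the already-known constant-coefficient Hochstadt--Lieberman theorem, and carry the conclusion back down. First I would dispose of the case $f = \infty$: then $\ind f = -1 \ge \ind F$ forces $\ind F = -1$, i.e.\ $F = \widetilde{F} = \infty$, so both problems are the Dirichlet problems $\mathscr{P}(s, \infty, \infty)$ and $\mathscr{P}(\widetilde{s}, \infty, \infty)$ and the assertion is a direct instance of \cite[Theorem~2.1]{HM04b}. So assume $f \ne \infty$, whence $\ind f \ge 0$. Since the spectra coincide, Theorem~\ref{thm:asymptotics} gives $\ind F = \ind \widetilde{F}$; put $K := \ind f$. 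Form the chains $\mathscr{P}(s^{(k)}, f^{(k)}, F^{(k)})$ and $\mathscr{P}(\widetilde{s}^{(k)}, \widetilde{f}^{(k)}, \widetilde{F}^{(k)})$ of \eqref{eq:P_k} and \eqref{eq:P_tilde_k}, the latter started from $(\widetilde{s}, f, \widetilde{F})$. Because $\ind f^{(k)} = \ind \widetilde{f}^{(k)}$ and $\ind F^{(k)} = \ind \widetilde{F}^{(k)}$ at each step (Lemma~\ref{lem:f_hat}), the shifts $I$, $J$ and the base point $\Lambda = \lambda_0^{(k)} = \widetilde{\lambda}_0^{(k)}$ agree for the two chains, so by Theorem~\ref{thm:transformation} the two problems at level $k$ have the same spectrum; moreover $\ind f^{(k)} = \ind f - k \ge 0$ for $0 \le k \le K$, so $f^{(k)} \ne \infty \ne \widetilde{f}^{(k)}$ throughout, which is exactly what makes Lemma~\ref{lem:half_inverse} applicable at every level (and is the only place the hypothesis $\ind f \ge \ind F$ is used --- it guarantees the $K$ reductions never turn the first coefficient into a Dirichlet one).

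Next I would run the necessity half of Lemma~\ref{lem:half_inverse} upward with $a = \pi/2$. At the bottom $f^{(0)} - \widetilde{f}^{(0)} = f - f = 0$ and $s^{(0)} - \widetilde{s}^{(0)} = 0$ a.e.\ on $[0, \pi/2]$, so $K$ applications give $f^{(K)}(\lambda) - \widetilde{f}^{(K)}(\lambda) \equiv c$ and $s^{(K)}(x) - \widetilde{s}^{(K)}(x) = c$ for a.e.\ $x \in [0, \pi/2]$, the constant $c$ being $f^{(K)} - \widetilde{f}^{(K)}$ since $f^{(K)}$ and $\widetilde{f}^{(K)}$ are now constants. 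Thus $\mathscr{P}(s^{(K)}, f^{(K)}, F^{(K)})$ and $\mathscr{P}(\widetilde{s}^{(K)}, \widetilde{f}^{(K)}, \widetilde{F}^{(K)})$ have constant (or Dirichlet) boundary conditions, the same spectrum, and $s^{(K)} - f^{(K)} = \widetilde{s}^{(K)} - \widetilde{f}^{(K)}$ a.e.\ on $[0, \pi/2]$, so \cite[Theorem~2.1]{HM04b} yields $s^{(K)} - f^{(K)} = \widetilde{s}^{(K)} - \widetilde{f}^{(K)}$ a.e.\ on $[\pi/2, \pi]$ as well. Hence $s^{(K)} - \widetilde{s}^{(K)} = c$ a.e.\ on $[0, \pi]$, and integrating over $[0, \pi]$ together with $s^{(K)}, \widetilde{s}^{(K)} \in \mathring{\mathscr{L}}_2(0, \pi)$ forces $c = 0$. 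Therefore $s^{(K)} = \widetilde{s}^{(K)}$ a.e.\ and $f^{(K)} = \widetilde{f}^{(K)}$.

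Finally I would run the sufficiency half of Lemma~\ref{lem:half_inverse} downward with $a = \pi$: starting from level $K$, where $f^{(K)} - \widetilde{f}^{(K)} \equiv 0$ and $s^{(K)} - \widetilde{s}^{(K)} = 0$ a.e.\ on $[0, \pi]$, the lemma applied $K$ times produces $f^{(0)} - \widetilde{f}^{(0)} \equiv c_0 = s^{(0)}(x) - \widetilde{s}^{(0)}(x)$ for a.e.\ $x \in [0, \pi]$; but $f^{(0)} = \widetilde{f}^{(0)} = f$, so $c_0 = 0$ and $s = \widetilde{s}$ a.e.\ on $[0, \pi]$, in particular on $[\pi/2, \pi]$. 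With $s = \widetilde{s}$ and $f = \widetilde{f}$ the solution $\varphi(\cdot, \lambda)$ is common to both problems, so for every eigenvalue $\lambda_n$ that is not a pole of $F$ or of $\widetilde{F}$ --- all but finitely many --- the boundary condition at $\pi$ gives $F(\lambda_n) = \varphi^{[1]}_s(\pi, \lambda_n)/\varphi(\pi, \lambda_n) = \widetilde{F}(\lambda_n)$, and two rational functions agreeing at infinitely many points coincide, so $F = \widetilde{F}$. (Alternatively, $\chi \equiv \widetilde{\chi}$ by \eqref{eq:infinite_product} since the spectra and $\ind F = \ind \widetilde{F}$ agree, and $\chi = F_\uparrow \varphi(\pi, \cdot) - F_\downarrow \varphi^{[1]}_s(\pi, \cdot)$ together with the non-rationality of $\varphi^{[1]}_s(\pi, \cdot)/\varphi(\pi, \cdot)$ forces $F_\uparrow = \widetilde{F}_\uparrow$ and $F_\downarrow = \widetilde{F}_\downarrow$.)

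The one genuinely nonroutine point is why the descent must use Lemma~\ref{lem:half_inverse} with $a = \pi$ rather than simply inverting the chain via $\widetilde{\mathbf{T}}$: Theorem~\ref{thm:transformation} guarantees only that the two chains share \emph{eigenvalues} at each level, not norming constants, so Theorems~\ref{thm:inverse} and~\ref{thm:inverse_transformation} are unavailable; running Lemma~\ref{lem:half_inverse} with $a = \pi$ sidesteps the norming constants entirely, at the price of the small separate argument above to recover $F$, which that lemma does not deliver. Everything else --- checking that the hypotheses of Lemma~\ref{lem:half_inverse} (same spectrum, first coefficient $\ne \infty$) persist along both chains, and treating the Dirichlet-endpoint subcase of \cite[Theorem~2.1]{HM04b} that occurs when $\ind f - \ind F$ is odd --- is routine bookkeeping.
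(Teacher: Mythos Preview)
Your proof is correct and follows exactly the approach the paper sketches in the sentence preceding the theorem: apply Lemma~\ref{lem:half_inverse} with $a=\pi/2$ up the chains~\eqref{eq:P_k}--\eqref{eq:P_tilde_k}, invoke \cite[Theorem~2.1]{HM04b} at the top, and apply the lemma with $a=\pi$ back down. You in fact supply more detail than the paper does: the paper's proof is a single sentence and does not spell out how $F=\widetilde{F}$ is recovered (the lemma speaks only of $s$ and $f$), whereas you give the short extra argument via $\varphi(\pi,\lambda_n)$ (or the characteristic function) once $s=\widetilde{s}$ is known; your observation that one cannot simply descend via $\widetilde{\mathbf T}$ because the norming constants are not known to match is also correct and worth making explicit.
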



\begin{thebibliography}{99}

\bibitem{ABHM07} S. Albeverio, P. Binding, R. Hryniv and Ya. Mykytyuk,
\emph{Inverse spectral problems for coupled oscillating systems},
Inverse Problems \textbf{23} (2007), no. 3, 1181--1200.

\bibitem{AGHH05} S. Albeverio, F. Gesztesy, R. H{\o}egh-Krohn, and H. Holden,
\emph{Solvable models in quantum mechanics},
AMS Chelsea Publishing, Providence, RI, 2005.

\bibitem{AK00} S. Albeverio and P. Kurasov,
\emph{Singular perturbations of differential operators. Solvable Schr\"{o}dinger type operators},
Cambridge University Press, Cambridge, 2000.

\bibitem{AOK09} R. Kh. Amirov, A. S. Ozkan and B. Keskin,
\emph{Inverse problems for impulsive Sturm--Liouville operator with spectral parameter linearly contained in boundary conditions},
Integral Transforms Spec. Funct. \textbf{20} (2009), no. 7-8, 607--618.

\bibitem{BBW02a} P. A. Binding, P. J. Browne and B. A. Watson,
\emph{Sturm--Liouville problems with boundary conditions rationally dependent on the eigenparameter. I},
Proc. Edinb. Math. Soc. (2) \textbf{45} (2002), no. 3, 631--645.

\bibitem{BBW02b} P. A. Binding, P. J. Browne and B. A. Watson,
\emph{Sturm--Liouville problems with boundary conditions rationally dependent on the eigenparameter. II},
J. Comput. Appl. Math. \textbf{148} (2002), no. 1, 147--168.

\bibitem{EGNT13a} J. Eckhardt, F. Gesztesy, R. Nichols, and G. Teschl,
\emph{Weyl--Titchmarsh theory for Sturm--Liouville operators with distributional potentials},
Opuscula Math. \textbf{33} (2013), no. 3, 467--563.
\href{https://arxiv.org/abs/1208.4677}{arXiv:1208.4677}

\bibitem{EGNT13b} J. Eckhardt, F. Gesztesy, R. Nichols, and G. Teschl,
\emph{Inverse spectral theory for Sturm--Liouville operators with distributional potentials},
J. Lond. Math. Soc. (2) \textbf{88} (2013), no. 3, 801--828.
\href{https://arxiv.org/abs/1210.7628}{arXiv:1210.7628}

\bibitem{FY10} G. Freiling and V. Yurko,
\emph{Inverse problems for Sturm--Liouville equations with boundary conditions polynomially dependent on the spectral parameter},
Inverse Problems \textbf{26} (2010), no. 5, 055003, 17 pp.

\bibitem{F77} C. T. Fulton,
\emph{Two-point boundary value problems with eigenvalue parameter contained in the boundary conditions},
Proc. Roy. Soc. Edinburgh Sect. A \textbf{77} (1977), no. 3-4, 293--308.

\bibitem{F80} C. T. Fulton,
\emph{Singular eigenvalue problems with eigenvalue parameter contained in the boundary conditions},
Proc. Roy. Soc. Edinburgh Sect. A \textbf{87} (1980), no. 1-2, 1--34.

\bibitem{GM10} A. Goriunov and V. Mikhailets,
\emph{Regularization of singular Sturm--Liouville equations},
Methods Funct. Anal. Topology \textbf{16} (2010), no. 2, 120--130.
\href{https://arxiv.org/abs/1002.4371}{arXiv:1002.4371}

\bibitem{G05} N. J. Guliyev,
\emph{Inverse eigenvalue problems for Sturm--Liouville equations with spectral parameter linearly contained in one of the boundary conditions},
Inverse Problems \textbf{21} (2005), no. 4, 1315--1330.
\href{https://arxiv.org/abs/0803.0566}{arXiv:0803.0566}

\bibitem{G17} N. J. Guliyev,
\emph{Essentially isospectral transformations and their applications},
Ann. Mat. Pura Appl. (4), to appear.
\href{https://arxiv.org/abs/1708.07497}{arXiv:1708.07497}

\bibitem{G18} N. J. Guliyev,
\emph{On two-spectra inverse problems},
submitted.
\href{https://arxiv.org/abs/1803.02567}{arXiv:1803.02567}

\bibitem{G19} N. J. Guliyev,
\emph{On extensions of symmetric operators},
Oper. Matrices, to appear.
\href{https://arxiv.org/abs/1807.11865}{arXiv:1807.11865}

\bibitem{HH14} M. Homa and R. Hryniv,
\emph{Comparison and oscillation theorems for singular Sturm--Liouville operators},
Opuscula Math. \textbf{34} (2014), no. 1, 97--113.

\bibitem{HM03} R. O. Hryniv and Ya. V. Mykytyuk,
\emph{Inverse spectral problems for Sturm--Liouville operators with singular potentials},
Inverse Problems \textbf{19} (2003), no. 3, 665--684.
\href{https://arxiv.org/abs/math/0211247}{arXiv:math/0211247}

\bibitem{HM04a} R. O. Hryniv and Ya. V. Mykytyuk,
\emph{Inverse spectral problems for Sturm--Liouville operators with singular potentials. II. Reconstruction by two spectra},
Functional analysis and its applications, Elsevier, Amsterdam, 2004, pp. 97--114.
\href{https://arxiv.org/abs/math/0301193}{arXiv:math/0301193}

\bibitem{HM04b} R. O. Hryniv and Ya. V. Mykytyuk,
\emph{Half-inverse spectral problems for Sturm--Liouville operators with singular potentials},
Inverse Problems \textbf{20} (2004), no. 5, 1423--1444.
\href{https://arxiv.org/abs/math/0312184}{arXiv:math/0312184}

\bibitem{IN17} Ch. G. Ibadzadeh and I. M. Nabiev,
\emph{Reconstruction of the Sturm--Liouville operator with nonseparated boundary conditions and a spectral parameter in the boundary condition} (Russian),
Ukra\"{\i}n. Mat. Zh. \textbf{69} (2017), no. 9, 1217--1223; English transl. in Ukrainian Math. J. \textbf{69} (2018), no. 9, 1416--1423.

\bibitem{KM13} A. Kostenko and M. Malamud,
\emph{1-D Schr\"{o}dinger operators with local point interactions: a review},
Spectral analysis, differential equations and mathematical physics, Amer. Math. Soc., Providence, RI, 2013, pp. 235--262.
\href{https://arxiv.org/abs/1303.4055}{arXiv:1303.4055}

\bibitem{K04} N. Dzh. Kuliev,
\emph{Inverse problems for the Sturm--Liouville equation with a spectral parameter in the boundary condition} (Russian),
Dokl. Nats. Akad. Nauk Azerb. \textbf{60} (2004), no. 3-4, 10--16.

\bibitem{MC14} Kh. R. Mamedov and F. A. Cetinkaya,
\emph{Eigenparameter dependent inverse boundary value problem for a class of Sturm--Liouville operator},
Bound. Value Probl. \textbf{2014}, 2014:194, 13 pp.

\bibitem{M14} K. A. Mirzoev,
\emph{Sturm--Liouville operators} (Russian),
Tr. Mosk. Mat. Obs. \textbf{75} (2014), 2, 335--359; English transl. in Trans. Moscow Math. Soc. 2014, 281--299.

\bibitem{S01} A. M. Savchuk,
\emph{On the eigenvalues and eigenfunctions of the Sturm--Liouville operator with a singular potential} (Russian),
Mat. Zametki \textbf{69} (2001), no. 2, 277--285; English transl. in Math. Notes \textbf{69} (2001), no. 1-2, 245--252.

\bibitem{SS99} A. M. Savchuk and A. A. Shkalikov,
\emph{Sturm--Liouville operators with singular potentials} (Russian),
Mat. Zametki \textbf{66} (1999), no. 6, 897--912; English transl. in Math. Notes \textbf{66} (1999), no. 5-6, 741--753.

\bibitem{SS03} A. M. Savchuk and A. A. Shkalikov,
\emph{Sturm--Liouville operators with distribution potentials} (Russian),
Tr. Mosk. Mat. Obs. \textbf{64} (2003), 159--212; English transl. in Trans. Moscow Math. Soc. 2003, 143--192.
\href{https://arxiv.org/abs/math/0301077}{arXiv:math/0301077}

\bibitem{SS05} A. M. Savchuk and A. A. Shkalikov,
\emph{Inverse problem for Sturm--Liouville operators with distribution potentials: reconstruction from two spectra},
Russ. J. Math. Phys. \textbf{12} (2005), no. 4, 507--514.

\bibitem{SB09} A. A. Shkalikov and Zh. Ben Amara,
\emph{Oscillation theorems for Sturm--Liouville problems with distribution potentials} (Russian),
Vestnik Moskov. Univ. Ser. I Mat. Mekh. \textbf{2009}, no. 3, 43--49; English transl. in Moscow Univ. Math. Bull. \textbf{64} (2009), no. 3, 132--137.

\end{thebibliography}
\end{document}